\documentclass[11pt]{article}
\usepackage[affil-it]{authblk}
\usepackage{amsthm}
\usepackage{tikz}
\usetikzlibrary{automata,positioning,matrix}
\usepackage{tikz-qtree}
\usetikzlibrary{arrows}

\usetikzlibrary{cd}

\usepackage{epic,eepic,amsmath,amssymb,stmaryrd}

\usepackage{caption}
\captionsetup[figure]{labelsep=space}

\usepackage[T1]{fontenc}
\usepackage[utf8]{inputenc}


\theoremstyle{plain}
\newtheorem{theorem}{Theorem}[section]
\newtheorem{lemma}[theorem]{Lemma}

\newtheorem{proposition}[theorem]{Proposition}
\newtheorem{corollary}[theorem]{Corollary}

\theoremstyle{definition}
\newtheorem{definition}[theorem]{Definition}
\newtheorem{remark}[theorem]{Remark}
\newtheorem{example}[theorem]{Example}


\newcommand{\da}{\mathord{\downarrow}}
\newcommand{\rom}[1]{\rm{\uppercase\expandafter{\romannumeral #1}}}
\newcommand{\id}{\mathrm{id}}

\newcommand{\oneset}[1]{\{#1\}}

\newcommand{\real}{\overline{\mathbb R}_+}

\newcommand{\V}{{\mathcal V}_w}

\newcommand{\cc}{\mathrm c}
\newcommand{\s}{\mathrm{s}}

\newcommand{\us}{ {\delta^{\s}}}
\newcommand{\ms}{ {\beta^{\s}}}

\newcommand{\uk}{ {\delta^{k}}}
\newcommand{\mk}{{\beta^{k}}}

\newcommand{\mm}{m}

\newcommand{\vk}{{\mathcal V}_\mathcal{K}}
\newcommand{\vs}{{\mathcal V}_\mathrm{s}}
\newcommand{\D}{\mathcal{D}}
\newcommand{\W}{\mathcal{W}}
\newcommand{\K}{\text{{\fontfamily{cmtt}\selectfont K}}}
\newcommand{\kk}{\mathcal{K}}
\newcommand{\kc}{\mathcal{K}_\mathrm{c}}
\newcommand{\R}{\mathbb{R}_+}
\newcommand{\RE}{\overline{\mathbb{R}}_+}
\newcommand{\clk}{\mathrm{cl}_{\kk}}


\title{ Completing Simple Valuations in $\K$-categories}
\author{Xiaodong Jia$^{a}$$^{b}$\thanks{Xiaodong Jia: jia.xiaodong@yahoo.com}~~and~~Michael Mislove$^{b}$\thanks{Michael Mislove: mislove@tulane.edu }}
\affil{$^{a}$School of Mathematics, Hunan University\\ Changsha, Hunan 401182, China\\
$^{b}$Department of Computer Science, Tulane University\\ New Orleans, LA 70118, USA}


\begin{document}

\maketitle 

\begin{abstract}
We prove that Keimel and Lawson's $\kk$-completion $\kk_\cc$ of the simple valuation monad $\vs$  defines a monad $\kk_\cc\circ \vs$ on each $\K$-category $\kk$.   We also characterise the Eilenberg-Moore algebras of $\kk_\cc\circ\vs$  as the weakly locally convex $\kk$-cones, and its algebra morphisms as the continuous linear maps. 

In addition, we explicitly describe the distributive law of $\vs$ over $\kk_\cc$, which allows us to show that the $\kk$-completion of any locally convex (resp., weakly locally convex, locally linear) topological cone is a locally convex (resp., weakly locally convex, locally linear) $\kk$-cone. 

We also give  an example -- the Cantor tree with a top -- that shows the dcpo-completion  of the simple valuations is not the $\D$-completion of the simple valuations in general, where $\D$ is the category of monotone convergence spaces and continuous maps. 

\emph{Keywords}:  Valuation powerdomain, simple valuations, $\kk$-completion, monads, Eilenberg-Moore algebras, topological cones
\end{abstract}

\section{Introduction}

In domain theory, the probabilistic powerdomain monad on the category of continuous domains and Scott-continuous maps, proposed by Jones and Plotkin \cite{jones89, jones90}, is the most widely used  mathematical construct for denoting probabilistic computation in denotational semantics. It was realised by Kirch~\cite{kirch93}, Tix~\cite{tix95} and many others that the weak topology on the probabilistic powerdomain is more natural than the Scott topology, and for this reason, the extended probabilistic powerdomain was proposed over the category of $T_{0}$ spaces and continuous maps~\cite{cohen06}. Indeed, the weak topology behaves more transparently than the Scott topology on general dcpos as well as topological spaces. For example, Alvarez-Manilla, Jung and Keimel~\cite{alvarez04}  proved that the probabilistic powerdomain of a stably compact space is again stably compact in the weak topology. Schr\"oder and Simpson~\cite{schroder05} showed that continuous linear functionals on the extended probabilistic powerdomain are uniquely determined by continuous functions from the underlying space to the reals. Detailed proofs of these results can be found in~\cite{goubault15, keimel12}.

The extended probabilistic powerdomain over a topological space $X$ consists of the so-called continuous valuations on $X$ (see Definition~\ref{defofvaluations}), and among all continuous valuations, the simple valuations are the most natural ones. They are of the form $\Sigma_{i= 1}^{n}r_{i}\delta_{x_{i}}$, where $r_{i}, i=1, .. ,n$ are real numbers and $\delta_{x}$ is the Dirac measure at $x$ for $x\in X$. It was proved by Jones~\cite{jones90} that for a continuous domain $P$ with the Scott topology, every continuous valuation on $P$ can be written as a directed supremum of simple valuations. This is not true for general dcpos, but simple valuations are usually regarded as ``bricks'' that reflect  properties of  the entire ``building'' of continuous valuations. For this reason, it is useful to single out classes of spaces (that include more general dcpos than the continuous domains) for which there is a completion of the simple valuations $\vs$ that also forms a monad. The purpose of this paper is to show that each $\K$-category admits such a monad. 

Heckmann~\cite{heckmann96} considered the simple valuations in the topological setting, where he proved the sobrification of the space of simple valuations on a topological space~$X$ consists of point-continu\-ous valuations on~$X$. Again, he used the weak topology rather than the Scott topology, and he proved that such a sobrification gives rise to the free weakly locally convex sober topological cone over~$X$. Goubault-Larrecq and the first author~\cite{goubault19} proved that the sobrification of the space of simple valuations also defines a monad over topological spaces. Keimel and Lawson~\cite{keimel09a} realised that sobrification of a space is just a special case of $\kk$-completion of that space, which led the authors to consider whether there is a general distributive law between $\kk$-completion and the simple valuation powerdomain. As is well known, such a distributive law can be used to show that their composition  is again a monad. We show that this distributive law does exist, so the $\kk$-completion of the space of simple valuations gives a monad on the category of $\kk$-spaces for each $\K$-category~$\kk$.  Generalising Heckmann's result, we prove that the $\kk$-completion of the simple valuations over a space $X$ gives rise to the free weakly locally convex topological $\kk$-cone over $X$. This result would enable us to prove that the $\kk$-completion of any locally convex  (resp., weakly locally convex, locally linear)  topological cone is a locally convex  (resp., weakly locally convex, locally linear) $\kk$-cone. 

The paper is organised as follows: we recall the extended powerdomain construction, topological cones and $\kk$-completions in Section~2, where we also prove many properties of general $\kk$-completions that are useful in our later discussion. In Section~3, we give two approaches to proving that the $\kk$-completion of the simple valuation monad gives a monad over topological spaces. In Section~4, we characterise the Eilenberg-Moore algebras of our new monad as the weakly locally convex $\kk$-cones, and employ it in Section~5 to prove that the $\kk$-completion of a locally convex (resp., weakly locally convex, locally linear) topological cone is a locally convex (resp., weakly locally convex, locally linear) $\kk$-cone. In the final section, we  show that ${\mathcal C}^\top$, the Cantor tree with a top element, has the property that the dcpo-completion of its poset of simple valuations is not the monotone convergence completion of the space~$\vs\, {\mathcal C}^\top$ endowed with the weak topology.

\section{Preliminaries}

We use standard notions and terminology from Domain Theory~\cite{gierz03,abramsky94} and from non-Hausdorff topology~\cite{goubault13a}. 

In this paper, we restrict attention to the category ${\bf TOP_{0}}$ of $T_0$ spaces and continuous maps, so all topological spaces are assumed to be $T_{0}$ spaces unless stated otherwise.  For a poset $P$, we use $\Sigma P$ to denote the space $P$ equipped with the Scott topology; and for a topological space $X$, we use $\Omega X$ to denote the poset $X$ equipped with the specialisation order on $X$, that is, $x\leq y$ for $x, y\in X$ if and only if $x$ is in the closure of $\{y\}$. We use $\mathbb R_+$ to denote the set of nonnegative reals, and $\real$ the set of nonnegative  reals extended with $\infty$. Whenever  $\real$ is treated as a topological space, we mean that it is equipped with the Scott topology. Note that this implies a function $f\colon X\to \real$ is continuous iff it is lower semicontinuous. For a topological space $X$, we use $X^{\s}$ to denote the canonical sobrification of $X$, which we equate to the set of irreducible closed subsets of $X$  endowed with the lower Vietoris topology (see for example, \cite[Definition 8.2.17]{goubault13a}). We use $\eta^{\s}_{X}$ to denote the topological embedding of $X$ into $X^{\s}$ which sends each $x\in X$ to the closure of $\{x \}$ which is $\da x$ in the specialisation order. Notice that $\eta^{\s}_{X}$ is an embedding if and only if $X$ is $T_0$.  We use $\mathbb S$ to denote the Sierpi\'nski space consisting of two elements $0$ and $1$ in which the only proper open set is the singleton $\{ 1\}$. Finally, for a subset $A$ of a space $X$ we use $\chi_{A}$ to denote the characteristic function of~$A$, that is, $\chi_{A}(x)=1$ when $x\in A$ and $0$ otherwise.

\subsection{The extended probability powerdomain monad $\V$}

\begin{definition}
\label{defofvaluations}
A \emph{valuation} on a topological space $(X,\mathcal OX)$ is a function~$\mu$ from $\mathcal OX$ to the extended nonnegative reals $\RE$ satisfying for any $U, V\in \mathcal OX$:
\begin{itemize}
\item (strictness)  $\mu (\emptyset) = 0$;   
\item (monotonicity) $\mu(U)\leq \mu(V)$ if $U\subseteq V$; 
\item (modularity) $ \mu(U) + \mu(V) = \mu (U\cup V) +\mu (U\cap V)$.
\end{itemize}

A \emph{continuous valuation} $\mu$ on $(X, \mathcal OX)$ is a valuation that is Scott-continuous from $\mathcal OX$ to $\RE$, that is, for every directed family of open subsets $U_i, i\in I$, it is true that:
\begin{itemize}
\item (Scott-continuity)  $\mu (\bigcup_{i\in I}U_i) = \sup_{i\in i} \mu(U_i)$.
\end{itemize}

Valuations on a topological space~$X$ can be ordered: given $\mu$ and $\nu$, $\mu\leq \nu$ if and only if $\mu(U)\leq \nu(U)$ for all $U\in \mathcal OX$, an order that sometimes is referred to as the \emph {stochastic order}. 

The set of continuous valuations on $X$ endowed with the stochastic order is denoted by~$\mathcal VX$.
\end{definition}

Among all continuous valuations on $X$ there are \emph{Dirac masses} $\delta_x, x\in X$. For each $x\in X$, $\delta_x$ is defined as:
$$\delta_x(U) = \begin{cases}1 & x\in U \\ 0 & \text{otherwise.}  \end{cases}$$
\emph{Simple valuations} are finite linear combinations of Dirac masses. They are of the form $\Sigma_{i=1}^n r_i\delta_{x_i}$, where $x_i,  i=1,...,n$ are in $X$ and $r_i, i=1,..., n$ are nonnegative real numbers, i.e., $r_i\in \mathbb R_+$. For each open set $U$, $(\Sigma_{i=1}^n r_i\delta_{x_i})(U) = \Sigma_{i/ x_i\in U} r_i $.  The set of all simple valuations on $X$  is denoted by~$\vs X$. In contrast of simple valuations, one can consider \emph{finite valuations}, i.e., these continuous valuations that only take finitely many (finite) values. Simple valuations are finite valuations, and it is proved by R. Tix \cite{tix99} that finite valuations on sober spaces are simple valuations. In general, finite valuations are not necessarily simple, but we do have the following:

\begin{proposition}
Let $X$ be a $T_0$ topological space. Then the finite valuations on $X$ are precisely the simple valuations if and only if $X$ is sober.
\end{proposition}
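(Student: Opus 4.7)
The plan is to handle the two directions of the equivalence separately. The forward direction, that on a sober space every finite valuation is simple, is exactly Tix's theorem from~\cite{tix99} which is cited in the paragraph immediately preceding the statement; I would simply invoke it. All of the content therefore lies in the converse: assuming $X$ is not sober, I need to exhibit a continuous valuation on $X$ taking only finitely many values but failing to be a finite nonnegative linear combination of Dirac masses.

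For the converse I would start from the standard characterisation of non-sobriety: there exists a nonempty irreducible closed set $C \subseteq X$ with $C \neq \overline{\{x\}}$ for every $x \in X$. Associated with $C$ I would define $\mu_C\colon \mathcal{O}X \to \RE$ by $\mu_C(U) = 1$ whenever $U \cap C \neq \emptyset$ and $\mu_C(U) = 0$ otherwise. Verifying that $\mu_C$ is a continuous valuation is routine: strictness and monotonicity are immediate, Scott-continuity follows because a directed union of opens meets $C$ iff some member does, and modularity reduces to the fact that if both $U$ and $V$ meet $C$ then so does $U \cap V$---which is exactly irreducibility of $C$ phrased in terms of open sets. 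Since $\mu_C$ takes only the values $0$ and $1$, it is a finite valuation.

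To derive the desired contradiction I would assume $\mu_C = \sum_{i=1}^{n} r_i \delta_{x_i}$ with the $x_i$ pairwise distinct and the $r_i$ strictly positive. Evaluating on $U = X$ yields $\sum_i r_i = 1$, and evaluating on the open set $X\setminus C$ yields $\sum_{x_i \notin C} r_i = 0$, forcing every $x_i$ to lie in $C$. Fixing any index $j$, the hypothesis $C \neq \overline{\{x_j\}}$ together with $\overline{\{x_j\}} \subseteq C$ produces a point $y \in C\setminus\overline{\{x_j\}}$, so the open set $X \setminus \overline{\{x_j\}}$ meets $C$ and $\mu_C(X \setminus \overline{\{x_j\}}) = 1$. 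Subtracting this from $\mu_C(X)=1$ gives $\sum_{x_i \in \overline{\{x_j\}}} r_i = 0$, which contradicts $r_j > 0$ since $x_j \in \overline{\{x_j\}}$.

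I do not expect a serious obstacle: once $\mu_C$ is on the table the rest is bookkeeping with the support of the putative simple representation. The only mild points to watch are the convention that an irreducible closed set is nonempty (so that $\mu_C(X)=1$ and the normalisation argument is available) and the standing freedom, in the definition of a simple valuation, to assume the coefficients are strictly positive and the base points pairwise distinct.
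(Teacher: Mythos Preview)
Your proof is correct and follows essentially the same approach as the paper: both directions are handled the same way, with Tix's result cited for the hard implication and, for the other, the $\{0,1\}$-valued valuation $\mu_C$ attached to an irreducible closed set $C$ serving as the key object. The only cosmetic difference is that the paper argues directly---taking an arbitrary irreducible closed $A$, writing $\mu_A$ as a simple valuation, and deducing that all support points coincide so that $A$ is a point closure---whereas you argue by contrapositive; the underlying computation is the same.
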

\begin{proof}
The ``if'' (hard) direction is given as \cite[Satz 2.2]{tix95}. For the ``only if'' direction, we assume that $A$ is an irreducible closed subset of $X$ and define a finite valuation $\mu$ as 
$$ \mu (U) = \begin{cases} 1 & U\cap A \not= \emptyset\\ 0 & \text{otherwise}. \end{cases}$$
It is easy to see that $\mu$ is a continuous finite valuation, hence by assumption there exists a finite subset $F=\oneset{x_1, ..., x_n}$ of $X$ and positive reals $r_i, i=1, ..., n$ such that $\Sigma_{i=1}^n r_i\delta_{x_i} = \mu $. It follows that  $\Sigma_{i=1}^n r_i = 1$, and hence for any open set $U$, $U$ intersects $A$ if and only if $F\subseteq U$. Since $X$ is $T_0$, this implies that $x_1 = x_2 =...= x_n$, so $A$ is the closure of $\{x_1\}$ and $x_1$ is the unique point whose closure is $A$.
\end{proof}

For any topological space $X$, we topologise $\mathcal VX$ with the \emph{weak topology}, which is generated by a subbasis of sets of the form 
$$[U>r]= \{\mu\mid \mu \text{ is continuous and }  \mu(U) > r\},$$ 
for $U\in \mathcal OX, r\in \mathbb R_+$. We use $\V X$ to denote the space $\mathcal VX$ equipped with the weak topology and call $\V X$ the \emph{extended probabilistic powerdomain} or the \emph{valuation powerdomain} over $X$. The weak topology on $\V X$ is natural in many ways. For example, the specialisation order of the weak topology is just the stochastic order on $\mathcal VX$ and the canonical map  $\delta_X\colon X \to \V X$, sending each $x\in X$ to the Dirac mass $\delta_x$ at $x$, is a topological embedding.

We can extend $\V$ to an endofunctor on ${\bf TOP_0}$ by defining its action $\V f$ on continuous maps $f\colon X\to Y$ by $\V f(\mu)(V) = \mu(f^{-1}(V))$. Moreover, $\V$ is a monad over the category ${\bf TOP_0}$. In order to show this, we need an integration theory of lower semicontinuous functions with respect to continuous valuations and  Manes' equivalent description of monads. 

For any topological space $X$, every lower semicontinuous function $h\colon X\to \real$ has a Choquet-type \emph{integral} with respect to a continuous valuation $\mu$ on $X$ defined by:
$$ \int_{x\in X} h(x)d\mu = \int_0^\infty \mu(h^{-1}(r,\infty])dr,$$
where the right side of the equation is a Riemann integral. If there is no chance of confusion, we write $\int_{x\in X} h(x)d\mu$ as $\int h~d\mu$. This integral enjoys many nice properties. For example, it is linear: for any continuous valuation~$\mu$ on~$X$, $a, b\in \mathbb R_{+}$, and lower semicontinuous maps $f, g \colon X\to \real$, 
$$\int af + bg~d\mu  = a\int f~d\mu+ b\int g~d\mu. $$
For simple valuations $\Sigma_{i=1}^n r_i\delta_{x_i}$, we also have: 
$$\int f d(\Sigma_{i=1}^n r_i\delta_{x_i})  = \Sigma_{i=1}^{n}r_{i}f(x_{i}),$$ and the function 
$$\mu\mapsto \int fd\mu \colon \V X \to \real $$ is continuous. The reader is referred to \cite{kirch93, lawson04, tix95, goubault19} for more related properties of integration.

\begin{definition}[Manes' description for monads]{\rm \cite{manes76}}
A \emph{monad} on a category {\bf C} is a triple $(T , \eta, \_^\dagger)$ consisting of a map $T$ from objects $X$ of~{\bf C} to objects $T X$ of {\bf C}, a collection $\eta = (\eta_X)_X$ of morphisms $\eta_X : X \to TX$, one for each object $X$ of {\bf C} (called the unit of $T$), and a so-called extension operation $\_^\dagger$ that maps every morphism $ f : X\to TY$ to $f ^\dagger : T X\to T Y$ such that:
\begin{enumerate}
\item $\eta_X^\dagger = \id_{TX}$;
\item for every morphism $f : X\to TY$, $ f^\dagger\circ  \eta_X = f$;
\item  for all morphisms $f : X\to TY$ and $g: Y\to TZ$, $g^\dagger\circ  f^\dagger = (g^\dagger\circ f)^\dag$.
\end{enumerate}
\end{definition}

With the aforementioned ingredients, we can show that $\V$ is a monad over the category ${\bf TOP_0}$. The unit of $\V$ is given by $\delta_X\colon x\mapsto \delta_x$ for each $X$, and for continuous functions $f\colon X\to \V Y$ the extension operation $\_^{\dagger}$ is given by 
$$f^\dagger(\mu)(U) = \int_{x\in X} f(x)(U)d\mu.$$
The function $f^\dagger\colon \V X\to \V Y: \mu\mapsto (U\mapsto \int_{x\in X} f(x)(U)d\mu)$, in particular, is continuous. Alternatively, one can describe the multiplication $\beta\colon \V^2\to\V$ of the monad at $X$ as the map $\beta_{X}$ sending each continuous valuation $\varpi\in \V(\V X )$ to $\id_{\V X}^\dagger(\varpi)=(U\mapsto \int_{\mu\in \V X} \mu(U)d\varpi)$. For a detailed discussion the reader is referred to Section 2.3 in \cite{goubault19}.

Restricting ourselves to the simple valuations, we conclude that $\vs$ is a monad over the category ${\bf TOP_0}$, where for a topological space $X$, $\vs X$ is the subspace of $\V X$ consisting of simple valuations; and for continuous maps $f\colon X\to Y$ and simple valuations $\Sigma_{i=1}^n r_i\delta_{x_i}$, $\vs f (\Sigma_{i=1}^n r_i\delta_{x_i}) = \V f  (\Sigma_{i=1}^n r_i\delta_{x_i})= \Sigma_{i=1}^n r_i\delta_{f(x_i)}$. The unit of $\vs$ at $X$ is the map ${\us}_{X}\colon X\to \vs X\colon x\mapsto \delta_x$, the corestriction of the map $\delta_{X}$ on $\vs X$. The multiplication $\ms\colon \vs^2\to \vs$ at $X$ is the map $\ms_{X}(\Sigma_{i=1}^{n}r_{i}\delta_{\mu_{i}}) = \Sigma_{i=1}^{n} {r_{i} \mu_{i}}$, where for $i= 1,\ldots,n$, $\mu_{i}$ is a simple valuation and $r_{i}$ is a nonnegative real number.

While a characterisation of the Eilenberg-Moore algebras of the $\V$-monad remains an open problem, we know that $\vs$-algebras are precisely the \emph{weakly locally convex topological cones}~\cite{goubault19} : For any weakly locally convex topological cone $X$, the structure map $\alpha_X \colon \vs X\to X$ is the continuous map which sends each simple valuation
$\Sigma_{i=1}^n r_i\delta_{x_i}$ to its \emph{barycentre} $\Sigma_{i=1}^n r_i x_i$. Moreover, the $\vs$-morphisms are continuous \emph{linear maps}. The notions of topological cone and linear map are explained below, and the barycentre of a valuation is defined in Definition~\ref{defn:bary:choquet}.

\subsection{Ordered cones and topological cones}

The following notions are mainly from \cite{keimel08} and \cite{goubault19}.

\begin{definition}[Cone]
A \emph{cone} is defined to be a commutative monoid~$C$ together with a scalar multiplication by nonnegative real numbers satisfying the same axioms as for vector spaces; that is, $C$ is endowed with an addition $(x, y)\mapsto x+y: C\times C\to C$ which is associative, commutative and admits a neutral element~$0$, and with a scalar multiplication $(r, x)\mapsto r \cdot x: 
\mathbb R_+ \times C\to C$ satisfying the following axioms for all $x, y\in C$ and all $r, s\in \mathbb R_+$:
\begin{align*}
&r\cdot (x+y) = r\cdot x+r\cdot y         & &  (rs)\cdot x=r\cdot (s\cdot x)        &  0\cdot x = 0 \\
& (r+s)\cdot x = r\cdot x+s\cdot x       & &1\cdot x=x                                     &r\cdot 0 = 0
\end{align*}
We shall often write $rx$ instead of $r \cdot x$ for $r\in \mathbb R_+$ and $x\in C$.

An \emph{ordered cone} is a cone $C$ endowed with a partial order $\leq$ such that addition and multiplication by fixed scalars $r\in \mathbb R_{+}$ are order preserving, that is, for all $x, y, z\in C$ all $r\in \mathbb R_{+}$: $x\leq y \Rightarrow x+ z\leq y+z $ and $rx\leq ry $. 

A \emph{semitopological cone} is a cone with a $T_0$ topology that makes $+$ and $\cdot$ separately continuous.

A \emph{topological cone} is a cone with a $T_0$ topology that makes $+$ and $\cdot$ jointly continuous.
\end{definition}

Topological cones are semitopological cones, semitopological cones are ordered cones in their specialisation order. The extended reals $\real$ with the Scott topology is a topological cone, with scalar multiplication and addition extended as: $a+\infty = \infty$ for all $a\in \real$, $b\cdot \infty=\infty$ for $b\in \real\setminus \{0\}$ and $0\cdot\infty=0$. Usually a cone cannot be embedded into a vector space as a subcone,  the reason is that a cone might not satisfy the Cancellation Law (a+b = a+c implies b=c), a property enjoyed by every vector space. As an example, the extended reals $\real$ does not satisfy the Cancellation Law hence cannot be  be embedded into any vector space. 

The following definition is taken from~\cite[Definition 3.7]{keimel08}. 

\begin{definition}
A function $f:C\to D$ from cone $C$ to $D$ is called \emph{linear} if it is \emph{homogeneous}:
$$f(ra)= rf(a)~\text{for all}~a\in C~\text{and}~r\in \mathbb R_{+} $$ 
and \emph{additive}:
$$f(a+b)=f(a)+f(b)~\text{for all}~a, b\in C.$$
If $D$ is an ordered cone, $f$ is called \emph{superadditive} if
$$ f(a+b)\geq f(a)+f(b)~\text{for all}~a, b\in C$$
and \emph{subadditive} if
$$ f(a+b)\leq f(a)+f(b)~\text{for all}~a, b\in C.$$
We say that $f$ is \emph{sublinear} (resp., \emph{superlinear}), if f is homogeneous and subadditive (resp., superadditive).
\end{definition}

\begin{definition} Let $C$ be a cone. 
\begin{itemize}
\item A subset $A\subseteq C$ is \emph{convex}, if given $a, b\in A$, the linear combination $ra+(1-r)b\in A$ for any $r\in [0,1 ]$.
\item A subset $A\subseteq C$ is a \emph{half-space}, if both $A$ and its complement are convex. 
\item A cone $C$ with a $T_0$ topology is called \emph{weakly locally convex}\footnote{This was introduced as local convexity in~\cite{heckmann96}.}, if for every point $x\in C$, every open neighbourhood $U$ of $x$ contains a convex (not necessarily open) neighbourhood of $x$.
\item A cone $C$ with a $T_0$ topology is called \emph{locally convex}, if each point has a neighbourhood basis of open convex neighbourhoods.
\item A cone $C$ with a  $T_0$ topology is called \emph{locally linear}, if $C$ has a subbase of open half-spaces.
\end{itemize}
\end{definition}

It is immediate from the definition that every locally linear (semi)topological cone is locally convex and every locally convex (semi)topological cone is weakly locally convex. 

\begin{proposition}
For any topological space $X$, $\V X$ has a canonical cone structure that makes it a locally linear \emph{topological cone} (see, for example, {\rm \cite[Proposition 3.8]{goubault19}}).
\end{proposition}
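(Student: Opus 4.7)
The plan is to endow $\V X$ with the pointwise cone structure and then verify the three requirements in turn: (i) the operations stay inside $\V X$ and satisfy the cone axioms; (ii) $+$ and $\cdot$ are jointly continuous for the weak topology; (iii) the space is locally linear.

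For (i), I would define $(\mu+\nu)(U)=\mu(U)+\nu(U)$ and $(r\mu)(U)=r\cdot\mu(U)$, addition and multiplication on the right being those of $\RE$. Strictness, monotonicity, modularity and Scott-continuity transfer from $\mu,\nu$ to $\mu+\nu$ and $r\mu$ because $+$ and $r\cdot(\_)$ are Scott-continuous and linear on $\RE$. The cone axioms then hold pointwise because $\RE$ itself is a cone with the conventions $a+\infty=\infty$, $b\cdot\infty=\infty$ for $b>0$, and $0\cdot\infty=0$ already fixed in the excerpt.

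For (ii), I would check that the preimage of each subbasic open $[U>r]$ under addition is open in $\V X\times\V X$. Given $(\mu_0,\nu_0)$ with $\mu_0(U)+\nu_0(U)>r$, I would choose $s,t\in\mathbb R_+$ with $s<\mu_0(U)$, $t<\nu_0(U)$ and $s+t\geq r$; then $[U>s]\times[U>t]$ sits inside the preimage. Degenerate cases where $\mu_0(U)=0$ or $\nu_0(U)=0$ are handled by replacing the corresponding factor by all of $\V X$ and pushing the other threshold above $r$. For scalar multiplication I would argue similarly, using that $\mathbb R_+$ carries the Scott (upper) topology, so that $(a,\infty)$ form a neighbourhood base at each $a>0$; at $(t_0,\mu_0)$ with $t_0\mu_0(U)>r$, picking $\varepsilon>0$ with $(t_0-\varepsilon)(\mu_0(U)-\varepsilon)>r$ produces the required basic open neighbourhood. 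The only genuinely delicate cases are those involving $0$ and $\infty$, but the convention $0\cdot\infty=0$ keeps $(0,\mu)$ out of the preimage of $[U>r]$ whenever $r>0$, and at $(t_0,\mu_0)$ with $\mu_0(U)=\infty$, $t_0>0$, the open $(t_0/2,\infty)\times[U>2r/t_0]$ works.

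For (iii), I would show that every subbasic open $[U>r]$ is a half-space, which immediately gives a subbase of open half-spaces and hence local linearity. Both $[U>r]$ and its complement $\{\mu\mid\mu(U)\leq r\}$ are convex, because for $t\in[0,1]$ and $\mu,\nu\in\V X$ one has $(t\mu+(1-t)\nu)(U)=t\mu(U)+(1-t)\nu(U)$, and the relation $>r$ (resp.\ $\leq r$) is preserved by convex combinations in $\RE$.

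The principal obstacle I anticipate is the joint-continuity step, specifically navigating the boundary conventions on $\RE$: addition and scalar multiplication on $\RE$ are jointly Scott-continuous, but one must check case-by-case at the values $0$ and $\infty$, and the absence of the Cancellation Law on $\RE$ means one cannot import arguments from the usual theory of topological vector spaces. Once the $0/\infty$ bookkeeping is done, the rest of the proposition is essentially formal.
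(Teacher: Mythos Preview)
Your proposal is correct and follows essentially the same approach as the paper: define the operations pointwise, verify joint continuity by pulling back subbasic opens $[U>r]$, and observe that each $[U>r]$ is a half-space. The paper's own proof is only a brief sketch (it defers to \cite[Proposition 3.8]{goubault19}), so your version is simply a more detailed elaboration of the same argument, including the $0/\infty$ bookkeeping that the paper leaves implicit.
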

\begin{proof}
For any $\mu, \nu\in \mathcal VX, r\in \mathbb R_+$ and $U$ open in $X$, define $(\mu+\nu)(U)= \mu(U)+\nu(U)$ and $(r\cdot\mu)(U)=r\cdot\mu(U)$. The \emph{zero valuation} $(U\mapsto 0)$ is the neutral element. Then the triple $(\mathcal VX, +, \cdot)$ is a cone. The addition $ + \colon \V X\times \V X\to \V X\colon (\mu, \nu)\mapsto \mu+\nu $ and scalar multiplication $r\cdot\_\colon \V X\to \V X\colon \mu\mapsto r\cdot\mu$ for each $r\in \mathbb R_+$ are jointly continuous.  The latter fact implies that $\V X$ is a topological cone. Notice that the sub-basic opens $[U>r]$ of $\V X$ are all half spaces, hence $\V X$ is locally linear, hence also locally convex, and also weakly locally convex. 
\end{proof}

Also, note that $\V f\colon \V X\to \V Y$ is continuous and linear,  for any continuous map $f\colon X\to Y$.

Let $(C, +,\cdot)$ be a cone, by a \emph{subcone} $A$ of $C$ we mean that the neutral element~$0$ of $C$ is in $A$, and that for any $a, b\in A, r\in \mathbb R_+$, we have $a+b\in A$ and $r\cdot a\in A$. Following this terminology, the cone $\vs X$ of simple valuations on~$X$ is a subcone of~$\V X$, and $\vs X$ with the subspace topology is also a locally linear topological cone.

We also need the notion of a linear retraction between semitopological cones. In topology, a continuous function $f\colon X\to Y$ is called a \emph{retraction}
if there exists a continuous map $g\colon Y\to X$ such that $f\circ g= \id_Y$. The function~$g$ is called a \emph{section} (of $f$), and $Y$ is called a \emph{retract} of $X$. 
For semitopological cones $C$ and $D$, a retraction map $f\colon X\to Y$ is called a \emph{ linear retraction} if $f$ is linear. Notice that we do not require the section of $f$ to be a linear map. 

\begin{lemma}\label{retractionofweaklylc}{\rm \cite[Proposition 6.6~]{heckmann96}}
Let $C, D$ be topological cones, and $f\colon C\to D$ be a linear retraction. If $C$ is weakly locally convex, then so is $D$. \hfill $\Box$
\end{lemma}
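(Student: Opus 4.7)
The plan is to chase the convex neighbourhood across the retraction using the section to recover an open set around the target point. Let $g\colon D\to C$ be a continuous section, so that $f\circ g=\id_D$, and fix $y\in D$ together with an open neighbourhood $V\subseteq D$ of $y$.

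First I would push the problem into $C$: the set $f^{-1}(V)$ is open in $C$ and contains the point $g(y)$, since $f(g(y))=y\in V$. Weak local convexity of $C$ then supplies a convex (not necessarily open) neighbourhood $N$ of $g(y)$ together with an open set $U$ with $g(y)\in U\subseteq N\subseteq f^{-1}(V)$.

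Next I would transport $N$ back to $D$ via $f$. Because $f$ is linear it preserves convex combinations, so $f(N)$ is convex in $D$; and $f(N)\subseteq f(f^{-1}(V))\subseteq V$. It remains to exhibit an open neighbourhood of $y$ inside $f(N)$, and here the section earns its keep: $g^{-1}(U)$ is open in $D$ (by continuity of $g$), contains $y$ (since $g(y)\in U$), and lies in $f(U)$ because for every $z\in g^{-1}(U)$ we have $z=f(g(z))\in f(U)$. Thus $y\in g^{-1}(U)\subseteq f(U)\subseteq f(N)\subseteq V$, which shows $f(N)$ is a convex neighbourhood of $y$ contained in $V$.

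There is no real obstacle here; the only subtlety is the symmetric observation that linearity of $f$ alone (not of $g$) suffices to keep $f(N)$ convex, while continuity of the section $g$ alone (not linearity) suffices to extract the open neighbourhood $g^{-1}(U)$ of $y$ inside $f(N)$. This is why the hypothesis does not need $g$ to be linear, and is also why the joint continuity of addition and scalar multiplication on $C$ and $D$ plays no role beyond what is packaged in the fact that both are topological cones.
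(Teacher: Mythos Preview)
Your argument is correct. The paper does not supply its own proof of this lemma: it is stated with a citation to Heckmann~\cite[Proposition~6.6]{heckmann96} and closed immediately with a box, so there is nothing to compare against beyond the original source. Your proof is the standard one and matches Heckmann's argument: pull the open neighbourhood back along $f$, use weak local convexity of $C$ to find a convex neighbourhood there, push it forward along the linear map $f$ to obtain a convex set in $D$, and use the continuous section $g$ to certify that this convex set is a neighbourhood of the target point. Your closing remark about why linearity of $g$ is unnecessary is exactly the point.
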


\subsection{The $\kk$-completion}

In order to prove that the category of  monotone convergence spaces is reflexive in ${\bf TOP_0}$, Keimel and Lawson considered the general setting of \emph{$\K$-categories}~\cite{keimel09a}. 

\begin{definition}[$\K$-\emph{category}] \label{defk}
A full subcategory $\mathcal K$ of the category ${\bf TOP_0}$ is called a 
$\K$-\emph{category} if all objects of $\mathcal K$ satisfy the following properties:
\begin{enumerate}
  \item Homeomorphic copies of $\kk$-spaces are $\kk$-spaces. That is, $\kk$ is a replete subcategory of ${\bf TOP_{0}}$.
  \item All sober spaces are $\kk$-spaces. That is, all sober spaces are in the category $\kk$.
  \item In a sober space $S$, the intersection of any family of $\kk$-subspaces is a $\kk$-space. 
  \item Continuous maps $f\colon S\to T$ between sober spaces $S$ and $T$ are \emph{$\kk$-continuous}, that is, for every $\kk$-subspace $K$ of $T$, the inverse image
           $f^{-1}(K)$ is a $\kk$-subspace of $S$ or, equivalently, $f(\clk(Z))\subseteq \clk f(Z)$ for every subset $Z\subseteq S$, where for a subset $A$ of a sober space $X$, $\clk A$ stands for the smallest $\kk$-subspace of $X$ containing $A$. The existence of $\clk A$ is guaranteed by Item 3. 
\end{enumerate}
\end{definition}
Note that a $\K$-category $\kk$ is also referred to as \emph{having {Property}~$\K$}, and  the objects of $\kk$ are called $\kk$-spaces.

Keimel and Lawson \cite{keimel09a} proved that every $\K$-category~$\kk$ is a reflective subcategory of ${\bf TOP_0}$: The corresponding reflector $\kc$ sends each topological space~$X$ to the smallest $\kk$-subspace of~$X^\s$ that contains $\eta^{\s}_{X}(X)$, the embedding image of $X$ into $X^{\s} $. They then showed that the category of monotone convergence spaces, which we denote by $\D$, is a $\K$-category. 

Additionally, the category $\mathcal S$ of sober spaces is a $\K$-category, the related reflector $\mathcal S_\cc$ is just the sobrification functor.  It was proved in \cite{wu20} that the category $\W$ of well-filtered spaces is also a $\K$-category. However, the category of Scott spaces (dcpos equipped with the Scott topology) and Scott-continuous functions  is not a $\K$-category since not every sober space is a Scott space: the topology on a sober space need not be the Scott topology of its specialisation order; the unit interval $[0, 1]$ with the ordinary topology is Hausdorff hence sober, but it is not a Scott space. 
\begin{remark}
In this paper, $\kk$ is always reserved for some $\K$-category. Moreover, $\kc$ is used to denote the corresponding reflector.
\end{remark}

\begin{figure}[h]
\centering
\begin{tikzpicture}[commutative diagrams/every diagram]
\matrix[matrix of math nodes, name=m, commutative diagrams/every cell] {
\kc(X)  &  K \\
 X &    \\};
  \path[commutative diagrams/.cd, every arrow, every label]
    (m-1-1) edge["$\overline f$"] (m-1-2)
    (m-2-1) edge[swap,"$f$"] (m-1-2)
                 edge[commutative diagrams/hook, "$\eta_X$"]  (m-1-1);
\end{tikzpicture}
\begin{tikzpicture}[commutative diagrams/every diagram]
\matrix[matrix of math nodes, name=m, commutative diagrams/every cell] {
 \kc(X)  &  \kc(Y) \\
 X &  Y   \\};
  \path[commutative diagrams/.cd, every arrow, every label]
    (m-1-1) edge["$\kc (f)$"] (m-1-2)
    (m-2-1) edge[swap,"$f$"] (m-2-2)
                 edge[commutative diagrams/hook, "$\eta_X$"]  (m-1-1)
    (m-2-2) edge[commutative diagrams/hook, "$\eta_Y$"]  (m-1-2);
\end{tikzpicture}
\end{figure}

Since each $\K$-category $\kk$ is a reflective subcategory of $\bf TOP_0$, the reflector $\kk_\cc\colon {\bf TOP_0}\to \kk$ is left adjoint to the inclusion of $\kk$ into $\bf TOP_0$. We now state some properties that apply in this situation that will be useful in what follows.

For a space $X$, we use $\eta_{X}\colon X\to \kc(X) $ to denote the corestriction of $\eta^{\s}_{X}$ on $\kc(X)$ throughout the paper. Then $\eta_{X}$ is the unit for the monad~$\kk_\cc$, and it is a topological embedding of $X$ into $\kc(X)$. Moreover, the pair $(\kc(X), \eta_{X})$ satisfies the following universal property:  for each $\kk$-space $K$ and continuous map $f\colon X\to K$, there is a unique extension map $\overline f \colon \kc (X)\to K$ of $f$ such that $\overline f\circ \eta_{X} = f$. We call $(\kc(X), \eta_{X})$ is the \emph{canonical $\kk$-completion} of $X$. More generally, we say a pair $(Y, g)$ is a \emph{$\kk$-completion} of $X$ if $Y$ is a $\kk$-space, $g\colon X\to Y$ is a topological embedding, and for every continuous map $f$ from $X$ to a $\kk$-space $K$, there exists a unique extension map $\overline f\colon Y\to K$ such that $\overline f\circ g=f $. 

As a functor from $\bf TOP_0$ to $\kk$, the  reflector $\kc$ sends each  space $X$ to $\kc(X)$, and each continuous map $f\colon X\to Y$ to $\kc(f)=\overline{\eta_{Y}\circ f}$, the unique continuous function from $\kc(X)$ to $\kc(Y)$ that extends $\eta_{Y}\circ f$, i.e., $\kc(f) \circ\eta_X=\eta_Y \circ f$.

\begin{remark}
Restricting to the category $\D$ of monotone-convergence spaces, Keimel and Lawson showed that $\D$-completion gives rise to, in a natural way, the dcpo-completion, a concept well studied in~\cite{zhao10}. In detail, for a poset~$P$, the pair $(\overline P, e)$ is called a \emph{dcpo-completion} of~$P$, if $e\colon P\to \overline P$ is a topological embedding with respect to the Scott topologies on both $P$ and $\overline P$, where $\overline P$ is a dcpo,  and for every Scott-continuous map $f$ from $P$ to a dcpo $L$, there exists a unique Scott-continuous map $\overline f$ such that $\overline f\circ e = f$. They proved that the functor $\Omega\circ \D_{\cc}\circ \Sigma$ is the left adjoint of the inclusion functor $U$ from the category $\mathbf{DCPO}$ of dcpos and Scott-continuous maps into the category $\mathbf{POS_{d}}$ of posets and Scott-continuous maps (not order-preserving maps). 
\[
\begin{tikzcd}
\mathbf{TOP_{0}}
\arrow[rr, "\D_{\cc} "] &&   \D \arrow[dd, "\Omega"] \\
\\
\mathbf{POS_{d}}\arrow[uu, "\Sigma"] \arrow[rr, "\Omega\circ \D_{\cc}\circ \Sigma" {name=F}, bend left = 15] && \mathbf{DCPO} \arrow[ll, "U"{name=G}, bend left=15 ]
\arrow[phantom, from=F, to=G, "\dashv" rotate=-90]
\end{tikzcd}
\]
Hence for each poset $P$, the pair $(\Omega(\D_{\cc}( \Sigma P)), e)$ is a dcpo-completion of $P$, where $e$ is the unit of the adjunction. 
\end{remark}

By composition with the inclusion of $\kk$ into $\bf TOP_0$,   $\kc$ can be seen as an endofunctor on ${\bf TOP_0}$, so it defines a monad. Then $\kc$ is an idempotent monad due to the universal property mentioned above. Its unit at $X$ is just $\eta_{X}$ also described above, and its multiplication $\mm_{X}$ at $X$ is the extension map of $\id_{X}$. Moreover, we know that $\eta_{\kc(X)}=\kc(\eta_{X})$, which is inverse to $\mm_{X}$; this implies $\kc(X)$ is homeomorphic to $\kc(\kc(X))$.
The (Eilenberg-Moore) algebras of $\kk_\cc$ are precisely the $\kk$-spaces (See for example \cite[Proposition 4.23, Corollary 4.24]{borceux2}).

Since each $\K$-category $\kk$ is a reflective, full and replete subcategory of ${\bf TOP_0}$,  the product of any family of $\kk$-spaces, as computed in  ${\bf TOP_0}$,  is a $\kk$-space and the same holds for equalisers. An easy consequence is that any $\K$-category is closed under products and equalisers, hence is complete. (See for example, Exercise~7 one page 92 in~\cite{maclane98}.)

The following slightly stronger statement also holds. 

\begin{lemma}
\label{equaliser:kspaces}
Let $X$ a $\kk$-space and $Y$ be a $T_0$ space, and let $f, g$ be continuous maps from $X$ to $Y$. If $h\colon Z\to X$ is an equaliser of $f$ and $g$, then $Z$ is also a $\kk$-space. 
\end{lemma}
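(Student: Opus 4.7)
The plan is to reduce to the case where the codomain is already a $\kk$-space by passing to the sobrification of $Y$. Since $Y$ is $T_0$, the sobrification unit $\eta^{\s}_Y \colon Y \to Y^{\s}$ is a topological embedding, and in particular it is injective. Consequently, for $x \in X$ we have $f(x) = g(x)$ if and only if $\eta^{\s}_Y(f(x)) = \eta^{\s}_Y(g(x))$. This means that the equaliser $h \colon Z \to X$ of $f$ and $g$ in $\mathbf{TOP_0}$ coincides (as a subspace of $X$, with the subspace topology inherited from $X$) with the equaliser of the pair of composites
\[
\eta^{\s}_Y\circ f,\ \eta^{\s}_Y\circ g \colon X \longrightarrow Y^{\s}.
\]

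First I would invoke Keimel–Lawson's axiom that all sober spaces are $\kk$-spaces, so that $Y^{\s}$ is a $\kk$-space. Next I would appeal to the remark immediately preceding the lemma: because $\kk$ is a full, replete, reflective subcategory of $\mathbf{TOP_0}$, equalisers in $\mathbf{TOP_0}$ of morphisms between $\kk$-spaces are again $\kk$-spaces. Applied to the parallel pair $\eta^{\s}_Y\circ f,\, \eta^{\s}_Y\circ g$ between the $\kk$-spaces $X$ and $Y^{\s}$, this yields that their equaliser in $\mathbf{TOP_0}$ is a $\kk$-space. Since this equaliser is exactly $h \colon Z \to X$ (by the identification above), we conclude that $Z$ is a $\kk$-space, as required.

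There is no real obstacle here beyond the small observation that the equalisers of $(f,g)$ and of $(\eta^{\s}_Y \circ f, \eta^{\s}_Y \circ g)$ agree as subspaces of $X$; this is essentially the statement that a $T_0$ embedding reflects equalities. The lemma may therefore be viewed as a mild strengthening of the closure of $\kk$ under equalisers: one may drop the hypothesis that the codomain lies in $\kk$ at the cost of replacing it silently by its sobrification.
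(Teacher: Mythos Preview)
Your proposal is correct and follows essentially the same route as the paper: compose with the sobrification embedding $\eta^{\s}_Y$, observe that $h$ equalises the resulting pair because $\eta^{\s}_Y$ is injective, and then invoke closure of $\kk$ under equalisers of maps between $\kk$-spaces (here $X$ and the sober space $Y^{\s}$). The paper's own proof is the same argument, stated more tersely.
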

\begin{proof}
\begin{figure}[h]
\centering
\begin{tikzcd}
 Z \arrow[r, "h"]  &   X \arrow[r, "g"', bend right =10]  \arrow[r, "f", bend left=10] & Y \arrow[r, hook, "\eta^{\s}_{Y}"] & Y^{s}  
\end{tikzcd}
\end{figure}
Let $Y^\s$ be the canonical sobrification of $Y$ and $\eta^{\s}_{Y}$ be the canonical topological embedding of $Y$ into $Y^{\s}$ sending $y\in Y$ to the closure of $\{y\}$.  Then we know that $h\colon Z\to X$ is also a equaliser of $\eta^{\s}_{Y}\circ f$ and $\eta^{\s}_{Y}\circ g$. Since $Y^\s$ is sober, hence a $\kk$-space, this implies that $Z$ is a $\kk$-space. 
\end{proof}

\begin{corollary}
\label{densedetermined}
Let $X$ be a $T_0$ space and let $f, g$ be continuous maps from $\kc(X)$ to  $T_0$ space $Y$. If $f$ and $g$ are equal on $X$, that is, $f\circ \eta_{X} = g\circ \eta_{X}$, then $f=g$. 
\end{corollary}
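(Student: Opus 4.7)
The plan is to apply Lemma~\ref{equaliser:kspaces} to the equaliser of $f$ and $g$ and then exploit the universal property of the $\kk$-completion $(\kc(X),\eta_X)$.

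First I would form the equaliser $h\colon Z\to \kc(X)$ of $f$ and $g$ in ${\bf TOP_0}$. Concretely, $Z$ is the subspace $\{z\in \kc(X)\mid f(z)=g(z)\}$ of $\kc(X)$, and $h$ is the inclusion. Since $\kc(X)$ is a $\kk$-space and $Y$ is $T_0$, Lemma~\ref{equaliser:kspaces} tells us that $Z$ is itself a $\kk$-space.

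Next, the hypothesis $f\circ\eta_X = g\circ\eta_X$ says exactly that the image of $\eta_X$ lands in $Z$, so there is a continuous map $\eta'_X\colon X\to Z$ with $h\circ \eta'_X = \eta_X$. Now I invoke the universal property of the canonical $\kk$-completion $(\kc(X),\eta_X)$: since $Z$ is a $\kk$-space, $\eta'_X$ extends uniquely to a continuous map $\overline{\eta'_X}\colon \kc(X)\to Z$ with $\overline{\eta'_X}\circ \eta_X = \eta'_X$.

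Then $h\circ \overline{\eta'_X}$ and $\id_{\kc(X)}$ are both continuous maps $\kc(X)\to \kc(X)$ that agree with $\eta_X$ after precomposition with $\eta_X$; by the uniqueness part of the universal property (applied with target the $\kk$-space $\kc(X)$), they must coincide. Hence $h$ has a continuous section, so $h$ is surjective. Since $h$ is the inclusion of $Z$ into $\kc(X)$, surjectivity yields $Z=\kc(X)$, i.e.\ $f=g$ on all of $\kc(X)$.

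The only point that needs any care is the first step: one must check that $Z$, constructed as the equaliser in ${\bf TOP_0}$, really is a subspace of $\kc(X)$ so that Lemma~\ref{equaliser:kspaces} applies. This is routine, because equalisers of $T_0$ spaces are computed as the appropriate subspace; after that, everything is a straightforward chase through the universal property.
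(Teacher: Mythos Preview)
Your proof is correct and follows essentially the same route as the paper: form the equaliser, use Lemma~\ref{equaliser:kspaces} to see it is a $\kk$-space, factor $\eta_X$ through it, and use the universal property of the $\kk$-completion to obtain a section of $h$. Your finish is in fact slightly cleaner than the paper's---you stop once $h$ is surjective, whereas the paper goes on to verify that $h$ is a full homeomorphism before concluding.
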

\begin{proof}
\begin{figure}[h]
\centering
\begin{tikzcd}
 X \arrow[r, "\eta_{X}"] \arrow[dr, "j"']  &   \kc(X)  \arrow[d, "\overline j", bend left = 10] \arrow[r, "f", bend left = 10]  \arrow[r, "g"', bend right=10] & Y  \\
 &  Z  \arrow[u, "h", bend left = 10] &
\end{tikzcd}
\end{figure}
Let $h\colon Z\to \kc(X)$ be the equaliser of $f$ and $g$. We know that there exists a unique $j\colon X\to Z$ such that $\eta_{X}=h\circ j$ since by assumption $\eta_{X}$ equalises $f$ and $g$.  Lemma~\ref{equaliser:kspaces} implies $Z$ is a $\kk$-space, so the map $j$ has a unique extension $\overline j\colon \kc(X)\to Z$ such that $j=\overline j\circ \eta_{X}$. We claim that $\overline j$ and $h$ are inverse to each other. First, notice that $h\circ \overline j\circ \eta_{X} = h\circ j= \eta_{X} =\id_{\kc (X)}\circ \eta_{X}$. This means that both the maps $h\circ \overline j$ and $\id_{\kc (X)}$ extend $\eta_{X}$ and hence it follows that $h\circ\overline j =  \id_{\kc(X)}$. This also implies that $h\circ \overline j\circ h=h$, which in turn implies that $\overline j\circ h = \id_{Z}$, since the map $h$, as an equaliser, is a monomorphism. Hence $\kc(X)$ and $Z$ are homeomorphic. Finally,  we conclude that $f=g$ since $h\circ f = h\circ g$ and $h$ is a homeomorphism. 
\end{proof}

As another corollary to Lemma~\ref{equaliser:kspaces}, every $\K$-category is also stable under \emph{retractions}. 

\begin{corollary}
\label{retracts:of:kspace}
Let $\kk$ be a $\K$-category, $Y$ be a $\kk$-space, and $X$ be a $T_0$ space. If $X$ is a retract of $Y$, then $X$ is also a $\kk$-space. 
\end{corollary}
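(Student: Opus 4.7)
The plan is to realise $X$ as an equaliser of two continuous self-maps on $Y$ and then apply Lemma~\ref{equaliser:kspaces} combined with repleteness of $\kk$.

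Since $X$ is a retract of $Y$, fix continuous maps $s\colon X\to Y$ and $r\colon Y\to X$ with $r\circ s=\id_X$. Consider the pair of continuous maps $s\circ r,\ \id_Y\colon Y\to Y$. Their equaliser (computed in ${\bf TOP_0}$ as a subspace of $Y$) is
\[
E=\{y\in Y\mid s(r(y))=y\}.
\]
First I would show $E=s(X)$: if $y=s(x)$ then $s(r(s(x)))=s(\id_X(x))=s(x)=y$, so $s(X)\subseteq E$; conversely every $y\in E$ is by definition in the image of $s$.

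Next I would observe that the corestriction $\tilde s\colon X\to s(X)$ is a homeomorphism, with continuous inverse given by the restriction $r\!\upharpoonright_{s(X)}\colon s(X)\to X$: indeed $r\!\upharpoonright_{s(X)}\circ\,\tilde s=r\circ s=\id_X$, and for $y=s(x)\in s(X)$ we have $\tilde s(r(y))=s(r(s(x)))=s(x)=y$, so $\tilde s\circ r\!\upharpoonright_{s(X)}=\id_{s(X)}$. In particular, $s(X)$ inherits the $T_0$ property from $Y$.

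Now Lemma~\ref{equaliser:kspaces} applies to the equaliser $s(X)\hookrightarrow Y$ of $s\circ r$ and $\id_Y$, because $Y$ is a $\kk$-space and is in particular $T_0$. Therefore $s(X)$ is a $\kk$-space. By Condition~1 of Definition~\ref{defk}, $\kk$ is replete, so the homeomorphic copy $X$ is also a $\kk$-space, which finishes the proof. There is no real obstacle here; the only mild subtlety is to choose the right pair of maps on $Y$ whose equaliser picks out exactly the image $s(X)$, after which everything is forced by Lemma~\ref{equaliser:kspaces} and repleteness.
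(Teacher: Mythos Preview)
Your proof is correct and follows essentially the same route as the paper: both recognise the retract as the equaliser of $\id_Y$ and $s\circ r$ and then invoke Lemma~\ref{equaliser:kspaces}. The only difference is cosmetic: the paper observes directly that the map $s\colon X\to Y$ itself satisfies the universal property of the equaliser, whereas you compute the equaliser concretely as the subspace $s(X)\subseteq Y$ and then pass back to $X$ via the homeomorphism $\tilde s$ and repleteness.
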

\begin{proof}
\begin{figure}[h]
\centering
\begin{tikzcd}
 X \arrow[r, "s"]  &   Y \arrow[rr, "\id_{Y}"', bend right =13]  \arrow[r, "r", bend left=13] & X \arrow[r, "s", bend left=13] & Y 
\end{tikzcd}
\end{figure}
Let $r\colon Y\to X$ be a retraction and $s\colon X\to Y$ be its section. It is immediate that $s\colon X\to Y$ is the equaliser of $\id_{Y}$ and $s\circ r$. By Lemma~\ref{equaliser:kspaces} we conclude that $X$ is a $\kk$-space. 
\end{proof}

The following is \`a la \cite[Theorem 2.4]{heckmann96}. 

\begin{corollary}
\label{alahcekmann}
For a subspace $X$ of a $\kk$-space $Y$, consider the following statements:
\begin{enumerate}
\item The pair $(Y, e)$ is a $\kk$-completion of $X$, where $e$ is the subspace embedding of $X$ into $Y$.
\item For every $y\in Y$ and open subset $O$ of $Y$ with $y\in O$ there is some $x\in X$ with $x\in O$ and $x\leq y$;
\item For all opens $U$ and $V$ of $Y$, $X\cap U\subseteq V$ implies that $U\subseteq V$.
\end{enumerate}
Then $(1) \Rightarrow (2) \Leftrightarrow (3)$. 
\end{corollary}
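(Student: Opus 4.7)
My plan is to establish the chain $(1) \Rightarrow (3) \Rightarrow (2) \Rightarrow (3)$; these three implications together yield both $(1) \Rightarrow (2)$ and the equivalence $(2) \Leftrightarrow (3)$ demanded by the statement.

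For $(1) \Rightarrow (3)$ I will exploit the universal property of the $\kk$-completion against the Sierpi\'nski space $\mathbb S$, which is sober and hence a $\kk$-space by item~2 of Definition~\ref{defk}. Given opens $U, V$ of $Y$ with $X \cap U \subseteq V$, the characteristic maps $\chi_U, \chi_{U \cap V} \colon Y \to \mathbb S$ are both continuous. The hypothesis rewrites as $X \cap U = X \cap (U \cap V)$, so $\chi_U \circ e = \chi_{U \cap V} \circ e$ as maps $X \to \mathbb S$. Both $\chi_U$ and $\chi_{U \cap V}$ are therefore continuous extensions to $Y$ of this common restriction, and the uniqueness clause of the $\kk$-completion universal property forces $\chi_U = \chi_{U \cap V}$, i.e.\ $U = U \cap V$, and hence $U \subseteq V$.

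For $(3) \Rightarrow (2)$ I will argue by contradiction. Fix $y \in Y$ and an open $O \ni y$, and suppose no $x \in X$ with $x \leq y$ lies in $O$. The set $V := Y \setminus \overline{\{y\}}$ is open in $Y$, and since $x \leq y$ holds iff $x \in \overline{\{y\}}$, the supposition rewrites precisely as $X \cap O \subseteq V$. Then $(3)$ yields $O \subseteq V$, contradicting $y \in O$ together with $y \in \overline{\{y\}}$.

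The remaining implication $(2) \Rightarrow (3)$ is straightforward: given $X \cap U \subseteq V$ and $y \in U$, $(2)$ supplies $x \in X$ with $x \in U$ and $x \leq y$; then $x \in X \cap U \subseteq V$, and openness of $V$ together with $x \leq y$ in the specialisation order forces $y \in V$, so $U \subseteq V$. I foresee no substantial obstacle: the only delicate point is the choice of test space in $(1) \Rightarrow (3)$, where $\mathbb S$ serves as a dualising object converting opens of $Y$ into $\mathbb S$-valued continuous maps, so that uniqueness of extension of a single map $X \to \mathbb S$ is enough to pin down the opens of $Y$.
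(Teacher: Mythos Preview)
Your proof is correct and uses essentially the same idea as the paper: the key step exploits uniqueness of continuous extensions to the Sierpi\'nski space $\mathbb S$ to constrain the opens of $Y$. The only difference is organisational---the paper proves $(1)\Rightarrow(2)$ directly (using the two extensions $\chi_O$ and $\chi_{O\setminus\da y}$ of $\chi_{O\cap X}$) and defers $(2)\Leftrightarrow(3)$ to~\cite{heckmann96}, whereas you route through $(3)$ first and spell out $(2)\Leftrightarrow(3)$ explicitly; your pair $\chi_U,\chi_{U\cap V}$ plays the same role as the paper's pair.
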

\begin{proof}
The proof of the equivalence between $(2)$ and $(3)$ goes exactly as in that of~\cite[Theorem 2.4]{heckmann96}. 

We prove that $(1)$ implies $(2)$. 
To this end we assume that $(1)$ holds and $(2)$ does not. Then there exists some $y\in Y$ and open set $O\subseteq Y$ of $y$ such that $\da y\cap U\cap X = \emptyset$. This means that $X\subseteq Y\setminus (\da y\cap U)$. Now consider the map $\chi_{O\cap X}\colon X\to \mathbb S$. Since $\mathbb S$ is sober hence a $\kk$-space, the fact that both $\chi_{O}\colon Y\to \mathbb S$ and $\chi_{O\setminus \da y}\colon Y\to \mathbb S $ extend $\chi_{O\cap X}$ would contradict with the fact that $(Y, e)$ is a $\kk$-completion of $X$. 
\end{proof}

\begin{remark}\label{isobetweenxandkx}
The Item~$3$ of Corollary~\ref{alahcekmann} implies that for each open subset $W$ of $X$, there is one and only one open subset $O$ of $Y$, such that $W=O\cap X$, and $\chi_{O}\colon Y\to \mathbb S$ is the unique continuous map extending $\chi_{W}\colon X\to \mathbb S$. Moreover, the map $\lambda O. X\cap O$ is an order-isomorphism between the lattice of open sets of $Y$ and that of $X$. 
\end{remark}

\begin{proposition}
\label{kpreservesorder}
Let $X$ be a $T_{0}$ space, $K$ be a $\kk$-space and $f\colon X\to K$ be any continuous map. If we use $[X\to K]$ (resp., $[\kc(X) \to K]$) to denote the set of continuous functions from $X$ (resp., $\kc(X)$) to $K$,  then the extension map $f\mapsto \overline f\colon [X\to K]\to[\kc(X)\to K] $ is a bijection. Moreover, it is an order isomorphism. That is, for $f, g\colon X\to K$ with $f\leq g$, we have that $\overline f\leq \overline g$. The order mentioned here is the pointwise order between functions induced by the specialisation order of $K$. 
\end{proposition}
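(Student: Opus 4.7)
The plan is to first establish the bijection from the universal property, then prove the order isomorphism by translating the pointwise order into an inclusion of preimages of open sets, which we can lift from $X$ to $\kc(X)$ using Corollary~\ref{alahcekmann}.

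For the bijection, the extension operation $f\mapsto \overline f$ has an obvious candidate inverse, namely restriction $g\mapsto g\circ \eta_{X}$. Given any continuous $g\colon \kc(X)\to K$, the map $g\circ \eta_{X}\colon X\to K$ is continuous, and since $g$ extends $g\circ \eta_{X}$ and $K$ is a $\kk$-space, the uniqueness clause in the universal property forces $\overline{g\circ \eta_{X}} = g$. Conversely, $\overline f\circ \eta_{X} = f$ by definition. Thus extension and restriction are mutually inverse.

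The main work is order preservation. Fix continuous $f,g\colon X\to K$ with $f\leq g$ pointwise in the specialisation order of $K$. Since $K$ is $T_{0}$, this is equivalent to saying that for every open $U\subseteq K$, $f^{-1}(U)\subseteq g^{-1}(U)$. Because $\eta_{X}$ is a topological embedding and $\overline f\circ \eta_{X}=f$, $\overline g\circ \eta_{X}=g$, we have
\[
\eta_{X}(X)\cap \overline f^{-1}(U)=\eta_{X}(f^{-1}(U))\subseteq \eta_{X}(g^{-1}(U))\subseteq \overline g^{-1}(U).
\]
Now $(\kc(X),\eta_{X})$ is a $\kk$-completion of $X$ (identifying $X$ with $\eta_{X}(X)$), so Corollary~\ref{alahcekmann}(3) applies to the open sets $\overline f^{-1}(U)$ and $\overline g^{-1}(U)$ of $\kc(X)$: the inclusion above lifts to $\overline f^{-1}(U)\subseteq \overline g^{-1}(U)$. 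Since $U$ was arbitrary and $K$ is $T_{0}$, this gives $\overline f(y)\leq \overline g(y)$ for every $y\in \kc(X)$, i.e.\ $\overline f\leq \overline g$.

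For order reflection, if $\overline f\leq \overline g$, then for every $x\in X$, $f(x)=\overline f(\eta_{X}(x))\leq \overline g(\eta_{X}(x))=g(x)$, so $f\leq g$. No step should present real difficulty; the only conceptual point is recognising that the translation from pointwise order to preimage inclusions is exactly the form in which Corollary~\ref{alahcekmann}(3) can be applied, and this lets us avoid any case analysis on individual points of $\kc(X)$ (which would be awkward because points of $\kc(X)\setminus \eta_{X}(X)$ have no direct description).
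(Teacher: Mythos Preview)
Your proof is correct. Both your argument and the paper's rest on Corollary~\ref{alahcekmann}, but you invoke part~(3) directly---translating the pointwise order into preimage inclusions and lifting them globally---whereas the paper argues by contradiction using part~(2): it assumes some $a\in\kc(X)$ with $\overline f(a)\not\leq\overline g(a)$, takes the open set $U=K\setminus\da\overline g(a)$, and uses~(2) to produce a witness $x\in X$ with $\eta_X(x)\leq a$ and $f(x)\in U$, contradicting $f\leq g$ together with monotonicity of $\overline g$. Since~(2) and~(3) are equivalent, the two arguments are essentially dual formulations of the same idea; your version has the mild advantage of avoiding the contradiction and working uniformly over all open sets rather than choosing a specific point and a specific open set.
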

\begin{proof}
It is obvious that $(f\mapsto \overline f)$  is a bijection. We assume that it is not  an order isomorphism. 
Hence we know that  there exist functions $f$ and $g$ from $X$ to $K$ with $f\leq g$, but $\overline f\not\leq \overline g$. So there is some point $a\in \kc(X)$ such that $\overline f(a)\not\leq \overline g(a)$. This implies that $\overline f(a)$ is in the open set $U= K\setminus \da \overline g(a)$. By Item~2 of Corollary~\ref{alahcekmann} we have some $x\in X$ such that $\eta_{X}(x)\in \da a\cap {\overline f^{-1}}(U)$. Applying $\overline f$ to $\eta_{X}(x)$ yields that $f(x)= \overline f(\eta_{X}(x))\in U$. This implies that $f(x)\not\leq \overline g(a) $. However, this is impossible since $f(x)\leq g(x) =  \overline g(\eta_{X}(x)) \leq \overline g(a)$. (the last inequality comes from the fact that $\overline g$ is continuous hence order-preserving and that $\eta_{X}(x)\leq a$.)
\end{proof}

\section{The monad structure}
In this section, we prove that for each $\K$-category $\kk$, the composition $\kc \circ \vs$ gives rise to a monad over the category ${\bf TOP_{0}}$. First of all, it is easy to see that the assignment of the space $\kc(\vs X)$ to each space $X$,  and of the continuous function $\kc(\vs f)$ to each continuous function $f\colon X\to Y$  defines an endofunctor on ${\bf TOP_0}$, which we denote by~$\vk$.

The $\kk$-completion of $\vs X$, as a topological space, is homeomorphic to the smallest $\kk$-subspace  in $\V X$ containing $\vs X$. This is because \cite[Theorem 4.4]{keimel09a} tells us that we could complete $\vs X$ in any $\kk$-space that has (the homeomorphic copy of) $\vs X$ as a subspace. The space $\V X$, in our case, is a sober space~\cite[Proposition5.1]{heckmann96} hence a $\kk$-space and it contains $\vs X$ as a subspace.  Henceforth we identify $\vk X$ as a subspace of $\V X$.  

If $f\colon X\to Y$ is a continuous map, then $\V f\colon \V X\to \V Y$ also is continuous. Since $\V X, \V Y$ are sober spaces, $\V f$ is $\kk$-continuous by Item~4 of Definition~\ref{defk}. Thus, $\V f$ sends $\vk X$ into $\vk Y$. Moreover, $\vk f$ is defined to be the unique map $\kc(\vs f)$  such that $\kc(\vs f) \circ \eta_{\vs X} =\eta_{\vs Y}\circ \vs f$. Hence, $\vk f = \V f$ on $\vk X$, since for each simple valuation $s$, $\V f(s) =  \vs f(s)$. That is, for each $\mu\in \vk(X)$ and each open subset $U$ of $Y$, $\vk f (\mu)(U) = \V f (\mu)(U) = \mu(f^{-1}(U))$.

We summarise comments above in Figure~\ref{relation}, hooked arrows denote topological embeddings and tailed arrows denote subspace embeddings.

\begin{figure}[h]
\centering
\begin{tikzcd}
X    \arrow[d, "f"]  \arrow[r, hook, "\us_{X}"]     & \vs X     \arrow[d, "\vs f"]  \arrow[r, tail, "\eta_{\vs X}"]    & \vk X    \arrow[d, "\vk f"]  \arrow[r, tail, "e_{\vk X}"]     & \V X   \arrow[d, "\V f"]   \\
Y   \arrow[r, hook, "\us_{Y}"]     & \vs Y     \arrow[r, tail, "\eta_{\vs Y}"]    & \vk Y    \arrow[r, tail, "e_{\vk Y}"]     & \V Y 
\end{tikzcd}
\caption{ \label{relation} }
\end{figure}

Notice that $\V X$ is a topological cone containing $\vs X$ as a topological subcone. A natural question is whether the cone structure of $\V X$ restricts to that of $\vk X$. That is, if we take valuations from $\vk X$ and form finite sums and scalar multiples in $\V X$, do the results land back in $\vk X$? The following proposition answers this question in the positive. 

\begin{proposition}\label{subcone}
$\vk X$ is a subcone of $\V X$ for every topological space~$X$.  
\end{proposition}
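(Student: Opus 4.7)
The plan is to exploit the identification $\vk X = \clk(\vs X)$ inside the sober cone $\V X$ together with the $\kk$-continuity clause (Item~4) of Definition~\ref{defk}, which says that for any continuous self-map $f\colon \V X\to\V X$ and any $Z\subseteq \V X$ one has $f(\clk Z)\subseteq \clk f(Z)$. Since $\V X$ is a topological cone, scalar multiplication and addition are continuous, so this clause will propagate closure under the cone operations from $\vs X$ up to $\vk X$. The zero valuation lies already in $\vs X\subseteq \vk X$, so only scalar multiplication and addition need attention.

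For scalar multiplication, I would fix $r\in \mathbb{R}_+$ and consider $h_r=r\cdot\_\colon \V X\to \V X$. This map is continuous and clearly sends $\vs X$ into $\vs X$ (since $r\cdot(\sum r_i\delta_{x_i})=\sum (rr_i)\delta_{x_i}$). Applying $\kk$-continuity gives
\[
h_r(\vk X)=h_r(\clk \vs X)\subseteq \clk h_r(\vs X)\subseteq \clk \vs X=\vk X.
\]
So $\vk X$ is closed under scalar multiplication.

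Addition is the only step that needs care, because $+\colon \V X\times\V X\to\V X$ has two variables, while $\kk$-continuity applies to single-variable maps. I would therefore use a two-step argument. First, fix any simple valuation $s\in \vs X$ and consider $f_s=s+\_\colon \V X\to\V X$, which is continuous because $\V X$ is semitopological. Clearly $f_s(\vs X)\subseteq \vs X$, so by $\kk$-continuity $f_s(\vk X)\subseteq \clk \vs X=\vk X$; that is, $s+\mu\in \vk X$ whenever $s\in \vs X$ and $\mu\in \vk X$. Second, fix $\nu\in \vk X$ and consider $g_\nu=\_+\nu\colon \V X\to\V X$, again continuous. The first step tells us $g_\nu(\vs X)\subseteq \vk X$, so $\kk$-continuity yields
\[
g_\nu(\vk X)=g_\nu(\clk \vs X)\subseteq \clk g_\nu(\vs X)\subseteq \clk \vk X=\vk X,
\]
the last equality holding because $\vk X$ is itself a $\kk$-subspace of $\V X$. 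This gives $\mu+\nu\in \vk X$ for all $\mu,\nu\in \vk X$, completing the proof.

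The argument is essentially a straightforward application of Item~4 of Definition~\ref{defk}; I expect no real obstacle beyond remembering to handle addition one variable at a time, and to invoke the fact (established just before the proposition via \cite[Theorem~4.4]{keimel09a} and the sobriety of $\V X$) that $\vk X$ may be identified with $\clk(\vs X)$ computed inside the sober space $\V X$.
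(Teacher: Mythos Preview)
Your proof is correct and follows essentially the same approach as the paper: both invoke Item~4 of Definition~\ref{defk} (using the sobriety of $\V X$) to show that the continuous partial addition and scalar-multiplication maps send $\clk(\vs X)$ into itself, handling addition via the same two-step ``one variable at a time'' argument. The only cosmetic difference is that the paper treats addition first and dismisses scalar multiplication with ``similar reasoning,'' whereas you spell out scalar multiplication explicitly.
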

\begin{proof}
Obviously the neutral valuation $(U\mapsto 0)$ is in $\vk X$. For any $\mu, \nu \in \vk X, r\in \R$, we prove that $\mu+ \nu, r\mu$,  as computed in $\V X$, are in $\vk X$. First, notice that $+\colon \V X\times \V X\to \V X$ is jointly continuous. Hence for each simple valuation $s\in \vs X$, $s+\_\colon\V X\to \V X$ is continuous, then $s'\mapsto s+ s'$ is $\kk$-continuous since $\V X$ is sober. Since $s'\mapsto s+ s'$ maps $\vs X$ into itself, that is, $s + (\vs X) \subseteq \vs X$,  it follows from Item~4  of Definition~\ref{defk} that $s+\clk(\vs X)\subseteq \clk(s+\vs X)\subseteq \clk(\vs X)$. Thus, $\nu\mapsto s+\nu$ maps $\vk X$ into $\vk X$. It follows that for each valuation $\mu\in \vk X$, $s\mapsto \mu+ s$ maps $\vs X$ into $\vk X$. Using the fact that $\mu+\_: \V X\to \V X$ is continuous, hence $\kk$-continuous, we conclude that $\mu+\nu\in \vk X$ for any $\mu, \nu\in \vk X$. Similar reasoning shows $r\mu\in \vk(X)$ for $r\in \R, \mu\in \vk X$, since $s\mapsto r\cdot s$ sends simple valuations to simple valuations for each $r\in\R$. 
\end{proof}

Since $\vk X$ has a canonical cone structure inherited from $\V X$, the following  statements make sense; when referring to the cone structure on $\vk X$, we always mean the cone structure inherited from $\V X$.

\begin{corollary}\label{llofvk}
$\vk X$ is a locally linear topological cone. 
\end{corollary}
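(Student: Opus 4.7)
The plan is to inherit the locally linear topological cone structure from $\V X$ directly, using Proposition~\ref{subcone} together with the fact that $\vk X$ carries the subspace topology from $\V X$. Recall that $\V X$ was shown to be a locally linear topological cone with subbase of half-spaces given by the sets $[U>r]$, $U\in \mathcal OX$, $r\in \mathbb R_+$.

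First, I would verify that $\vk X$ is itself a topological cone. The addition $+\colon \V X\times \V X\to \V X$ and scalar multiplication $\cdot\colon \mathbb R_+\times \V X\to \V X$ are jointly continuous, and by Proposition~\ref{subcone}, these operations restrict to well-defined maps $+\colon \vk X\times \vk X\to \vk X$ and $\cdot\colon \mathbb R_+\times \vk X\to \vk X$. Since the product topology is preserved under subspace embeddings, the restricted operations remain jointly continuous. The $T_0$ property is inherited from $\V X$, so $\vk X$ is a topological cone in its own right.

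Next, I would check local linearity by producing a subbase of $\vk X$ consisting of open half-spaces. The subspace topology on $\vk X$ has subbase $\set{[U>r]\cap \vk X}{U\in \mathcal OX,\, r\in \mathbb R_+}$. For each such set, both $[U>r]$ and its complement in $\V X$ are convex (the former because it is a half-space, the latter because $\V X$ is locally linear). Intersecting with $\vk X$, which is itself convex in $\V X$ by Proposition~\ref{subcone}, yields that both $[U>r]\cap \vk X$ and its relative complement $\vk X\setminus([U>r]\cap \vk X) = (\V X\setminus [U>r])\cap \vk X$ are convex in $\vk X$. Hence each sub-basic open of $\vk X$ is a half-space, so $\vk X$ is locally linear.

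There is no real obstacle here; the result is essentially a soft consequence of the fact that local linearity is preserved by subcones equipped with the subspace topology. The only point one might pause over is confirming that convexity of the subcone $\vk X$ justifies convexity of the intersections, but this is immediate from the definitions.
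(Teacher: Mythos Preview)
Your proof is correct and follows essentially the same route as the paper's: both argue that local linearity passes from $\V X$ to its subcone $\vk X$ equipped with the subspace topology, using Proposition~\ref{subcone}. The paper simply states this in one line, whereas you spell out the details; one small wording quibble is that the convexity of the complement of $[U>r]$ follows directly from $[U>r]$ being a half-space (by definition), not from local linearity of $\V X$ per se.
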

\begin{proof}
Straightforward from the fact that $\V X$ is locally linear ant that $\vk X$ is a subcone of $\V X$.
\end{proof}

\begin{lemma}\label{kpreslinear}
Let $f\colon \vs X\to \vs Y$ be a continuous linear map, then $\kc (f)$ is a continuous linear map from $\vk X$ to $\vk Y$.
\end{lemma}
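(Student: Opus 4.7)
The continuity of $\kc(f)\colon \vk X\to \vk Y$ is immediate from its construction as the unique continuous extension $\overline{\eta_{\vs Y}\circ f}$ given by the universal property of the $\kk$-completion, so the real content is linearity. My strategy is a two-step density argument based on Corollary~\ref{densedetermined} together with Proposition~\ref{subcone}. First, under the identification of $\vs X$ and $\vk X$ as subspaces of $\V X$ (and similarly on the $Y$ side), the defining equation $\kc(f)\circ\eta_{\vs X}=\eta_{\vs Y}\circ f$ says that $\kc(f)$ agrees with $f$ on $\vs X$; hence the linearity of $f$ already yields $\kc(f)(s+s')=\kc(f)(s)+\kc(f)(s')$ and $\kc(f)(rs)=r\,\kc(f)(s)$ for all $s,s'\in\vs X$ and $r\in\R$. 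By Proposition~\ref{subcone}, the cone operations on $\vk Y$ are the restrictions of those on $\V Y$, so they inherit joint continuity; in particular, for each fixed $\nu\in\vk Y$ and each $r\in\R$, the maps $\nu'\mapsto \nu+\nu'$ and $\nu'\mapsto r\nu'$ are continuous on $\vk Y$.

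To extend additivity, I first fix $s\in\vs X$ and compare the two continuous maps $\mu\mapsto\kc(f)(\mu+s)$ and $\mu\mapsto\kc(f)(\mu)+\kc(f)(s)$ from $\vk X$ to $\vk Y$. They agree on $\vs X$ by the simple-valuation case, so Corollary~\ref{densedetermined}, applied to the $\kk$-completion $\vk X=\kc(\vs X)$, forces them to coincide on all of $\vk X$. With this in hand, I fix an arbitrary $\mu\in\vk X$ and run the same argument for the continuous maps $\nu\mapsto\kc(f)(\mu+\nu)$ and $\nu\mapsto\kc(f)(\mu)+\kc(f)(\nu)$: by the step just completed they agree on $\vs X$, hence on $\vk X$. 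Homogeneity is handled by a single application of the same idea, comparing $\mu\mapsto\kc(f)(r\mu)$ with $\mu\mapsto r\,\kc(f)(\mu)$ for each fixed $r\in\R$.

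I do not anticipate a genuine obstacle. The only points that demand attention are the continuity of the auxiliary maps, which is handed to us by Proposition~\ref{subcone} and the topological-cone structure of $\V Y$, and the verification that $\vk Y$ is $T_0$ so that Corollary~\ref{densedetermined} applies; the latter is automatic since $\vk Y$ sits inside the sober space $\V Y$.
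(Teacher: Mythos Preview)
Your proposal is correct and follows essentially the same approach as the paper: a two-step density argument, first freezing one argument in $\vs X$ and extending the other to $\vk X$, then freezing the second argument in $\vk X$ and extending the first, with homogeneity handled analogously. The paper phrases the key step as ``both maps equal $\kc(g)$ for a suitable $g$'' (uniqueness of the extension), while you invoke Corollary~\ref{densedetermined} directly, but these are the same device.
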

\begin{figure}[h]
\centering
\begin{tikzcd}
\vk X \arrow[rr, "\kc(f)"] && \vk Y \\
&&\\
\vs X \arrow[uu, "\eta_{\vs X}"] \arrow[rr, "f"] && \vs Y \arrow[uu, "\eta_{\vs Y}"]
\end{tikzcd}
\end{figure}

\begin{proof}
We denote $\kc(f)$ by $F$ and prove that $F$ is linear. Note that for a fixed simple valuation $s\in\vk X$, the two maps $$F_{1}\colon \mu\mapsto F(s) + F(\mu)\colon \vk X \to \vk Y~\text{and}~F_{2}\colon\mu \mapsto F(s+\mu)\colon \vk X \to \vk Y$$ are equal when $\mu$ is a simple valuation, since $f$ is linear. Hence they are equal, as they both equal $\kc(g)$, where $g\colon \vs X\to \vs Y$ by $g(t) = f(s)+f(t)$ for each simple valuation $t$.  Thus, the maps 
$$\nu \mapsto F(\nu) + F(\mu)\colon \vk X\to \vk Y~\text{and}~\nu \mapsto F(\nu+\mu)\colon \vk X\to \vk Y$$ are equal since they are unique continuous extensions of $(s \mapsto F(\eta_{\vs X}(s))+F(\mu) \colon \vs X \to \vk Y )$ and $(s \mapsto F(\eta_{\vs X}(s)+ \mu) \colon \vs X \to \vk Y)$, respectively. With a similar reasoning we can prove that $F$ is homogeneous.
\end{proof}

\begin{proposition}\label{dagger}
For topological spaces $X$ and $Y$, and for every continuous function $f\colon X\to \vk Y$, the map 
$$ f^\dagger_\kk \colon \mu \mapsto (U\mapsto \int_{x\in X} f(x) (U)d\mu  )\colon \vk X \to \vk Y$$
is well-defined and continuous.
\end{proposition}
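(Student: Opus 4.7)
The plan is to reduce the claim to the corresponding fact already established for the full valuation monad $\V$ and then exploit that $\vk X$ is precisely the $\kk$-closure of $\vs X$ inside $\V X$. Concretely, compose $f\colon X\to\vk Y$ with the subspace embedding $e_{\vk Y}\colon\vk Y\hookrightarrow\V Y$ to obtain a continuous map $\tilde f\colon X\to\V Y$. Since $\V$ is known to be a monad on ${\bf TOP_0}$ with the Kleisli extension given by exactly this Choquet-integral formula, the map
$$\tilde f^\dagger\colon \V X\to\V Y,\qquad \tilde f^\dagger(\mu)(U)=\int_{x\in X}\tilde f(x)(U)\,d\mu,$$
is well-defined and continuous. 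The required $f^\dagger_\kk$ will then be obtained as the co/restriction of $\tilde f^\dagger$ to $\vk X\to\vk Y$, so all that really needs proof is that $\tilde f^\dagger$ sends $\vk X$ into $\vk Y$.

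I would first verify this containment on simple valuations. For $s=\Sigma_{i=1}^n r_i\delta_{x_i}\in\vs X$, the explicit formula for the Choquet integral against a simple valuation recalled in the preliminaries gives
$$\tilde f^\dagger(s)(U)=\sum_{i=1}^n r_i\,\tilde f(x_i)(U)=\Bigl(\sum_{i=1}^n r_i f(x_i)\Bigr)(U),$$
so $\tilde f^\dagger(s)=\sum_{i=1}^n r_i f(x_i)$, the sum being computed in $\V Y$. Each $f(x_i)$ lies in $\vk Y$, and by Proposition~\ref{subcone} $\vk Y$ is a subcone of $\V Y$, so this finite linear combination again belongs to $\vk Y$. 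Thus $\tilde f^\dagger(\vs X)\subseteq\vk Y$.

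Now I lift this from $\vs X$ to $\vk X$ using the $\K$-category axioms. By the identification made just after Figure~\ref{relation}, $\vk X$ is the smallest $\kk$-subspace of $\V X$ containing $\vs X$, that is $\vk X=\clk(\vs X)$ computed inside the sober space $\V X$; similarly $\vk Y=\clk(\vs Y)\subseteq\V Y$ is itself a $\kk$-subspace. Since $\V X$ and $\V Y$ are sober and $\tilde f^\dagger$ is continuous, Item~4 of Definition~\ref{defk} makes $\tilde f^\dagger$ a $\kk$-continuous map, which gives
$$\tilde f^\dagger(\vk X)=\tilde f^\dagger(\clk(\vs X))\subseteq \clk(\tilde f^\dagger(\vs X))\subseteq \clk(\vk Y)=\vk Y.$$
Hence $\tilde f^\dagger$ restricts and corestricts to a map $\vk X\to\vk Y$, which is exactly $f^\dagger_\kk$; continuity is inherited from continuity of $\tilde f^\dagger$ because both $\vk X\subseteq\V X$ and $\vk Y\subseteq\V Y$ carry the subspace topology.

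The only non-routine ingredient is the step that moves information from simple valuations to all of $\vk X$, and the whole point of working inside the ambient sober space $\V X$ is to get this step essentially for free from the $\kk$-continuity clause. I do not anticipate a genuine obstacle, only careful bookkeeping between the three nested spaces $\vs X\hookrightarrow\vk X\hookrightarrow\V X$ and the parallel chain for $Y$; in particular one should note that the definition of $f^\dagger_\kk$ given in the statement is literally the same formula as $\tilde f^\dagger$, so no ambiguity arises when passing between the two.
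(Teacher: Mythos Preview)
Your argument is correct and follows the same overall strategy as the paper: reduce to the known continuity of the extension $f^\dagger\colon \V X\to\V Y$, verify that simple valuations land in $\vk Y$ via the explicit integral formula and Proposition~\ref{subcone}, and then lift this containment from $\vs X$ to all of $\vk X$.

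The one noteworthy difference is in the lifting step. The paper takes a slightly longer route: it uses the universal property of the $\kk$-completion to extend $f^\dagger|_{\vs X}\colon\vs X\to\vk Y$ to a map $F\colon\vk X\to\vk Y$, and then invokes Corollary~\ref{densedetermined} to conclude $F=f^\dagger|_{\vk X}$. You instead apply Item~4 of Definition~\ref{defk} directly---exactly the technique the paper already used in the proof of Proposition~\ref{subcone}---obtaining $\tilde f^\dagger(\clk(\vs X))\subseteq\clk(\tilde f^\dagger(\vs X))\subseteq\clk(\vk Y)=\vk Y$ in one line. Your version is a bit cleaner and avoids the detour through the universal property; the paper's version has the mild advantage of exhibiting $f^\dagger_\kk$ explicitly as the unique $\kk$-extension of $f^\dagger|_{\vs X}$, which fits the running theme of the section. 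Either way the content is the same.
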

\begin{proof}
Since $f^\dagger\colon \V X\to \V Y$ is continuous for any continuous map $f\colon X\to \vk Y$, it is sufficient to show that $f^\dagger$ sends $\vk X$ into $\vk Y$.

We start with showing that $f^\dagger$ sends $\vs X$ to $\vk Y$.
To this end, we let $\mu$ be $\Sigma_{i=1}^n r_i \delta_{x_i}$ and perform the following computation for each open subset $U$ of $Y$:
$$f^\dagger (\mu) (U) = \int_{x\in X} f(x) (U)d\mu = \Sigma_{i=1}^n r_i f(x_i)(U) =   (\Sigma_{i=1}^n r_i f(x_i)) (U).$$
This shows that $f^\dagger (\mu) =  \Sigma_{i=1}^n r_i f(x_i)$.  Since $f(x_i)\in  \vk Y$ for $i=1, ..., n$,  and $\vk Y$ is a subcone of $\V Y$ by Proposition~\ref{subcone}, we have $f^\dagger (\mu)=\Sigma_{i=1}^n r_i f(x_i)$ is in $\vk Y$. 

So we have that $f^\dagger|_{\vs X}$, the restriction of $f^\dagger$ on $\vs X$,  is continuous from $\vs X$ to $\vk Y$. By the universal property of $\kk$-completion we know that $f^\dagger|_{\vs X} $ has a unique extension $F$ from $\vk X$ to $\vk Y$. Since $\vk Y$ is a subspace of $\V Y$,  $F$ is also continuous from $\vk X$ to $\V Y$. We note that $f^\dagger|_{\vk X}$, the restriction of $f^\dagger$ on $\vk X$, is also continuous from $\vk X$ to $\V Y$, and moreover that $F$ and   $f^\dagger|_{\vk X}$  are equal on $\vs X$, so $F$ and  $f^\dagger|_{\vk X}$ are equal on $\vk X$ by Corollary~\ref{densedetermined}. Now we conclude that $f^\dagger$ sends $\vk X$ into $\vk Y$ because $F$ does. 
\end{proof}

\begin{theorem}
\label{vkisamonad}
For each $\K$-category $\kk$, $\vk$ is a monad on  ${\bf TOP_0}$. 
\end{theorem}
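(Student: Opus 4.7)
My plan is to use Manes' description of a monad recalled above. I take the unit at $X$ to be $\uk_X=\eta_{\vs X}\circ\us_X\colon X\to \vk X$, which sends $x\mapsto\delta_x$, and the extension operator to be the map $f\mapsto f^\dagger_\kk$ furnished by Proposition~\ref{dagger}; concretely, $f^\dagger_\kk(\mu)(U)=\int_{x\in X}f(x)(U)\,d\mu$. I then verify the three Manes axioms: (i) $(\uk_X)^\dagger_\kk=\id_{\vk X}$; (ii) $f^\dagger_\kk\circ \uk_X=f$ for every continuous $f\colon X\to \vk Y$; and (iii) $g^\dagger_\kk\circ f^\dagger_\kk=(g^\dagger_\kk\circ f)^\dagger_\kk$ for all continuous $f\colon X\to \vk Y$ and $g\colon Y\to \vk Z$.

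Axioms (ii) and (i) should fall out directly from the explicit formula $f^\dagger_\kk(\Sigma_{i=1}^{n}r_i\delta_{x_i})=\Sigma_{i=1}^{n}r_i f(x_i)$ already isolated in the proof of Proposition~\ref{dagger}. Applied at the single Dirac mass $\delta_x=\uk_X(x)$ this formula gives $f^\dagger_\kk(\uk_X(x))=f(x)$, which is (ii). Taking $f=\uk_X$ shows that $(\uk_X)^\dagger_\kk$ and $\id_{\vk X}$ coincide on the subspace $\vs X$ of $\vk X=\kc(\vs X)$; Corollary~\ref{densedetermined}, applied with $\vs X$ in place of $X$, then forces equality on all of $\vk X$, proving (i).

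The substantive obstacle is axiom (iii). Both $g^\dagger_\kk\circ f^\dagger_\kk$ and $(g^\dagger_\kk\circ f)^\dagger_\kk$ are continuous maps $\vk X\to \vk Z$, so by another appeal to Corollary~\ref{densedetermined} it suffices to prove equality on each simple valuation $\mu=\Sigma_{i=1}^n r_i\delta_{x_i}\in\vs X$. The right-hand side reduces immediately to $\Sigma_{i=1}^{n} r_i\,g^\dagger_\kk(f(x_i))$ by the simple-valuation formula. For the left-hand side, the same formula yields $f^\dagger_\kk(\mu)=\Sigma_{i=1}^{n} r_i f(x_i)$, a finite sum that genuinely lives in $\vk Y$ because $\vk Y$ is a subcone of $\V Y$ by Proposition~\ref{subcone}; I then invoke linearity of Choquet integration in the measure variable, a property already available inside the ambient cone $\V Y$, to distribute $g^\dagger_\kk$ over this sum and recover the same expression $\Sigma_{i=1}^{n} r_i\,g^\dagger_\kk(f(x_i))$. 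This closes (iii), and the multiplication of the new monad at $X$ can then be read off as $\mk_X=(\id_{\vk X})^\dagger_\kk\colon \vk^{2} X\to \vk X$. The real work is thus concentrated in bridging simple-valuation calculations to arbitrary elements of the completion, and the bridge is built from exactly two ingredients: the subcone property of Proposition~\ref{subcone} and the dense-determined property of Corollary~\ref{densedetermined}.
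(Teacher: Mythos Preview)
Your proposal is correct and matches the paper's approach: both use Manes' description with unit $\uk_X\colon x\mapsto\delta_x$, extension operator $f^\dagger_\kk$ from Proposition~\ref{dagger}, and multiplication $\mk_X=(\id_{\vk X})^\dagger_\kk$. The only difference is cosmetic: the paper simply defers the three Manes axioms to \cite[Proposition~2.12]{goubault19} (direct integral identities for $\V$, which restrict verbatim since $f^\dagger_\kk=f^\dagger|_{\vk X}$), whereas you spell out an equally valid verification by reducing to simple valuations via Corollary~\ref{densedetermined} and linearity of the Choquet integral in the measure variable.
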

\begin{proof}
The unit of  $\vk$ sends each point $x\in X$ to the Dirac mass $\delta_x$ in $\vk X$, which we denote by $ {\uk}_{X} \colon X\to \vk X\colon x\mapsto \delta_{x}$. For each continuous function $f\colon X\to \vk Y$, we have an extension $f^\dagger_\kk\colon  \vk X \to \vk Y$ of $f$ guaranteed by Proposition~\ref{dagger}. To verify the monad laws one then follows the proof of \cite[Proposition 2.12]{goubault19}.

An easy consequence is that the multiplication of $\vk$ at $X$, in symbols $\mk_{X}$, is given by $({\id_{\vk X}})_{\kk}^{\dagger}$, as defined in Proposition~\ref{dagger}. 
\end{proof}

\begin{remark}
When $\kk$ is the category of sober spaces and continuous functions, the functor $\vk$ was shown to coincide with the \emph{point-continuous valuation} functor $\mathcal V_{\mathrm p}$ by Heckmann \cite{heckmann96}; that $\mathcal V_{\mathrm p}$ is a monad was shown in~\cite{goubault19}.
\end{remark}

\begin{remark}
We summarise different monads and their respective units and multiplications that we have so far.

\begin{center}
\begin{tabular}{ |c|c|c| } 
 \hline
monads & unit & multiplication \\ 
  \hline
 $\V$ & $ \delta$ & $\beta$ \\ 
  \hline
 $\vs$ & $\us$ & $\ms$ \\ 
 \hline
  $\kc$ & $\eta$ & $\mm$ \\ 
   \hline
     $\vk=\kc\circ \vs$ & $\uk$ & $\mk$ \\ 
 \hline
\end{tabular}
\end{center}
\end{remark}

In order to show that $\vk$ is a monad over ${\bf TOP_0}$, one can alternatively show that there exists a distributive law of $\vs$ over $\kc$. For this, we recall the definition of distributive laws between monads and a result due to Heckmann. 

\begin{definition}[Distributive law]
\label{distrilaw}
Let $(T_{1}, \eta_{1}, m_{1}), (T_{2}, \eta_{2}, m_{2})$ be monads over category $\bf C$. A distributive law of $T_{1}$ over $T_{2}$ is a natural transformation $\lambda\colon T_{1}T_{2}\to T_{2}T_{1}$ such that the diagrams in Figure~\ref{distributivelaws} commute. 
\begin{figure}[h]
\centering
\begin{tikzcd}
  & T_{1} \arrow[dl, "T_{1}\eta_{2}"'] \arrow[dr, "\eta_{2}T_{1}"]  & \\
T_{1}T_{2} \arrow[rr, "\lambda"] &  & T_{2}T_{1}\\
\end{tikzcd}
\begin{tikzcd}
  & T_{2} \arrow[dl, "\eta_{1}T_{2}"'] \arrow[dr, "T_{2}\eta_{1}"]  & \\
T_{1}T_{2} \arrow[rr, "\lambda"] &  & T_{2}T_{1}\\
\end{tikzcd}
\begin{tikzcd}
 T_{1}^{2}T_{2} \arrow[r, "T_{1}\lambda"]  \arrow[d, "m_{1}T_{2}"']   & T_{1}T_{2}T_{1} \arrow[r, "\lambda T_{1}"]  & T_{2}T_{1}^{2} \arrow[d, "T_{2}m_{1}"]  \\
T_{1}T_{2} \arrow[rr, "\lambda"] &  & T_{2}T_{1}\\
\end{tikzcd}
\begin{tikzcd}
 T_{1}T_{2}^{2} \arrow[r, "\lambda T_{2}"]  \arrow[d, "T_{1}m_{2}"']   & T_{2}T_{1}T_{2} \arrow[r, "T_{2}\lambda"]  & T_{2}^{2}T_{1} \arrow[d, "m_{2}T_{1}"]  \\
T_{1}T_{2} \arrow[rr, "\lambda"] &  & T_{2}T_{1}
\end{tikzcd}
\caption{\label{distributivelaws}}
\end{figure}
\end{definition}

\begin{theorem}{\rm \cite[Theorem 6.7]{heckmann96}}
\label{theoremfreeeofvf}
$\vs X$ is the free weakly locally convex cone over $X$ in ${\bf TOP_{0}}$. This means: for a $T_{0}$ topological space $X$, $\vs X$ is itself a weakly locally convex cone, and for every continuous function $f\colon X\to M$ from $X$ to a weakly locally convex cone $M$, there is a unique continuous linear function $\overline f \colon \vs X\to M$ with $\overline f\circ  \us_{X} = f$.  
\end{theorem}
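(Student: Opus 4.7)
The plan is to split the statement and dispatch the easy half first. The assertion that $\vs X$ itself is weakly locally convex is essentially immediate from what has already been recorded: $\vs X$ sits inside $\V X$ as a subcone with the subspace topology, and $\V X$ is locally linear because each sub-basic open $[U>r]$ is a half-space; local linearity passes to subcones equipped with the subspace topology (intersections of half-spaces with a subcone are again half-spaces of that subcone), and local linearity implies weak local convexity.

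For the universal property I take as the candidate the map
\[
\overline f\bigl(\Sigma_{i=1}^{n} r_i \delta_{x_i}\bigr) := \Sigma_{i=1}^{n} r_i f(x_i).
\]
Well-definedness is an observation about simple valuations on a $T_0$ space: two formal expressions $\Sigma r_i \delta_{x_i}$ and $\Sigma s_j \delta_{y_j}$ represent the same element of $\vs X$ only because of the cone identities (commutativity and associativity of $+$, the rule $r\delta_x + s\delta_x = (r+s)\delta_x$, $0\cdot\delta_x = 0$, and suppression of zero terms), and every one of these identities is respected by the cone $M$. Linearity and the equality $\overline f \circ \us_X = f$ are then read off the definition, and uniqueness is forced because any continuous linear extension is determined on Dirac masses by $f$ and, by linearity, on all simple valuations.

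The principal obstacle is continuity of $\overline f$. Given $\mu = \Sigma_{i=1}^n r_i \delta_{x_i}$ and an open neighbourhood $V$ of $\overline f(\mu)$ in $M$, I would first use weak local convexity of $M$ to pick a convex neighbourhood $C$ of $\overline f(\mu)$ with $C \subseteq V$. Joint continuity of $+$ and $\cdot$ in $M$ then yields open neighbourhoods $W_i$ of $f(x_i)$ in $M$ and open neighbourhoods $I_i$ of $r_i$ in $\R$ such that $\Sigma_i s_i a_i \in C$ whenever $s_i \in I_i$ and $a_i \in W_i$. Setting $U_i = f^{-1}(W_i)$ and choosing $\epsilon_i > 0$ with $r_i - \epsilon_i \in I_i$, my candidate open neighbourhood of $\mu$ in $\vs X$ is $O = \bigcap_i [U_i > r_i - \epsilon_i]$; I would then try to show that for $\mu' \in O$ the image $\overline f(\mu')$ can be rewritten as a convex combination of expressions of the form $\Sigma_i s_i a_i$ already known to lie in $C$, whence $\overline f(\mu') \in C \subseteq V$ by convexity of $C$.

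The hard part is precisely the rewriting step. When the $x_i$'s are comparable in the specialisation order of $X$ the open sets $U_i$ must overlap, so a point $y_j$ in the support of a nearby $\mu'$ can belong to several of the $U_i$'s at once, and the ``mass allocation'' that exhibits $\overline f(\mu')$ as a convex combination demands careful bookkeeping with no classical Hausdorff analogue. I would expect to handle this either by induction on an antichain decomposition of the support of $\mu$ (refining $O$ at each stage to pin down the mass contributed at each level), or, more cleanly, to recast the problem as continuity of the barycentre map $\alpha_M \colon \vs M \to M$ and to write $\overline f = \alpha_M \circ \vs f$, concentrating the entire non-Hausdorff difficulty into a single continuity statement about $\alpha_M$ where the points carrying mass already live in $M$ and convexity can be applied directly.
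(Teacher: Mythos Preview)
The paper does not supply its own proof of this theorem: it is quoted verbatim from Heckmann \cite[Theorem~6.7]{heckmann96} and used as a black box. So there is no in-paper argument to compare against; your sketch should be measured against Heckmann's original proof.

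Your outline is the right shape, and you have correctly located the only genuine difficulty: the ``rewriting step'' that expresses $\overline f(\mu')$ as a convex combination of points already known to lie in the convex neighbourhood $C$. But the proposal stops short of actually carrying this out, and neither of the two suggested workarounds closes the gap. The first (induction on an antichain decomposition of the support) is vague as stated and does not by itself produce the required transport plan. The second---writing $\overline f = \alpha_M \circ \vs f$ and appealing to continuity of the barycentre map $\alpha_M$---is circular in the present context: the continuity of $\alpha_M$ for a general weakly locally convex cone is precisely the content of (one direction of) Theorem~\ref{theoremvf}, and in \cite{goubault19} that continuity is obtained \emph{from} Heckmann's freeness theorem, not independently of it.

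What Heckmann actually does for the missing step is a combinatorial mass-allocation argument (his Splitting Lemma): given $\mu' = \Sigma_j s_j \delta_{y_j}$ in the basic open $O$, one constructs nonnegative reals $t_{ij}$ with $\Sigma_j t_{ij} > r_i - \epsilon_i$, $\Sigma_i t_{ij} \le s_j$, and $t_{ij} > 0$ only when $y_j \in U_i$; this is a finite bipartite flow problem solved by Hall-type reasoning. With the $t_{ij}$ in hand, $\overline f(\mu')$ decomposes as a convex combination of vectors of the form $\Sigma_i s_i' a_i'$ with $s_i' \in I_i$ and $a_i' \in W_i$, and convexity of $C$ finishes the argument. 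Your sketch would become a proof once this lemma (or an equivalent transport argument) is supplied.
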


Theorem~\ref{theoremfreeeofvf} and Proposition~\ref{subcone} allow us to define a distributive law of $\vs$ over $\kc$. Notice from Corollary~\ref{llofvk} that for each space $X$, $\vk X$ is a locally linear cone, in particular, a weakly locally convex cone. Then by Theorem~\ref{theoremfreeeofvf} the continuous map $\kc(\us_{X})\colon \kc(X) \to \vk X$ has a unique continuous extension 
$\overline{\kc(\us_{X})}\colon \vs \kc (X)\to \vk X$ that is linear, and that satisfies  $\overline{\kc(\us_{X})} \circ \us_{\kc(X)} = \kc(\us_{X})$ (See Figure~\ref{existenceoflambda}).  
\begin{figure}[h]
\centering
\begin{tikzcd}
\vs \kc(X) \arrow[dr, "\overline{ \kc(\us_{X})} "] &  \\
\kc(X) \arrow[u, "\us_{\kc(X)} "] \arrow[r, "\kc(\us_{X})"'] & \vk X \\
X \arrow[u, "\eta_{X}"]  \arrow[r, "\us_X"'] & \vs (X) \arrow[u, "\eta_{\vs X}"'] 
\end{tikzcd}
\caption{\label{existenceoflambda}}
\end{figure}
We claim that the collection of morphisms $\overline{ \kc(\delta_{X})}$ indexed by $X$ is the (unique) distributive law of $\vs$ over~$\kc$.

\begin{remark}
One may immediately notice that it would be more involved to show that there is a distributive law of $\vs$ over $\kc$, and then use that to prove $\vk =  \kk_C\circ \vs$ defines a monad, instead of appealing to Theorem~\ref{vkisamonad}. However, by doing so, we obtain additional information that is useful in Section~\ref{extensionsection}.
\end{remark}

\begin{theorem}
\label{existenceofdis}
The family $\overline{\kc(\us_{X})}\colon \vs\kc(X) \to \kc (\vs X)$, for $X$ in $\bf TOP_0$, of continuous linear maps form the (unique) distributive law of $\vs$ over $\kc$.
\end{theorem}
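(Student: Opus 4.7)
The plan is to verify naturality of $\lambda$ together with the four coherence diagrams of Definition~\ref{distrilaw}, where $T_{1}=\vs$, $T_{2}=\kc$, and $\lambda_{X}=\overline{\kc(\us_{X})}$. The uniform strategy rests on Theorem~\ref{theoremfreeeofvf}: for any space $Y$ and any weakly locally convex cone $M$, a continuous linear map $\vs Y\to M$ is uniquely determined by its values on the Dirac masses $\us_{Y}(Y)$. Since $\kc\vs X=\vk X$ is even locally linear by Corollary~\ref{llofvk}, once both sides of an axiom are known to be continuous \emph{linear} maps into $\vk X$ (or an obvious iterated variant), the equation reduces to a pointwise check on the image of the appropriate $\us$.

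The second coherence square, $\lambda_{X}\circ \us_{\kc(X)} = \kc(\us_{X})$, is built into the construction of $\lambda_{X}$. The remaining three squares and the naturality square of $\lambda$ all yield to the same pattern. For the third square, evaluating on a Dirac mass $\delta_{\mu}\in \vs^{2}\kc(X)$ gives
\[
\lambda_{X}(\ms_{\kc(X)}(\delta_{\mu})) = \lambda_{X}(\mu)
\]
on the left, while the right-hand side reduces using the second coherence applied to $\lambda_{\vs X}$ and the unit-multiplication law $\ms_{X}\circ\us_{\vs X}=\id_{\vs X}$ to
\[
\kc(\ms_{X})(\kc(\us_{\vs X})(\lambda_{X}(\mu))) = \kc(\ms_{X}\circ\us_{\vs X})(\lambda_{X}(\mu)) = \lambda_{X}(\mu).
\]
The fourth square is analogous: on $\delta_{a}$ for $a\in\kc^{2}X$, two applications of the second coherence peel off $\lambda_{\kc X}$ and $\lambda_{X}$, and the naturality square of $\mm$ against $\kc^{2}(\us_{X})$ closes the calculation. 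The first square and the naturality of $\lambda$ reduce to the naturality of $\eta$ and of $\us$, respectively.

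Linearity of the two sides of each equation is the most delicate step. It uses repeatedly: (i) $\lambda_{X}$ is linear by construction; (ii) Lemma~\ref{kpreslinear} promotes linearity of a simple-valuation morphism to its $\kc$-extension; (iii) $\vs(f)$ is linear for every continuous $f$, by functoriality of $\vs$ on the cone structure inherited from $\V$; and (iv) $\ms_{X}$ is linear, as is $\mm_{\vs X}$, the latter because the idempotency of $\kc$ makes $\mm_{\vs X}$ the inverse of the cone homeomorphism $\eta_{\vk X}$. A mild extension of Lemma~\ref{kpreslinear} is needed for $\kc(\lambda_{X})$, since $\lambda_{X}$ does not have a pure $\vs$-to-$\vs$ signature, but the argument of that lemma carries over verbatim once one observes that $\vs\kc(X)$ sits as a subcone of $\V\kc(X)$ by Proposition~\ref{subcone}.

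Uniqueness follows by the same mechanism: any other distributive law $\lambda'$ must satisfy the second coherence, hence extends $\kc(\us_{X})$ along $\us_{\kc(X)}$; provided $\lambda'_{X}$ is continuous and linear (which the four coherence axioms force via the arguments above), Theorem~\ref{theoremfreeeofvf} yields $\lambda'_{X}=\lambda_{X}$. The main anticipated obstacle is thus not the Dirac-mass evaluations, which are mechanical applications of the unit and multiplication laws, but rather the bookkeeping required to certify that every composite appearing in the four squares is genuinely linear with the correct domain and codomain, so that the appeal to freeness is legitimate.
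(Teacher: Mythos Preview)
Your proposal follows the same strategy as the paper's proof: verify naturality and the four coherence squares by reducing, via Theorem~\ref{theoremfreeeofvf}, to equality on Dirac masses once both sides are known to be continuous linear maps into a weakly locally convex cone. Your handling of the third square is equivalent to the paper's, which routes the computation through an auxiliary barycentre map $h_X\colon \vs\vk X\to \vk X$ but arrives at the same Dirac-level identity. For the fourth square, however, the paper takes a shortcut that sidesteps exactly the linearity bookkeeping you flag as the main obstacle: rather than certifying that $\mm_{\vs X}\circ\kc(\lambda_X)\circ\lambda_{\kc X}$ is linear (which, as you note, requires an extension of Lemma~\ref{kpreslinear} and a cone structure on $\kc(\vk X)$), it uses idempotency of $\kc$ to replace $\vs\mm_X$ and $\mm_{\vs X}$ by their inverses $\vs\kc(\eta_X)$ and $\kc(\eta_{\vs X})$, after which the square collapses to the already-established naturality of $\lambda$ together with the first coherence~($\bullet$). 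One caution: your parenthetical claim that the coherence axioms \emph{force} linearity of an arbitrary distributive law $\lambda'$ is not supported by your preceding arguments, which run in the opposite direction (linearity $\Rightarrow$ determined on Diracs $\Rightarrow$ axioms hold); the paper is equally terse on uniqueness, invoking only the uniqueness of $\overline{\kc(\us_X)}$ as a linear extension.
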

\begin{proof}
We first show that the maps $\overline{\kc(\us_{X})}$ form a natural transformation from $\vs\kc$ to $\kc\vs$. That is, for each continuous map $f\colon X\to Y$, the equation 
\begin{equation}
 \overline{ \kc(\us_{Y})} \circ \vs\kc(f) = \kc(\vs f)  \circ \overline{ \kc(\us_{X})}. \tag{$\star$}
 \end{equation}
holds.
\[
\begin{tikzcd}
\vs\kc(Y) \arrow[rr, "\overline{ \kc(\us_{Y})} "] &  & \kc(\vs Y)  \\
& \kc(Y) \arrow[ul, "\us_{\kc(Y)}"] \arrow[ur, "{\kc(\us_{Y})}"'] &\\
& \kc(X) \arrow[dl, "\us_{\kc(X)}"']  \arrow[u, "\kc(f)"] \arrow[dr, "\kc(\us_X)"]  & \\
 \vs\kc(X)  \arrow[uuu, "\vs{\kc(f)}"]  \arrow[rr, "\overline{ \kc(\us_{X})}"'] & & \kc(\vs X) \arrow[uuu, "\kc(\vs f)"']
\end{tikzcd}
\]
To this end we perform the following: 
\begin{align*}
&\overline{ \kc(\us_{Y})} \circ \vs\kc(f) \circ \us_{\kc (X)} \\ 
& =   \overline{ \kc(\us_{Y})}\circ \us_{\kc(Y)}\circ \kc(f)  & \text{naturality of $\us$  }\\
& =  \kc(\us_{Y})\circ \kc(f) & \text{definition of $\overline{ \kc(\us_{Y})}$}\\
& =  \kc(\vs f) \circ \kc(\us_{X})& \text{naturality of $\us$ and functoriality of $ \kc $}\\
& = \kc(\vs f)  \circ \overline{ \kc(\us_{X})}  \circ \us_{\kc X}. &\text{definition of $\overline{ \kc(\us_{X})}$ }
\end{align*}
Notice that both sides of the Equation~$(\star)$ are continuous linear maps by Lemma~\ref{kpreslinear} and the fact that $\vs f$ is linear for all continuous~$f$, by the freeness of  
$\vs\kc(X)$ over $\kc(X)$ (Theorem~\ref{theoremfreeeofvf}) we conclude that Equation~$(\star)$ holds. For each $X$ we denote $\overline{ \kc(\us_{X})}$ by $\lambda_{X}$, and hence $\lambda$ is indeed a natural transformation from $\vs\kc$ to $\kc \vs$.
\[
\begin{tikzcd}
& &  \vs X \arrow[ddll, "\vs\eta_{X}"'] \arrow[dd, "\eta_{\vs X}"] &  & \\
& & & & \\
\vs \kc(X) \arrow[rr, "\lambda_{X}" description] & & \kc(\vs X) & & X\arrow[uull, "\us_{X}"'] \arrow[ddll, "\eta_{X}"]\\
& & & & \\
& & \kc(X)\arrow[uull, "\us_{\kc(X)}"] \arrow[uu, "\kc(\us_{X})"'] & &
\end{tikzcd}
\]

To verify the commutativity of the first digram of Definition~\ref{distrilaw} in our setting, we show that 
$ \lambda_{X}\circ  \vs\eta_{X}= \eta_{\vs X}$. We notice that
\begin{align*}
&\lambda_{X} \circ  \vs\eta_{X} \circ \us_{X}  \\ 
& = \lambda_{X} \circ \us_{\kc(X)}\circ \eta_{X} &\text{ naturality of $\us$  }  \\
& =    \kc(\us_{X})   \circ \eta_{X}      & \text{definition of $\lambda_{X}$}\\
& = \eta_{\vs X} \circ \us_{X} & \text{definition of $ \kc $}
\end{align*}
and use the fact that $ \lambda_{X}\circ  \vs\eta_{X}$ and $ \eta_{\vs X}$ are continuous linear maps and  freeness of $\vs(X)$ over $X$ to conclude that  
\begin{equation} 
 \lambda_{X}\circ  \vs\eta_{X} = \eta_{\vs X} \tag{$\bullet$}.
\end{equation}
The definition of~$\lambda$ trivially implies the commutativity of the second digram in our setting, and the uniqueness of $\overline{ \kc(\us_{X})}$ guarantees the uniqueness of such a distributive law $\lambda$. 

We now verify the commutativity of the third digram in Figure~\ref{distributivelaws} in our setting. That is, we need to prove that
 \begin{equation}
 \kc \ms \circ \lambda_{\vs}\circ \vs \lambda = \lambda\circ \ms_{\kc}.\tag{$\ast$}
 \end{equation}
To this end, we notice that $\vk X$ is a weakly locally convex cone by Corollary~\ref{llofvk}, and it then follows from  Theorem~\ref{theoremfreeeofvf}  that  there exists a unique continuous linear map $h_{X}\colon \vs\kc(\vs X)\to \vk X$  such that $h_{X}\circ \us_{\vk X} = \id_{\vk X}$. We claim that 
\begin{equation}
\lambda_{X}\circ \ms_{\kc(X)} = h_{X}\circ \vs\lambda_{X} \tag{$1$}
\end{equation}
as illustrated in the following:

\[
\begin{tikzcd}
\vs^{2} \kc \arrow[rr, "\vs \lambda"] \arrow[ddd, "\ms_{\kc}"']  & & \vs\kc\vs \arrow[rr,  "\lambda_{\vs}"]    \arrow[ddd, bend right = 25 , "h" description]  & & \kc\vs^{2} \arrow[ddd, "\kc\ms"]  \\
 &&&&\\
 &&&&\\ 
\vs\kc \arrow[rr, "\lambda"]  && \kc\vs \arrow[rr, "\id_{\kc\vs}", equal] \arrow[uuurr, "\kc\us_{\vs} " description]   \arrow[uuu, bend right = 25, "\us_{\kc\vs}"' description ]  && \kc\vs
\end{tikzcd}
\]
 For this, we take any simple valuation $\Sigma_{i=1}^{n}r_{i}\delta_{\mu_{i}}$ in $\vs^{2}\kc(X)$, where for each $i$, $\mu_{i}$ is some simple valuation on $\kc(X)$, and perform the following: 
\begin{align*}
& h_{X}\circ \vs\lambda_{X}(\Sigma_{i=1}^{n}r_{i}\delta_{\mu_{i}})\\
& =h_{X}(\Sigma_{i=1}^{n}r_{i}\delta_{\lambda_{X}(\mu_{i}) }  ) & \text{actions of $\vs\lambda_{X}$} \\
& = \Sigma_{i=1}^{n}r_{i} h_{X}(\delta_{\lambda_{X}(\mu_{i})}) & \text{$h_{X}$ is linear}\\
& = \Sigma_{i=1}^{n}r_{i} \lambda_{X}( \mu_{i}) &  h_{X}\circ \us_{\vk X} = \id_{\vk X}\\
& = \lambda_{X}(\Sigma_{i=1}^{n} r_{i}\mu_{i}  )  &\text{ $\lambda_{X}$ is linear }  \\
& = \lambda_{X}\circ \ms_{\kc(X)} (\Sigma_{i=1}^{n}r_{i}\delta_{\mu_{i}})& \text{definition of $\ms_{\kc(X)}$}
\end{align*}
Next, we show that 
\begin{equation}
\kc(\ms_{X}) \circ\lambda_{\vs X } =  h_{X}. \tag{$2$}
\end{equation} 
By Lemma~\ref{kpreslinear} both sides of this equation are continuous linear maps, so we only need to show that 
$$\kc(\ms_{X}) \circ\lambda_{\vs X}\circ \us_{\vk X} =  h_{X}\circ \us_{\vk X}$$ 
and conclude with Theorem~\ref{theoremfreeeofvf}. Indeed, the right side of the equation is just $\id_{\vk X}$ by the definition of $h_{X}$. For the left side, one notices that $\kc(\ms_{X}) \circ\lambda_{\vs X}\circ \us_{\vk X} =\kc(\ms_{X})\circ \kc(\us_{\vs X})$ by definition of $\lambda_{\vs X}$. But $\kk_C(\ms_X)\circ\kk_C(\us_{\vs X}) = \kk_C(\ms_X\circ\us_{\vs X})$ as $\kk_C$ is a functor,  and  $\ms_X\circ\us_{\vs X} = \id_{\vs X}$, so $\kc(\ms_{X}\circ \us_{\vk X}) = \kc(\id_{\vs X}) = \id_{\vk X}$, again since $\kc$ is a functor. Then Equation~$(\ast)$ follows directly from Equations~$(1)$ and $(2)$.

\[
\begin{tikzcd}
\vs \kc^{2} \arrow[rr, "\lambda_{\kc}"] \arrow[ddd, "\vs{\mm}", bend left = 20] & & \kc\vs\kc \arrow[rr,  "\kc \lambda"]   & & \kc^{2}\vs \arrow[ddd, "\mm_{\vs}", bend left =20 ]  \\
 &&&&\\
 &&&&\\ 
\vs\kc \arrow[rr, "\lambda"] \arrow[uuu, "\vs{\kc \eta}", bend left =20]   && \kc\vs \arrow[rr, "\id_{\kc\vs}", equal] \arrow[uuu, "\kc\vs\eta"']    && \kc\vs \arrow[uuu, "{\kc\eta_{\vs}}", bend left =20]
\end{tikzcd}
\]

Finally, to show the commutativity of the last diagram of the distributive law in our setting, we first notice that $\vs\kc\eta$ is inverse to $\vs\mm$ and that $\kc\eta_{\vs} $ is inverse to $\mm_{\vs}$ since $\kc$ is idempotent. Then we only need to show that 
\begin{equation}
\kc\lambda\circ \lambda_{\kc}\circ\vs\kc\eta = \kc\eta_{\vs}\circ \lambda. \tag{$**$}
\end{equation} To show Equation~$(**)$ holds, it suffices to prove that $\lambda_{\kc} \circ \vs\kc\eta = \kc\vs\eta \circ \lambda$ and that $\kc\lambda \circ\kc\vs\eta =\kc\eta_{\vs}$. 
These two equations hold trivially; the first one follows from the naturality of~$\lambda$, and the second follows from Equation~($\bullet$), which we proved above, and the functoriality of~$\kc$.
\end{proof}

The following is then a straightforward application of Beck's theorem of composing monads~\cite{beck69}, knowing that $\lambda$ is a distributive law of $\vs$ over~$\kk_{\cc}$. 

\begin{corollary}
The functor $\vk = \kc\circ \vs$ is a monad whose unit  is $\eta_{\vs}\us$, and whose multiplication is $\mm_{\vs}\circ \kc^{2}\ms \circ \kc\lambda_{\vs} = \kc\ms \circ \mm_{\vs^{2}} \circ \kc\lambda_{\vs}$. \hfill $\Box$
\end{corollary}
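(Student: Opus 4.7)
The plan is to apply Beck's distributive-law theorem for composing monads \cite{beck69} directly. Theorem~\ref{existenceofdis} has just established that $\lambda\colon \vs\kc\to \kc\vs$, with components $\lambda_X=\overline{\kc(\us_X)}$, is a distributive law of the monad $(\vs,\us,\ms)$ over the monad $(\kc,\eta,\mm)$. Beck's theorem then immediately packages the composite $\kc\circ \vs$ as a monad whose unit is the ``stacked'' unit $\eta_{\vs}\circ \us$ and whose multiplication is obtained by first straightening the inner $\vs\kc$ into $\kc\vs$ via $\lambda$ and then applying the two given multiplications. The unit is as claimed because, at a space $X$, it is by definition $\eta_{\vs X}\circ \us_X\colon X\to \vs X\to \kc(\vs X)$.

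Concretely, Beck's formula for the multiplication $\mk_X$ is the composite
\[
\kc\vs\kc\vs(X)\xrightarrow{\kc\lambda_{\vs X}} \kc^{2}\vs^{2}(X)\xrightarrow{\mm_{\vs^{2}(X)}} \kc\vs^{2}(X)\xrightarrow{\kc\ms_{X}} \kc\vs(X),
\]
i.e., $\kc\ms\circ \mm_{\vs^{2}}\circ \kc\lambda_{\vs}$. The alternative presentation $\mm_{\vs}\circ \kc^{2}\ms\circ \kc\lambda_{\vs}$ is obtained by swapping the order in which $\mm$ and $\kc\ms$ are applied; the equality of the two reduces to the naturality square of $\mm\colon \kc^{2}\to \kc$ applied to the morphism $\ms_{X}\colon \vs^{2}(X)\to \vs(X)$, which gives $\mm_{\vs X}\circ \kc^{2}\ms_{X}=\kc\ms_{X}\circ \mm_{\vs^{2}(X)}$, and then precomposing with $\kc\lambda_{\vs X}$.

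There is no real obstacle here: the four commutative diagrams required by Beck's theorem are exactly the four verified in Definition~\ref{distrilaw} during the proof of Theorem~\ref{existenceofdis}, and the monad axioms of the composite $\vk$ are then an automatic consequence, together with the associativity/unit laws of $\vs$ and $\kc$ taken separately. This recovers the monad of Theorem~\ref{vkisamonad} (with unit $\uk=\eta_{\vs}\us$ as promised there), while now providing an explicit formula for the multiplication $\mk$ in terms of the distributive law $\lambda$.
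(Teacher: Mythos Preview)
Your proposal is correct and takes essentially the same approach as the paper, which simply invokes Beck's theorem on composing monads given the distributive law $\lambda$ from Theorem~\ref{existenceofdis}. You have in fact supplied more detail than the paper does (which gives no proof beyond the citation), in particular by spelling out that the equality of the two displayed forms of the multiplication follows from the naturality square of $\mm$ at $\ms_X$.
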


\section{The Eilenberg-Moore algebras of $\vk$}

Recall that an \emph{(Eilenberg-Moore) algebra of a monad} $\mathcal T$ over category ${\bf C}$ is a pair $(A, \alpha)$, where $A$ is an object in ${\bf C}$ and $\alpha_A\colon \mathcal T(A)\to A$ is a morphism of~${\bf C}$, called the \emph{structure map}, such that $\alpha_A \circ \eta_A=\id_A$ and $\alpha_A\circ\mu_A=\alpha_A\circ \mathcal T\alpha_A$. We often simply call such a pair $(A, \alpha_A)$ a $\mathcal T$-algebra. Given $\mathcal T$-algebras $(A, \alpha_A)$ and  $(B, \alpha_B)$, a morphism $f\colon A\to B$ of~${\bf C}$ is a \emph{$\mathcal T$-algebra morphism} if $f\circ \alpha_A=\mathcal Tf\circ \alpha_B$. 

In order to summarise what the $\vs$-algebras and $\vs$-algebra morphisms are, we recall the definition of a barycentre of a valuation.

\begin{definition}
  \label{defn:bary:choquet}
  Let $C$ be a semitopological cone, and $\nu$ be a continuous
  valuation on $C$.  A \emph{barycentre} of $\nu$ is any point
  $b_\nu \in C$ such that, for every linear lower semicontinuous map
  $\Lambda \colon C \to \real$,
  $\Lambda (b_\nu) = \int \Lambda ~d \nu$.
\end{definition}

\begin{example}
Let $X$ be a semitopological cone and $\Sigma_{i=1}^n r_i\delta_{x_i}$ be a simple valuation on $X$. Then $\Sigma_{i=1}^n r_i {x_i}$ is a barycentre of $\Sigma_{i=1}^n r_i\delta_{x_i}$. 
\end{example}

\begin{theorem}{\rm \cite[Theorem 5.9, Theorem 5.10]{goubault19}}
\label{theoremvf}
Let $X$ be a $T_0$ space, and $\alpha_X: \vs X\to X$ be a continuous map.
\begin{enumerate}
\item 
If $(X, \alpha_X)$ is a $\vs$-algebra, then $X$ is a weakly locally convex topological cone: For $x, y\in X$ and $r\in \mathbb R_+$, $x+y$ is defined as $\alpha_X(\delta_x+\delta_y)$ and $r\cdot x$ is defined as $\alpha_X(r\delta_x)$. The map $\alpha_X$ is then the barycentre map which sends each simple valuation $\Sigma_{i=1}^n r_i \delta_{x_i}$  to its barycentre $\Sigma_{i=1}^n r_i{x_i}$.
\item
Conversely, for every weakly locally convex topological cone~$C$, there exists a (unique) continuous linear map~$\alpha_C$ from $\vs C$ to $C$, sending each simple valuation $\Sigma_{i=1}^n r_i \delta_{x_i}$  to its barycentre $\Sigma_{i=1}^n r_i{x_i}$, and the pair $(C, \alpha_C)$ is a $\vs$-algebra. 
\item
Moreover, $\vs$-morphisms are precisely the continuous linear maps between $\vs$-algebras. 
\end{enumerate}
\end{theorem}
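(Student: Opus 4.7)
The plan is to prove the three parts of Theorem~\ref{theoremvf} in turn, using the freeness of $\vs X$ as a weakly locally convex topological cone (Theorem~\ref{theoremfreeeofvf}) as the main engine for existence and uniqueness arguments, together with the retraction lemma (Lemma~\ref{retractionofweaklylc}).

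For Part~1, given a $\vs$-algebra $(X, \alpha_X)$, I would define $x+y := \alpha_X(\delta_x + \delta_y)$, $r\cdot x := \alpha_X(r\delta_x)$, and take $0 := \alpha_X(0_{\vs X})$ (the image of the zero simple valuation). The cone axioms follow from the corresponding identities in $\vs X$ together with the algebra laws $\alpha_X \circ \us_X = \id_X$ and $\alpha_X\circ\ms_X = \alpha_X\circ\vs\alpha_X$. For instance, associativity $(x+y)+z = x+(y+z)$ reduces, via the multiplication law applied to $\delta_{\alpha_X(\delta_x+\delta_y)}+\delta_z$, to $\alpha_X(\delta_x+\delta_y+\delta_z)$, which is symmetric in $x,y,z$; commutativity is immediate since $\delta_x+\delta_y = \delta_y+\delta_x$ in $\vs X$. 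Joint continuity of $+$ and separate continuity of $\cdot$ follow from the analogous properties of $\vs X$ composed with the continuous $\alpha_X$. Finally, $X$ is weakly locally convex because $\us_X$ is a section of the linear map $\alpha_X$, so $X$ is a linear retract of $\vs X$, and $\vs X$ is locally linear (hence weakly locally convex) as a subcone of $\V X$; Lemma~\ref{retractionofweaklylc} then applies.

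For Part~2, given a weakly locally convex topological cone $C$, Theorem~\ref{theoremfreeeofvf} applied to $\id_C \colon C\to C$ produces a unique continuous linear $\alpha_C\colon \vs C \to C$ with $\alpha_C\circ \us_C = \id_C$; linearity then forces $\alpha_C(\Sigma_{i=1}^n r_i \delta_{x_i}) = \Sigma_{i=1}^n r_i x_i$, i.e.\ $\alpha_C$ is the barycentre map. The first algebra law is built in. For the second law, I would argue that both $\alpha_C \circ \ms_C$ and $\alpha_C\circ \vs\alpha_C$ are continuous linear maps from $\vs^2 C$ to $C$; precomposing each with $\us_{\vs C}$ gives $\alpha_C$ in both cases, using the unit law $\ms_C\circ\us_{\vs C}=\id_{\vs C}$ for the first and naturality $\vs\alpha_C\circ\us_{\vs C} = \us_C\circ\alpha_C$ together with $\alpha_C\circ\us_C=\id_C$ for the second. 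Freeness of $\vs(\vs C)$ over the weakly locally convex cone $\vs C$ (which is weakly locally convex as a subcone of the locally linear $\V C$) then forces the two maps to coincide.

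For Part~3, a $\vs$-algebra morphism $f\colon(A,\alpha_A)\to(B,\alpha_B)$ satisfies $f(x+y) = f(\alpha_A(\delta_x+\delta_y)) = \alpha_B(\vs f(\delta_x+\delta_y)) = \alpha_B(\delta_{f(x)}+\delta_{f(y)}) = f(x)+f(y)$, and similarly $f(rx) = rf(x)$, so $f$ is continuous linear. Conversely, if $f\colon A\to B$ is continuous linear, then $f\circ \alpha_A$ and $\alpha_B\circ\vs f$ are both continuous linear maps from $\vs A$ to $B$ that agree after precomposition with $\us_A$ (both yield $f$); freeness of $\vs A$ over $A$ again forces them to agree everywhere, so $f$ is a $\vs$-algebra morphism.

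The main obstacle is the second algebra law in Part~2: one must invoke freeness at the two-fold level, which rests on $\vs C$ itself being weakly locally convex. This point is not formal --- it relies crucially on $\vs C$ being realised as a subcone of the locally linear topological cone $\V C$, and on local linearity descending to subcones. Once this is in place, every remaining verification is a clean application of uniqueness coming from freeness.
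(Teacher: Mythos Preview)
The paper does not supply its own proof of Theorem~\ref{theoremvf}; the result is quoted verbatim from \cite[Theorem~5.9, Theorem~5.10]{goubault19} and used as a black box. So there is no ``paper's proof'' to compare your proposal against.

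That said, your reconstruction is essentially the argument one finds in \cite{goubault19}, and it is correct in outline. Two small points deserve attention. First, in Part~1 you write ``separate continuity of $\cdot$'', but the definition of topological cone in this paper requires \emph{joint} continuity of scalar multiplication; you should check that $(r,x)\mapsto \alpha_X(r\delta_x)$ is jointly continuous, which follows since $(r,\mu)\mapsto r\mu$ is jointly continuous on $\vs X$ and $\alpha_X$, $\us_X$ are continuous. Second, your application of Lemma~\ref{retractionofweaklylc} requires $\alpha_X$ to be linear with respect to the cone structure you have just defined on $X$; this is true, but it is not automatic and should be verified explicitly via the associativity law $\alpha_X\circ\ms_X=\alpha_X\circ\vs\alpha_X$ applied to $\delta_\mu+\delta_\nu$ and to $r\delta_\mu$ (which you do implicitly when deriving the cone axioms, but the linearity of $\alpha_X$ on arbitrary simple valuations is a separate, albeit immediate, consequence). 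With these two clarifications your proof is complete and matches the intended argument.
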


From Corollary~\ref{llofvk} we know that $\vk X$ is a locally linear topological cone for each $T_0$ space $X$. Then the following proposition is not hard to obtain.

\begin{proposition}\label{prop:vkalgcones}
Let $(X, \rho_X)$ be a $\vk$-algebra. Then $X$ is a weakly locally convex topological cone with $+$ defined by $x + y = \rho_X(\delta_x + \delta_y)$, and scalar multiplication $\cdot$ defined by $r \cdot x = \rho_X(r \delta_x)$ for $r \in \mathbb R_+$ and $x, y\in X$. 

Moreover $X$ is also a $\kk$-space, and the structure map $\rho_{X}$ maps each valuation $\mu\in \vk X$ to one of its barycentres. 
 \end{proposition}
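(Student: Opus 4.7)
The plan is to extract a $\vs$-algebra structure from the given $\vk$-algebra $(X,\rho_X)$, apply Theorem~\ref{theoremvf} to obtain the weakly locally convex cone structure on $X$, and then use the universal property of the $\kk$-completion to identify each $\rho_X(\mu)$ as a barycentre of $\mu$. First, since $\uk_X=\eta_{\vs X}\circ\us_X$ and the $\vk$-algebra unit law gives $\rho_X\circ\uk_X=\id_X$, the space $X$ is a retract of the $\kk$-space $\vk X=\kc(\vs X)$; Corollary~\ref{retracts:of:kspace} then yields that $X$ is itself a $\kk$-space. As a consequence, every continuous map $\vs X\to X$ has a unique continuous extension along $\eta_{\vs X}$ to $\vk X$, and in particular $\rho_X$ is the unique such extension of $\alpha_X:=\rho_X\circ\eta_{\vs X}\colon\vs X\to X$.

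Next, I would verify that $(X,\alpha_X)$ is a $\vs$-algebra. The unit law is immediate from $\alpha_X\circ\us_X=\rho_X\circ\eta_{\vs X}\circ\us_X=\rho_X\circ\uk_X=\id_X$. For the multiplication law $\alpha_X\circ\ms_X=\alpha_X\circ\vs\alpha_X$, I would precompose the $\vk$-algebra equation $\rho_X\circ\mk_X=\rho_X\circ\vk\rho_X$ with the map $\iota:=\eta_{\vs\kc\vs X}\circ\vs\eta_{\vs X}\colon\vs^{2} X\to\vk^{2} X$. Using the explicit formula $\mk_X=\kc\ms_X\circ\mm_{\vs^{2} X}\circ\kc\lambda_{\vs X}$ from the corollary closing Section~3, naturality of $\eta$, the triangle identity $\mm\circ\eta_{\kc}=\id$, and the distributive-law equation $\lambda_{\vs X}\circ\vs\eta_{\vs X}=\eta_{\vs^{2}X}$ (Equation~$(\bullet)$ in the proof of Theorem~\ref{existenceofdis}), the left-hand side collapses to $\rho_X\circ\eta_{\vs X}\circ\ms_X=\alpha_X\circ\ms_X$; naturality of $\eta$ together with $\vk\rho_X=\kc(\vs\rho_X)$ similarly reduces the right-hand side to $\rho_X\circ\eta_{\vs X}\circ\vs\alpha_X=\alpha_X\circ\vs\alpha_X$. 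This bookkeeping is the main obstacle, since one must unwind $\mk$ and apply the distributive-law identities at the right points.

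Finally, Theorem~\ref{theoremvf} endows $X$ with a weakly locally convex topological cone structure given by $x+y=\alpha_X(\delta_x+\delta_y)=\rho_X(\delta_x+\delta_y)$ and $r\cdot x=\alpha_X(r\delta_x)=\rho_X(r\delta_x)$ (identifying simple valuations with their images in $\vk X$ under $\eta_{\vs X}$), and asserts that $\alpha_X$ sends each simple valuation to its barycentre. For the barycentre claim on the whole of $\vk X$, fix an arbitrary continuous linear map $\Lambda\colon X\to\real$ and consider the two continuous functions $\Lambda\circ\rho_X$ and $g\colon\mu\mapsto\int\Lambda\,d\mu$ from $\vk X$ to $\real$; the latter is continuous because integration of a lower semicontinuous function is continuous on $\V X$, hence on its subspace $\vk X$. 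These two maps agree on $\eta_{\vs X}(\vs X)$ by linearity of $\Lambda$ and the formula $\int\Lambda\,d(\Sigma_{i=1}^{n} r_i\delta_{x_i})=\Sigma_{i=1}^{n} r_i\Lambda(x_i)$, so Corollary~\ref{densedetermined} forces $\Lambda\circ\rho_X=g$ on all of $\vk X$. Thus $\Lambda(\rho_X(\mu))=\int\Lambda\,d\mu$ for every continuous linear $\Lambda$, which is precisely the statement that $\rho_X(\mu)$ is a barycentre of $\mu$.
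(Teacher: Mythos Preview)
Your proof is correct, but the route to the weakly locally convex cone structure differs from the paper's. The paper argues directly: it adapts the argument of \cite[Lemma~4.6, Proposition~4.9]{goubault19} to verify the cone axioms for the operations $x+y=\rho_X(\delta_x+\delta_y)$ and $r\cdot x=\rho_X(r\delta_x)$ and to check that $\rho_X$ is linear with respect to this structure; since $\rho_X$ is then a continuous linear retraction, Lemma~\ref{retractionofweaklylc} yields weak local convexity. You instead extract the $\vs$-algebra $\alpha_X=\rho_X\circ\eta_{\vs X}$ from the $\vk$-algebra and invoke Theorem~\ref{theoremvf} to import the cone structure wholesale. Your reduction of the $\vk$-multiplication law to the $\vs$-multiplication law by precomposing with $\iota=\eta_{\vs\kc\vs X}\circ\vs\eta_{\vs X}$ and then using the explicit formula $\mk=\kc\ms\circ\mm_{\vs^{2}}\circ\kc\lambda_{\vs}$, naturality of $\eta$, the triangle identity $\mm\circ\eta_{\kc}=\id$, and Equation~$(\bullet)$ is correct; the calculation yields $\mk_X\circ\iota=\eta_{\vs X}\circ\ms_X$ and $\vk\rho_X\circ\iota=\eta_{\vs X}\circ\vs\alpha_X$, as you claim. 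Your approach is more self-contained (no appeal to the details of \cite{goubault19}) and exhibits a general principle---that a $\vk$-algebra restricts along $\eta_{\vs}$ to a $\vs$-algebra via the distributive-law data---at the cost of heavier diagram-chasing. The paper's approach is shorter but leans on the reader to reproduce the $\vs$-case argument in the $\vk$ setting. The $\kk$-space claim (via Corollary~\ref{retracts:of:kspace}) and the barycentre claim (via Corollary~\ref{densedetermined}) are handled identically in both proofs.
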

\begin{proof}
The proof that $X$ is a topological cone is similar to that of Lemma~4.6 and Proposition~4.9 in \cite{goubault19}. In particular, we know that $\rho_{X}$ is a continuous linear retraction since it is a structure map. Hence $X$ is a weakly locally convex cone by Lemma~\ref{retractionofweaklylc}.  Since $\rho_X$ is a retraction and the category~$\kk$ is stable under retractions by Corollary~\ref{retracts:of:kspace}, we conclude that $X$ is a $\kk$-space.
We next prove that $\rho_{X}$ sends each valuation $\vk X$ to one of its barycentres. Assume that $\Lambda\colon X\to \real$ is a lower semicontinuous linear map. We consider two maps from $\vk X$ to $\real$: $\mu\mapsto \Lambda(\rho_{X}(\mu))$ and $\mu\mapsto \int \Lambda~d\mu$. It is straightforward to see that these two maps are continuous and they agree on $\vs X$, hence they also agree on $\vk X$ by Corollary~\ref{densedetermined}. So we have $\Lambda\circ\rho_{X}= \int \Lambda d\_$. 
\end{proof}

We call a topological cone $C$ which is also a $\kk$-space a \emph{$\kk$-cone}. Proposition~\ref{prop:vkalgcones} shows that every $\vk$-algebra is a weakly locally convex $\kk$-cone. The following proposition shows the converse also holds. That is,  each weakly locally convex $\kk$-cone is the underlying space of a $\vk$-algebra.

\begin{proposition}
\label{whatbetadoes}
For every weakly locally convex topological $\kk$-cone~$C$, there exists a (unique) continuous linear map~$\rho_C$ from $\vk C$ to $C$, sending each continuous valuation in $\vk C$ to one of its barycentres, in particular, sending each simple valuation $\Sigma_{i=1}^n r_i \delta_{x_i}$  to $\Sigma_{i=1}^n r_i{x_i}$. Moreover, the pair $(C, \rho_C)$ is a $\vk$-algebra. 
\end{proposition}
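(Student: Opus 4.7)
The plan is to define $\rho_{C}$ as the unique continuous extension to $\vk C = \kc(\vs C)$ of the continuous linear map $\alpha_{C}\colon \vs C\to C$ provided by Theorem~\ref{theoremvf}. Since $C$ is a $\kk$-space, the universal property of the $\kk$-completion furnishes exactly such a map, and it automatically satisfies $\rho_{C}\circ \eta_{\vs C} = \alpha_{C}$; in particular $\rho_{C}$ sends each $\Sigma_{i=1}^{n} r_{i}\delta_{x_{i}}$ to $\Sigma_{i=1}^{n} r_{i} x_{i}$.

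Next I would verify that $\rho_{C}$ is linear by a two-step density argument patterned on Lemma~\ref{kpreslinear}: fixing the second argument first in $\vs C$ and then in $\vk C$, Corollary~\ref{densedetermined} promotes additivity (and homogeneity) of $\alpha_{C}$ on $\vs C$ to that of $\rho_{C}$ on all of $\vk C$. To show that $\rho_{C}$ sends every $\mu \in \vk C$ to one of its barycentres, I would observe that for any continuous linear $\Lambda\colon C\to \real$, the two continuous maps $\Lambda\circ\rho_{C}$ and $\mu\mapsto \int\Lambda\, d\mu$ from $\vk C$ to $\real$ agree on $\vs C$ by a direct computation, and hence agree on $\vk C$ by Corollary~\ref{densedetermined}.

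The unit axiom is then immediate: $\rho_{C}\circ \uk_{C} = \rho_{C}\circ \eta_{\vs C}\circ \us_{C} = \alpha_{C}\circ \us_{C} = \id_{C}$. The main obstacle is the multiplication axiom $\rho_{C}\circ \mk_{C} = \rho_{C}\circ \vk \rho_{C}\colon \vk^{2} C\to C$. Since $\vk^{2} C = \kc(\vs\vk C)$, Corollary~\ref{densedetermined} again reduces this to checking equality on the image of $\vs\vk C$ under $\eta_{\vs\vk C}$. For $\varpi = \Sigma_{i=1}^{n} r_{i}\delta_{\mu_{i}} \in \vs\vk C$, I would unfold the explicit formula $\mk_{C} = \mm_{\vs C}\circ\kc^{2}\ms_{C}\circ\kc\lambda_{\vs C}$ from the corollary to Theorem~\ref{existenceofdis}, and repeatedly apply naturality of $\eta$, linearity of $\lambda_{\vs C}$, the identity $\lambda_{\vs C}\circ \us_{\kc\vs C} = \kc(\us_{\vs C})$, and $\ms_{C}\circ \us_{\vs C} = \id_{\vs C}$, to obtain $\mk_{C}(\eta_{\vs\vk C}(\varpi)) = \Sigma_{i=1}^{n} r_{i}\mu_{i}$. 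On the other side, naturality of $\eta$ gives $\vk\rho_{C}(\eta_{\vs\vk C}(\varpi)) = \eta_{\vs C}(\Sigma_{i=1}^{n} r_{i}\delta_{\rho_{C}(\mu_{i})})$. Applying $\rho_{C}$ to both expressions and invoking linearity of $\rho_{C}$ together with $\rho_{C}\circ\eta_{\vs C} = \alpha_{C}$, both sides collapse to $\Sigma_{i=1}^{n} r_{i}\rho_{C}(\mu_{i})$.

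Uniqueness is immediate from Corollary~\ref{densedetermined}: any continuous linear $\rho_{C}'\colon \vk C\to C$ sending $\Sigma r_{i}\delta_{x_{i}}$ to $\Sigma r_{i}x_{i}$ agrees with $\rho_{C}$ on $\vs C$, and therefore everywhere.
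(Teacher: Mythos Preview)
Your proposal is correct and follows essentially the same route as the paper: construct $\rho_{C}$ as the unique $\kk$-extension of $\alpha_{C}$, prove linearity by the two-variable density argument, verify the barycentre property by comparing $\Lambda\circ\rho_{C}$ with $\int\Lambda\,d\mu$ on simple valuations, and check the algebra axioms after reducing to simple valuations via Corollary~\ref{densedetermined}. The only difference is tactical: for the multiplication axiom the paper simply observes that both sides are continuous and linear, hence determined by their values on Dirac masses $\delta_{\nu}$ for $\nu\in\vk C$, and computes $\mk_{C}(\delta_{\nu})=\nu$ and $\vk\rho_{C}(\delta_{\nu})=\delta_{\rho_{C}(\nu)}$ directly from the description $\mk_{C}=(\id_{\vk C})^{\dagger}_{\kk}$ in Theorem~\ref{vkisamonad}, which is considerably shorter than unfolding the distributive-law formula $\mm_{\vs C}\circ\kc^{2}\ms_{C}\circ\kc\lambda_{\vs C}$.
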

\begin{figure}[h]
\centering
\begin{tikzcd}
\vk C= \kc(\vs C)  \arrow[dr, "\rho_{C}=\overline{\alpha_{C}}"] \\
 \vs C \arrow[r, "\alpha_{C}"']  \arrow[u, hook, "\eta_{\vs C}"] &  C   
\end{tikzcd}
\end{figure}
\begin{proof}
Since $C$ is a weakly locally convex $\kk$-cone, it is in particular a $\vs$-algebra, hence from Theorem~\ref{theoremvf} we know that the (unique) map 
$\alpha_{C}\colon \vs C\to C$ that takes each simple valuation $\Sigma_{i=1}^n r_i \delta_{x_i}$ to its barycentre $\Sigma_{i=1}^n r_i{x_i}$ is continuous and linear. Since $\kc$ is a reflector and $C$ is a $\kk$-space, the map $\alpha_{C}$ has a unique continuous extension $\overline{\alpha_{C}}$ from $\vk C$ to $C$, which we claim is  the desired~$\rho_{C}$. 

The map $\rho_{C} = \overline{\alpha_{C}}$ is obviously continuous. To see that it is linear, for each continuous valuation $\mu\in \vk C$ we define $$F_{\mu}\colon \vk C\to C\colon \nu\mapsto \rho_{C}(\nu)+\rho_{C}(\mu)$$ and $$G_{\mu}\colon \vk C\to C\colon \nu\mapsto \rho_{C}(\nu+\mu).$$
It is easy to see that both $F_{\mu}$ and $G_{\mu}$ are continuous for each $\mu$. Notice that $\eta_{\vs C}$ is the space embedding of $\vs C$ into $\vk C$, so for simple valuations $s$ and $t$ we have:
\begin{align*}
F_{s}(t) &= \rho_{C}(s)+\rho_{C}(t)   &\text{definition of $F_{s}$} \\
&= \alpha_{C}(s)+\alpha_{C}(t) & \text{$\rho_{C}$ is an extension of $\alpha_{C}$ } \\
&= \alpha_{C}(s+t)  &\text{$\alpha_{C}$ is linear}\\
&= \rho_{C}(s+t) & \text{$\rho_{C}$ is an extension of $\alpha_{C}$}\\ 
&=G_{s}(t)  & \text{definition of $G_{s}$} 
\end{align*}
So we have that $F_{s}$ and $G_{s}$ are equal on $\vs C$ hence they are equal on $\vk C$ by  Corollary~\ref{densedetermined}. This means for each continuous valuation $\nu\in \vk C$, $F_{s}(\nu)=G_{s}(\nu)$. Notice that $F_{s}(\nu) = F_{\nu}(s)$ and $G_{s}(\nu) = G_{\nu}(s)$ for any simple valuation $s$ and continuous valuation $\nu\in \vk C$. We know that $F_{\nu}$ are  $G_{\nu}$ are equal on $\vs(C)$ for each $\nu$. Hence by using Corollary~\ref{densedetermined} again, they are equal on $\vk C$. That is, for each $\nu, \mu\in \vk C$, $F_{\nu}(\mu)=G_{\nu}(\mu)$, which is just equivalent to saying that $\rho_{C}(\nu)+ \rho_{C}(\mu) = \rho_{C}(\nu+\mu)$.

To prove that $\rho_{C}(r\mu) = r\rho_{C}(\mu)$ for $r\in \mathbb R_+, \mu\in \vk C$, one notes that both sides of the equation are continuous in $\mu$, and they are equal on simple valuations. Hence they are equal.

Now we prove that $\rho_{C}$ sends continuous valuations in $\vk C$ to one of their barycentres. Assume that $\Lambda\colon C\to \real$ is a lower semicontinuous linear map. We define 
$$ H\colon \vk C \to \real \colon \mu\mapsto \int f~d\mu$$
and 
$$ J\colon \vk C \to \real \colon \mu\mapsto f(\rho_{C}(\mu)).$$
It is straightforward to show $H$ and $J$ are continuous, and they are equal on simple valuations because $\rho_{C}$ is a linear extension of $\alpha_{C}$. Hence they are equal on $\vk C$ by~Corollary~\ref{densedetermined}.

The uniqueness of $\rho_{C}$ follows from the uniqueness of $\alpha_{C}$.

Finally, we prove that the pair $(C, \rho_{C})$ is a $\vk$-algebra. One easily sees that $\rho_{C}\circ \uk_{ C} = \id_{C}$ since $\rho_{C}$ extends $\alpha_{C}$. To prove that 
$\rho_{C}\circ \vk \rho_{C}  = \rho_{C}\circ \mk_{C}$ one notes that both sides of the equation are linear continuous maps, hence we only need to prove that $\rho_{C}\circ \vk \rho_{C}(\delta_{\nu})  = \rho_{C}\circ \mk_{C} (\delta_{\nu})$ for every $\nu \in \vk C$. Indeed we know that these are equal since they both are just $\rho_{C}(\nu)$ via easy calculations. 
\end{proof}

\begin{proposition}
Let $(C_{1}, \rho_{1}),  (C_{2}, \rho_{2})$ be two $\vk$-algebras. Then a continuous function $f\colon C_{1}\to C_{2}$ is a $\vk$-algebra morphism if and only if it is linear. 
\end{proposition}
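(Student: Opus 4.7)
The plan is to prove both directions using, on the one hand, the description of the cone operations and of the structure maps $\rho_{i}$ in terms of barycentres provided by Proposition~\ref{prop:vkalgcones}, and on the other hand, the density-style Corollary~\ref{densedetermined}, which says that continuous maps out of $\vk C_{1} = \kc(\vs C_{1})$ into a $T_{0}$ space are determined by their restriction along $\eta_{\vs C_{1}}$.

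For the ``only if'' direction, assume $f\circ\rho_{1} = \rho_{2}\circ\vk f$. By Proposition~\ref{prop:vkalgcones}, the cone structure on $C_{i}$ is given by $x+y = \rho_{i}(\delta_{x}+\delta_{y})$ and $r\cdot x = \rho_{i}(r\delta_{x})$. Since $\vk f$ agrees with $\vs f$ on simple valuations (as noted in the paragraph preceding Figure~\ref{relation}), one has $\vk f(\delta_{x}+\delta_{y}) = \delta_{f(x)}+\delta_{f(y)}$ and $\vk f(r\delta_{x}) = r\delta_{f(x)}$. A direct substitution into the algebra-morphism equation then yields $f(x+y)=f(x)+f(y)$ and $f(rx)=rf(x)$, so $f$ is linear. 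This direction is essentially symbolic.

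For the converse, assume $f$ is continuous and linear. I would compare the two continuous maps
\[
f\circ\rho_{1},\ \rho_{2}\circ\vk f \colon \vk C_{1}\to C_{2},
\]
both of which are continuous (the second uses functoriality of $\vk$ together with continuity of $\rho_{2}$). The target $C_{2}$ is $T_{0}$ as a topological cone, so Corollary~\ref{densedetermined} reduces the task to checking equality after precomposition with $\eta_{\vs C_{1}}$, i.e.~on simple valuations $\Sigma_{i=1}^{n} r_{i}\delta_{x_{i}}$. There the left-hand side evaluates to $f(\Sigma_{i} r_{i} x_{i}) = \Sigma_{i} r_{i} f(x_{i})$, using the barycentre action of $\rho_{1}$ on simple valuations (Proposition~\ref{prop:vkalgcones}) and linearity of $f$, while the right-hand side evaluates to $\rho_{2}(\Sigma_{i} r_{i}\delta_{f(x_{i})}) = \Sigma_{i} r_{i} f(x_{i})$ by the analogous description of $\rho_{2}$ on simple valuations. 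Hence the two maps agree on $\vs C_{1}$, and Corollary~\ref{densedetermined} promotes this to equality on the whole of $\vk C_{1}$.

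The main obstacle is administrative rather than conceptual: one must carefully package the continuity of both comparison maps into the $T_{0}$ space $C_{2}$, and invoke Proposition~\ref{prop:vkalgcones} uniformly to ensure that both $\rho_{1}$ and $\rho_{2}$ act on simple valuations as the expected barycentre maps. Once this scaffolding is in place, each direction collapses to an elementary calculation on Dirac masses, and no further appeal to the monad laws or to the distributive law $\lambda$ is needed.
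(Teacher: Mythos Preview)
Your proposal is correct and follows essentially the same route as the paper: both directions reduce to evaluating $\rho_{i}$ on simple valuations, with the ``if'' direction appealing to Corollary~\ref{densedetermined} to promote equality on $\vs C_{1}$ to all of $\vk C_{1}$. The only cosmetic difference is that the paper cites Proposition~\ref{whatbetadoes} for the explicit formula $\rho_{i}(\Sigma_{j} r_{j}\delta_{x_{j}}) = \Sigma_{j} r_{j} x_{j}$, whereas you extract it from Proposition~\ref{prop:vkalgcones} (via linearity of $\rho_{i}$ together with $\rho_{i}\circ\uk = \id$); either citation works.
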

\begin{proof}
For the ``if'' direction, we show that for every continuous linear map $f$, $\rho_{2}\circ \vk f = f\circ \rho_{1}$.  By Corollary~\ref{densedetermined} we only need to show that the maps are equal on simple valuations. Indeed, 
\begin{align*}
&\rho_{2}\circ \vk f(\Sigma_{i=1}^{n}r_{i}\delta_{x_{i}}) \\
&= \rho_{2}( \Sigma_{i=1}^{n}r_{i}\delta_{f(x_{i})}) &\text{$\vk f = \V f$ on $\vk C_{1}$}\\
&= \Sigma_{i=1}^{n}r_{i}{f(x_{i})} &  \text{Proposition~\ref{whatbetadoes}}\\
&=  f(\Sigma_{i=1}^{n}r_{i}{x_{i}} ) & \text{$f$ is linear}\\
&= f\circ \rho_{1}( \Sigma_{i=1}^{n}r_{i}\delta_{x_{i}} ). &\text{Proposition~\ref{whatbetadoes}   }
\end{align*}

For  the ``only if'' direction, we take $x_{i}\in C_{1}, r_{i}\in \mathrm{R_{+}}, i=1, ...,n$ and calculate 
\begin{align*}
& f(\Sigma_{i=1}^{n}r_{i}{x_{i}}) \\
&= f(\rho_{1}( \Sigma_{i=1}^{n}r_{i}{\delta_{x_{i}}})) &  \text{Proposition~\ref{whatbetadoes}}\\
&=  \rho_{2}(\vk f ( \Sigma_{i=1}^{n}r_{i}{\delta_{x_{i}}}))  & \text{$f$ is a $\vk$-algebra morphism }\\
&= \rho_{2}( \Sigma_{i=1}^{n}r_{i}\delta_{f(x_{i})}   ) & \text{$\vk f$ is linear}\\
&= \Sigma_{i=1}^{n}r_{i}f(x_{i}) . &\text{Proposition~\ref{whatbetadoes}   }
\end{align*}
\end{proof}

Summarising the above results, we have the following.

\begin{theorem}
The category ${\bf TOP_{0}^{\vk}}$ of $\vk$-algebras and $\vk$-algebra morphisms is isomorphic to the category of weakly locally convex $\kk$-cones and continuous linear maps between them. \hfill $\Box$
\end{theorem}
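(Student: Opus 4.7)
The plan is to exhibit mutually inverse functors between ${\bf TOP_0^{\vk}}$ and the category ${\bf wLC}\kk{\bf Cone}$ of weakly locally convex $\kk$-cones with continuous linear maps, thereby establishing the claimed isomorphism. The entire statement is a bookkeeping corollary of Propositions \ref{prop:vkalgcones} and \ref{whatbetadoes} together with the preceding morphism proposition, so no new computation is needed; the only substantive content is checking that the two object-level assignments are mutually inverse.

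First I would define $U\colon {\bf TOP_0^{\vk}} \to {\bf wLC}\kk{\bf Cone}$ by sending a $\vk$-algebra $(X,\rho_X)$ to $X$ endowed with the addition $x+y = \rho_X(\delta_x + \delta_y)$ and scalar multiplication $r\cdot x = \rho_X(r\delta_x)$ supplied by Proposition \ref{prop:vkalgcones}, which guarantees that $X$ is a weakly locally convex $\kk$-cone. On morphisms, $U$ is the identity: the preceding proposition asserts that every $\vk$-algebra morphism is a continuous linear map, and is therefore a morphism in ${\bf wLC}\kk{\bf Cone}$. Functoriality is immediate. In the other direction, I would define $F\colon {\bf wLC}\kk{\bf Cone} \to {\bf TOP_0^{\vk}}$ by sending $C$ to the $\vk$-algebra $(C,\rho_C)$ of Proposition \ref{whatbetadoes}, and sending a continuous linear map to itself, which is a $\vk$-algebra morphism by the same morphism proposition.

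The main check is that $F \circ U = \id$ on objects, i.e.\ that the structure map $\rho_X$ of a given $\vk$-algebra coincides with the structure map $\rho'_X$ that Proposition \ref{whatbetadoes} produces from the cone $U(X,\rho_X)$. By Proposition \ref{prop:vkalgcones}, $\rho_X\colon \vk X \to X$ is continuous and linear and maps each simple valuation $\Sigma_{i=1}^n r_i \delta_{x_i}$ to $\Sigma_{i=1}^n r_i x_i$, which is its barycentre in the extracted cone structure. The uniqueness clause in Proposition \ref{whatbetadoes} then forces $\rho_X = \rho'_X$. Conversely, $U \circ F = \id$ on objects because the cone operations recovered by $U$ from $(C,\rho_C)$ are, by construction of $\rho_C$ via barycentres, the original operations of $C$: $\rho_C(\delta_x+\delta_y) = x+y$ and $\rho_C(r\delta_x) = r\cdot x$. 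Since both $U$ and $F$ are the identity on underlying continuous maps, the morphism-level checks are automatic, and the isomorphism of categories follows.

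The hard part will be nothing more than articulating the uniqueness step above cleanly; everything else is essentially unpacking definitions and quoting the three previously established results.
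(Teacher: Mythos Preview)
Your proposal is correct and matches the paper's approach: the paper gives no proof at all, merely stating ``Summarising the above results, we have the following'' and marking the theorem with a box, so it treats the isomorphism as an immediate consequence of Propositions~\ref{prop:vkalgcones} and~\ref{whatbetadoes} together with the morphism proposition---exactly the three ingredients you invoke. Your round-trip checks (in particular the uniqueness argument for $F\circ U=\id$) are precisely the bookkeeping the paper leaves implicit.
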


The following theorem which generalises \cite[Theorem 6.8]{heckmann96} is then immediate, and holds for general categorical reasons.
\begin{theorem}
$\vk X$ is the free weakly locally convex $\kk$-cone over $X$. That is, $\vk X$ is a weakly locally convex $\kk$-cone, and for every continuous function $f\colon X\to C$ from $X$ to a weakly locally convex $\kk$-cone $C$, there is a unique continuous linear map $\overline f \colon \vk X\to C$ with $\overline f(\delta_{x})=f(x)$ for all $x\in X$.\hfill $\Box$
\end{theorem}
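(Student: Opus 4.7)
The plan is to appeal to the general fact that for any monad $T$ on a category, $TX$ equipped with the multiplication $\mu_X$ is the free $T$-algebra over $X$, together with the identification of the category of $\vk$-algebras with the category of weakly locally convex $\kk$-cones (with continuous linear maps as morphisms) established in the preceding theorem. So very little remains to do beyond unfolding this categorical fact in our concrete setting.

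First I would observe that $\vk X$ is itself a weakly locally convex $\kk$-cone: it is a $\vk$-algebra with structure map $\mk_X$, hence by Proposition~\ref{prop:vkalgcones} it is a weakly locally convex topological $\kk$-cone, and the cone operations on $\vk X$ inherited from this structure map coincide with those inherited from $\V X$ (since $\mk_X$ acts as the barycentre map on simple valuations $\delta_\mu$). The unit $\uk_X\colon X\to \vk X$ sends $x$ to $\delta_x$, so the condition $\overline f(\delta_x)=f(x)$ is precisely $\overline f\circ \uk_X=f$.

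Next, given a continuous $f\colon X\to C$ into a weakly locally convex $\kk$-cone $C$, I define
\[
\overline f \;=\; \rho_C\circ \vk f\colon \vk X\to C,
\]
where $\rho_C\colon \vk C\to C$ is the $\vk$-algebra structure map on $C$ provided by Proposition~\ref{whatbetadoes}. Continuity is immediate, and $\overline f\circ \uk_X = \rho_C\circ \vk f\circ \uk_X = \rho_C\circ \uk_C\circ f = f$ by naturality of $\uk$ and the unit law for the $\vk$-algebra $(C,\rho_C)$. Linearity of $\overline f$ follows from its characterisation as a $\vk$-algebra morphism: the standard monad calculation $\rho_C\circ \vk\overline f = \overline f\circ \mk_X$ (using associativity of $\mk$ and the algebra law $\rho_C\circ \vk\rho_C = \rho_C\circ \mk_C$) shows $\overline f$ is a $\vk$-algebra morphism from $(\vk X,\mk_X)$ to $(C,\rho_C)$, and hence is continuous and linear by the preceding characterisation of $\vk$-algebra morphisms.

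For uniqueness, suppose $g\colon \vk X\to C$ is another continuous linear map with $g\circ \uk_X=f$. Then $g$ is a $\vk$-algebra morphism, so $g\circ \mk_X=\rho_C\circ \vk g$. Composing this with $\uk_{\vk X}$ and using $\mk_X\circ \uk_{\vk X}=\id_{\vk X}$ together with $\vk g\circ \uk_{\vk X} = \uk_C\circ g$ (naturality of $\uk$) gives $g = \rho_C\circ \uk_C\circ g$, but this route is roundabout. A cleaner argument is: both $g$ and $\overline f$ agree on $\vs X$ because on a simple valuation $\Sigma_{i=1}^n r_i\delta_{x_i}$ linearity forces the value to be $\Sigma_{i=1}^n r_i f(x_i)$; since $\vs X$ is densely embedded in $\vk X$ in the sense of Corollary~\ref{densedetermined}, we conclude $g=\overline f$. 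The only point requiring care—and the mildest obstacle—is ensuring that the identification of $\vk$-algebras and their morphisms is being used in the right direction; once that is in hand, the argument is purely formal.
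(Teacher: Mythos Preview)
Your proposal is correct and is precisely the explicit unfolding of the ``general categorical reasons'' the paper invokes: the free $\vk$-algebra $(\vk X,\mk_X)$ becomes the free weakly locally convex $\kk$-cone once the Eilenberg--Moore category has been identified with that of weakly locally convex $\kk$-cones and continuous linear maps. Your aborted first uniqueness route via $\uk_{\vk X}$ indeed only yields a tautology (the standard categorical fix is to compose instead with $\vk(\uk_X)$ and use $\mk_X\circ\vk(\uk_X)=\id$), but your density argument via Corollary~\ref{densedetermined} is perfectly valid.
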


\section{Extensions of lower semicontinuous linear maps}
\label{extensionsection}

In this section, we show that the cone structure on each weakly locally convex cone $C$ can be extended to $\kc(C)$ making the latter a weakly locally convex $\kk$-cone. Moreover, each lower semicontinuous linear map $\Lambda\colon C\to \real$ admits a unique lower semicontinuous linear extension $\overline \Lambda\colon \kc(C)\to \real$ such that $\overline\Lambda\circ \eta_{C}=\Lambda$.

\begin{proposition}
\label{kcisacone}
Let $C$ be a weakly locally convex topological cone, $\alpha_{C}$ be the structure map of the $\vs$-algebra at $C$, and $\us_{C}$ be the unit of $\vs$ at $C$. Then $\kc(C)$ is a weakly locally convex topological $\kk$-cone: For $x, y\in \kc(C)$, $r\in \mathbb R_{+}$, $x+y$ is defined as $\kc(\alpha_{C})(\kc(\us_{C})(x)+\kc(\us_{C})(y))$ and $r\cdot x$ is defined as  $\kc(\alpha_{C})(r\kc(\us_{C})(x))$. Moreover, the topological embedding $\eta_{C}$ of $C$ into $\kc(C)$ is linear with respect to this cone structure on $\kc(C)$.
\end{proposition}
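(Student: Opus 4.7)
The plan is to verify that the candidate operations are well-defined and jointly continuous, next show $\eta_C$ preserves $+$ and $\cdot$, then lift the cone axioms from $C$ to $\kc(C)$ by iterated applications of Corollary~\ref{densedetermined}, and finally deduce weak local convexity by realising $\kc(C)$ as a linear retract of $\vk C$.

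First I would record the basic identity
\[
\kc(\alpha_C)\circ \kc(\us_C)=\kc(\alpha_C\circ \us_C)=\kc(\id_C)=\id_{\kc(C)},
\]
which follows from functoriality of $\kc$ together with the $\vs$-algebra law $\alpha_C\circ \us_C=\id_C$. Since $\vk C$ is a locally linear topological cone (Corollary~\ref{llofvk}), in particular has jointly continuous addition and scalar multiplication, the prescriptions $x+y = \kc(\alpha_C)(\kc(\us_C)(x)+\kc(\us_C)(y))$ and $r\cdot x = \kc(\alpha_C)(r\,\kc(\us_C)(x))$ are jointly continuous as compositions of continuous maps.

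Next I would check that $\eta_C$ preserves both operations. Using the naturality square $\kc(\us_C)\circ \eta_C=\eta_{\vs C}\circ \us_C$, the fact that $\eta_{\vs C}\colon \vs C\hookrightarrow \vk C$ is the subcone embedding (so preserves $+$ and $\cdot$), the equality $\kc(\alpha_C)\circ \eta_{\vs C}=\eta_C\circ \alpha_C$, and the linearity of $\alpha_C$ on simple valuations (Theorem~\ref{theoremvf}), a direct computation collapses $\eta_C(x)+\eta_C(y)$ to $\eta_C(x+y)$ and $r\cdot \eta_C(x)$ to $\eta_C(r\cdot x)$.

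With continuity of the operations and linearity of $\eta_C$ in hand, I would then verify the cone axioms on $\kc(C)$. For each axiom, both sides are continuous functions of their free variables in $\kc(C)$, and they agree whenever those variables lie in $\eta_C(C)$ because $C$ is a cone and $\eta_C$ preserves the operations. Fixing all but one variable and invoking Corollary~\ref{densedetermined} extends the equality to that variable; iterating through the remaining variables completes the check. The neutral element is then necessarily $\eta_C(0_C)$, as can be read off from $0\cdot x$ or verified directly.

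Finally, for weak local convexity, an argument parallel to the one for $\eta_C$ shows that $\kc(\alpha_C)\colon \vk C\to \kc(C)$ is linear: on the subspace $\eta_{\vs C}(\vs C)\subseteq \vk C$ linearity follows from linearity of $\alpha_C$ combined with the just-established linearity of $\eta_C$, and continuity together with Corollary~\ref{densedetermined} extend it to all of $\vk C$. Since $\kc(\alpha_C)$ has $\kc(\us_C)$ as a section by the identity recorded above, it is a continuous linear retraction of the weakly locally convex cone $\vk C$ onto $\kc(C)$, so Lemma~\ref{retractionofweaklylc} gives weak local convexity of $\kc(C)$; it is a $\kk$-space by construction of $\kc$. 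I expect no genuine obstacle here, only bookkeeping: the main nuisance is the iterated application of Corollary~\ref{densedetermined} to transfer each cone axiom, since several of them involve three free variables ranging over $\kc(C)$.
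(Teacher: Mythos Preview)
Your argument is correct, but it takes a different route from the paper's. The paper leverages the distributive law $\lambda$ of $\vs$ over $\kc$ established in Theorem~\ref{existenceofdis}: by Beck's lifting theorem, $(\kc(C),\kc(\alpha_C)\circ\lambda_C)$ is automatically a $\vs$-algebra, and then Theorem~\ref{theoremvf} immediately delivers both the cone axioms and weak local convexity in one stroke; a short calculation using $\lambda_C(\delta_x)=\kc(\us_C)(x)$ and linearity of $\lambda_C$ then identifies the resulting operations with the explicit formulas in the statement. The proof that $\eta_C$ is linear is essentially the same in both approaches.

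Your approach is more elementary and self-contained: you bypass the distributive law and Beck's theorem entirely, instead verifying the cone axioms by hand via iterated density arguments (Corollary~\ref{densedetermined}) and then obtaining weak local convexity from Lemma~\ref{retractionofweaklylc} after checking directly that $\kc(\alpha_C)$ is linear. The paper's route is conceptually cleaner and explains \emph{why} $\kc(C)$ carries this structure (it is the Beck lift of the $\vs$-algebra $C$), which is also the reason the authors went to the trouble of establishing $\lambda$ explicitly; your route avoids that machinery at the cost of the variable-by-variable bookkeeping you mention, but it has the virtue of making Proposition~\ref{kcisacone} independent of Section~3's harder Theorem~\ref{existenceofdis}.
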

\begin{figure}[h]
\centering
\begin{tikzcd} 
\vs \kc(C) \arrow[rr, " \lambda_{C}"]  &&\kc(\vs C) \arrow[ddd, "\kc(\alpha_{C})", bend left = 20] && \vs C \arrow[ll, "\eta_{\vs C} "]  \arrow[ddd, "\alpha_{C}", bend left = 20] \\
&\\
&\\
&&\kc(C) \arrow[uuu, "\kc(\us_{C})" description , bend left =20] \arrow[uuull, "\us_{\kc(C)} "] && C\arrow[ll, "\eta_{C}"] \arrow[uuu, "\us", bend left = 20]
\end{tikzcd}
\end{figure} 
\begin{proof}
Theorem~\ref{existenceofdis} implies there is a unique distributive law $\lambda$, consisting of linear maps,  of $\vs$ over $\kc$, and then Beck's lifting theorem states that $(\kc(C), \kc(\alpha_{C})\circ \lambda_{C})$ is a $\vs$-algebra (see the Proposition on Page 122 in \cite{beck69}). Then, Theorem~\ref{theoremvf} implies $\kc(C)$ is a weakly locally convex cone in which $x+y$ is defined as 
$\kc(\alpha_{C})( \lambda_{C} ( \delta_{x}+\delta_{y}) )$ and $r x $ is defined as $\kc(\alpha_{C})(\lambda_{C} ( r \delta_{x}))$. Moreover, we note that $\lambda_{C}$ is linear and $\lambda_{C}(\delta_{x}) = \kc(\us_{C})(x) $ for all $x\in \kc(C)$, so $\kc(\alpha_{C})( \lambda_{C} ( \delta_{x}+\delta_{y}) ) = \kc(\alpha_{C})(\kc(\us_{C})(x)+\kc(\us_{C})(y)) $ and $ \kc(\alpha_{C})(\lambda_{C} ( r \delta_{x})) = \kc(\alpha_{C})(r\kc(\us_{C})(x))$. 

To prove that $\eta_{C}$ is linear, let $a, b\in C$, and calculate:
\begin{align*}
&\eta_{C}( a + b ) \\
&= \eta_{C} (\alpha_{C}(\delta_{a}+\delta_{b})  )  & \text{ $\alpha_{C}$ is a structure map } \\ 
&= \kc(\alpha_{C})( \eta_{\vs C} (\delta_{a}+ \delta_{b})   )  &\text{ definition of $ \kc$ }\\
&= \kc(\alpha_{C}) ( \eta_{\vs C} (\delta_{a}) +  \eta_{\vs C}(\delta_{b})  )  & \text{ $\eta_{\vs C}$ is linear }\\
&= \kc(\alpha_{C}) ( \kc(\us_{C})(\eta_{C}(a ) ) +   \kc(\us_{C})(\eta_{C}(a ) )    ) &\text{definition of $\kc$ }\\
&= \eta_{C}(a) + \eta_{C}(b). &\text{definition of  $\eta_{C}(a) + \eta_{C}(b)$}
\end{align*}
The equation $\eta_{C}(ra)= r\eta_{C}(a)$ can be proved similarly, for  $r\in \mathbb R_{+}, a\in C$. 
\end{proof}

Let $C$ be a weakly locally convex cone and $f\colon C\to \real$ be lower semicontinuous. Since  $\real$ is a sober space, hence a $\kk$-space, and $\kc$ is a reflector, there is a unique continuous extension $\overline f\colon \kc(C)\to \real$. Moreover, since $C$ is weakly locally convex, $\kc(C)$ is a weakly locally convex $\kk$-cone by Proposition~\ref{kcisacone}. We can show more:

\begin{theorem}
\label{extensionoflinearmaps}
Let $C$ be a weakly locally convex cone. Then a lower semicontinuous map $f \colon C\to \real$ is homogeneous  (resp., superadditive, subadditive) if and only if its extension $\overline f\colon \kc(C)\to \real $ is homogeneous  (resp., superadditive, subadditive). As a result, $f$ is superlinear (resp., sublinear, linear) if and only if $\overline f$ is, respectively. 
\end{theorem}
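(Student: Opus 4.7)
The plan is to treat the three properties (homogeneity, superadditivity, subadditivity) separately, after which the final statement about superlinear, sublinear, and linear maps is immediate by conjunction.

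In each case the ``only if'' direction is immediate: since $\eta_C\colon C\to\kc(C)$ is linear (Proposition~\ref{kcisacone}) and $\overline f\circ\eta_C=f$, any equation or inequality for $\overline f$ restricts through $\eta_C$ to one for $f$; for instance, superadditivity of $\overline f$ yields $f(a+b)=\overline f(\eta_C(a)+\eta_C(b))\geq\overline f(\eta_C(a))+\overline f(\eta_C(b))=f(a)+f(b)$, and the other two are entirely analogous.

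For the ``if'' direction for homogeneity, fix $r\in\mathbb R_+$ and consider the continuous maps $F_r,G_r\colon\kc(C)\to\real$ defined by $F_r(x)=\overline f(rx)$ and $G_r(x)=r\overline f(x)$; continuity uses that $\kc(C)$ is a topological cone by Proposition~\ref{kcisacone}. They agree on $\eta_C(C)$ because $f$ is homogeneous and $\eta_C$ is linear, so Corollary~\ref{densedetermined} forces $F_r=G_r$ on all of $\kc(C)$.

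For the ``if'' direction for super/subadditivity, I would apply Proposition~\ref{kpreservesorder} (the extension is an order isomorphism) in two nested steps. Taking the superadditive case, first fix $b\in C$: the lower semicontinuous maps $a\mapsto f(a+b)$ and $a\mapsto f(a)+f(b)$ on $C$ have the former dominating the latter, and their unique continuous extensions to $\kc(C)\to\real$ are $x\mapsto\overline f(x+\eta_C(b))$ and $x\mapsto\overline f(x)+f(b)$ respectively, since each candidate extension agrees with the original map on $\eta_C(C)$ by linearity of $\eta_C$, and so equals the extension by uniqueness. Proposition~\ref{kpreservesorder} then yields $\overline f(x+\eta_C(b))\geq\overline f(x)+f(b)$ for all $x\in\kc(C)$ and $b\in C$. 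Now fix $x\in\kc(C)$ and repeat the same argument in the second variable: the maps $b\mapsto\overline f(x+\eta_C(b))$ and $b\mapsto\overline f(x)+f(b)$ on $C$ satisfy the same inequality, and their extensions to $\kc(C)\to\real$ are $y\mapsto\overline f(x+y)$ and $y\mapsto\overline f(x)+\overline f(y)$, which are continuous by Proposition~\ref{kcisacone}. A final application of Proposition~\ref{kpreservesorder} gives superadditivity of $\overline f$, and the subadditive case is identical with the inequality reversed.

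The main obstacle is bookkeeping rather than ideas: at each step one must correctly identify the continuous extensions to $\kc(C)$ of the maps defined on $C$. This relies on the linearity of $\eta_C$ together with the joint continuity of $+$ and $\cdot$ on $\kc(C)$, both afforded by Proposition~\ref{kcisacone}, so that the candidate maps on $\kc(C)$ are continuous and agree with the original on $\eta_C(C)$.
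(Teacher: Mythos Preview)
Your proposal is correct and follows essentially the same route as the paper: the paper also treats the superadditive case via a two-step application of Proposition~\ref{kpreservesorder}, first fixing one variable in $C$ to extend the inequality to $\kc(C)$ in the other slot, then repeating to extend in the remaining slot, with linearity of $\eta_C$ used to identify the candidate extensions. Your treatment is slightly more explicit---you spell out the trivial ``only if'' direction and handle homogeneity separately via Corollary~\ref{densedetermined} rather than folding it into ``the rest can be proved similarly''---but the substance is the same.
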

\begin{proof}
We prove the case that $f$ is superadditive, that is, $f(a+b)\geq f(a)+f(b)$ for all $a, b\in C$. We need to show that for any fixed $c, d\in \kc(C)$, $\overline f(c+d)\geq \overline f(c)+\overline f(d)$. For fixed $a\in C$, it follows from the linearity of $\eta_{C}$ that $\lambda x. \overline f(\eta_{C}(a)+x)\colon \kc(C)\to K$ is the extension of $\lambda x. f(a+x)\colon C\to K$ and that $\lambda x.  f(a)+ \overline f(x)\colon \kc(C)\to K$ is the extension of $\lambda x.  f(a)+ f(x)\colon C\to K$. By superadditivity of $f$ we know that $\lambda x. f(a+x)\geq \lambda x.  f(a)+ f(x)$, hence it follows from Proposition~\ref{kpreservesorder} that $\lambda x. \overline f(\eta_{C}(a)+x)\geq \lambda x.  f(a)+  \overline f(x)$. In particular, we know that  $\overline f(\eta_{C}(a)+d)\geq f(a)+  \overline f(d)$ for any $a\in C$. This means that  $\lambda a. \overline f(\eta_{C}(a)+d)\geq \lambda a.  f(a)+  \overline f(d)$. By applying Proposition~\ref{kpreservesorder} again with the same reasoning we obtain that $\overline f(x+d)\geq \overline f(x)+\overline f(d)$ for any $x\in \kc(C)$. Hence, we have  $\overline f(c+d)\geq \overline f(c)+\overline f(d)$. 

The rest can be proved similarly. 
\end{proof}

The following result due to Keimel states that there is an order isomorphism between the family of convex open subsets of a topological cone $C$, ordered by set containment, and the family of  superlinear functionals on $C$, ordered pointwise,  via the so-called \emph{upper Minkowski functionals}.
\begin{definition}
Let $A$ be a subset of a cone $C$, the \emph{upper Minkowski functional} $M^{A}$ of $A$ is defined as $$ M^{A}(x)= \sup\{r \in \mathbb R_{+} \mid x\in rA\}. $$
\end{definition}

\begin{lemma}{\rm \cite[Proposition 7.4, Proposition 7.6]{keimel08}}
Let $C$ be a semitopological cone. 
\begin{itemize}
\item Assigning the upper Minkowski functional $M^{U}$ to each proper open subset $U$ of  $C$ induces an order isomorphism between the poset $\mathcal O(C)$ of proper open subsets and the poset of homogeneous lower semicontinuous functionals on $C$. 
\item Assigning the upper Minkowski functional $M^{U}$ to each proper open convex subset $U$ of  $C$ induces an order isomorphism between the poset $\mathcal O_{c}(C)$ of proper open convex subsets and the poset of lower semicontinuous superlinear functionals on $C$. 
\item In both cases, the inverse assignment sends each lower semicontinuous (superlinear) functional $f$ to the open $f^{-1}((1, +\infty ] )$. 
\end{itemize}
\end{lemma}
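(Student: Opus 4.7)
The plan is to prove both bullet points simultaneously by exhibiting a pair of mutually inverse, order-preserving assignments, $U \mapsto M^{U}$ and $f \mapsto f^{-1}((1,\infty])$, and then showing that they restrict correctly to give the two claimed order isomorphisms. Both order preservations are immediate from the definitions, so the real content lies in three places: (i) the defining properties of $M^U$; (ii) matching convexity of $U$ with superadditivity of $M^U$; and (iii) verifying that the two assignments are mutually inverse.

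First I would check the basic properties of $M^U$ for a proper open $U \subseteq C$. Homogeneity on $s > 0$ is immediate by reindexing the supremum; at $s = 0$ one needs $0 \notin rU$ for every $r > 0$, which follows because $0$ is the bottom element of the specialization order of $C$ (separate continuity of scalar multiplication forces $sx \to 0$ as $s \to 0$, so every open containing $0$ is all of $C$) and hence the proper upper set $U$ misses $0$. Lower semicontinuity reduces to the identity $\{M^U > r\} = \bigcup_{r' > r} r'U$, where each $r'U$ is open because multiplication by the fixed scalar $r' > 0$ is a homeomorphism of $C$ onto itself. When $U$ is additionally convex and $r_i < M^U(x_i)$ with $x_i = r_i u_i$ for $i = 1, 2$, one writes $x_1 + x_2 = (r_1+r_2)z$ with $z = \tfrac{r_1}{r_1+r_2} u_1 + \tfrac{r_2}{r_1+r_2} u_2 \in U$; taking suprema yields superadditivity, and together with homogeneity, superlinearity.

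Next I would verify the reverse direction. Given a homogeneous lower semicontinuous $f$, the set $U := f^{-1}((1,\infty])$ is open by lower semicontinuity, and is convex whenever $f$ is superlinear, since for $a, b \in U$ and $t \in [0,1]$, superadditivity and homogeneity give $f(ta + (1-t)b) \geq t f(a) + (1-t) f(b) > 1$. The crux is then establishing mutual inverseness. For the composite $f \mapsto U \mapsto M^{U}$, the chain of equivalences $x \in rU \Leftrightarrow f(x/r) > 1 \Leftrightarrow f(x) > r$ (the last step by homogeneity) collapses the defining supremum of $M^U(x)$ to $f(x)$, under the conventions $\sup \emptyset = 0$ and $\sup (0,\infty) = \infty$. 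For the composite $U \mapsto M^U \mapsto (M^U)^{-1}((1,\infty])$, the inclusion $\subseteq$ uses $rU \subseteq U$ for $r \geq 1$, which follows because open sets of a semitopological cone are upper sets and $ru \geq u$ for $r \geq 1$. The reverse inclusion is the subtler step: given $x \in U$, continuity of the map $r \mapsto x/r$ at $r = 1$ together with openness of $U$ produces some $r > 1$ with $x/r \in U$, whence $x \in rU$ and $M^U(x) \geq r > 1$.

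The main obstacle I anticipate is assembling the underlying order-theoretic package — namely that $0$ is the bottom element of the specialisation order, that opens of a semitopological cone are upper sets, and that $rx \geq x$ for $r \geq 1$ — cleanly from the bare definition of a semitopological cone. Once these facts are available, every remaining step reduces to a direct computation or a short topological argument.
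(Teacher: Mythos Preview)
Your proof is correct. Note, however, that the paper does not itself prove this lemma: it is lifted directly from Keimel's paper on topological cones (Propositions~7.4 and~7.6 there) and merely cited, so there is no in-paper argument to compare against. The route you sketch is essentially Keimel's original one, and the order-theoretic facts you isolate as the crux---that $0$ is the bottom of the specialisation order, that opens are upper sets, and that $rx \geq x$ for $r \geq 1$---are precisely what is needed; all of them follow from separate continuity of scalar multiplication once one uses the Scott topology on $\mathbb R_+$, which is the standing convention in this setting. One small sharpening: for the ``reverse inclusion'' step, rather than appealing to continuity of $r \mapsto x/r$, it is slightly cleaner to observe directly that $s \mapsto sx$ is Scott-continuous, so the preimage of $U$ is a Scott-open subset of $\mathbb R_+$ containing~$1$, hence of the form $(a,\infty)$ with $a<1$, which immediately produces some $s<1$ with $sx \in U$ and thus $x \in (1/s)U$ with $1/s>1$.
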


We have the following similar result with proper open half-spaces in lieu of proper open convex subsets and lower semicontinuous linear functionals in lieu of lower semicontinuous superlinear functionals. The proof is similar to \cite[Proposition 7.6]{keimel08}. 

\begin{lemma}
\label{inlieu}
Let $C$ be a semitopological cone. Assigning the upper Minkowski functional $M^{U}$ to every proper open half-space $U$ of  $C$ induces an order isomorphism between the poset of proper open half-spaces of $C$ and the poset of lower semicontinuous linear functionals on $C$. The inverse assignment sends each lower semicontinuous linear functional $f$ to the open $f^{-1}((1, +\infty ] )$.  \hfill $\Box$
\end{lemma}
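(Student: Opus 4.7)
My plan is to bootstrap on the preceding lemma of Keimel, which already supplies order isomorphisms between proper open subsets of $C$ and homogeneous lower semicontinuous functionals on the one hand, and between proper open convex subsets and superlinear lower semicontinuous functionals on the other, in both cases via $U \mapsto M^U$ with common inverse $f \mapsto f^{-1}((1,+\infty])$. A half-space is an open convex set whose complement is also convex, so the only new content I need is that, under this bijection, the extra ``complement is convex'' condition on the geometric side corresponds exactly to the extra ``subadditive'' condition on the analytic side. Once that refinement is established in both directions, the claimed order isomorphism and the description of the inverse drop out by restricting the Keimel isomorphism.

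For the forward direction I would take a proper open half-space $U$, write $F = C \setminus U$ (which is convex by assumption), and invoke Keimel's lemma for the facts that $M^U$ is homogeneous and superadditive and that $F = \{x : M^U(x) \leq 1\}$. To prove subadditivity, I would fix $a, b \in C$, pick $r > M^U(a)$ and $s > M^U(b)$ with $r, s > 0$, and use homogeneity to conclude $r^{-1}\cdot a, s^{-1}\cdot b \in F$. Convexity of $F$ then yields
\[
\frac{r}{r+s}\cdot\bigl(r^{-1}\cdot a\bigr) + \frac{s}{r+s}\cdot\bigl(s^{-1}\cdot b\bigr) \;=\; (r+s)^{-1}\cdot(a+b) \;\in\; F,
\]
so $M^U(a+b) \leq r+s$ by homogeneity; letting $r \downarrow M^U(a)$ and $s \downarrow M^U(b)$ gives subadditivity. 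The boundary cases where $M^U(a)$ or $M^U(b)$ equals $0$ or $+\infty$ are handled separately in the obvious way.

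Conversely, given a lower semicontinuous linear $f\colon C \to \real$, Keimel's lemma immediately gives that $U := f^{-1}((1,+\infty])$ is open and convex. The complement $F = f^{-1}([0,1])$ is then convex by a one-line computation: for $x, y \in F$ and $\lambda \in [0,1]$, additivity gives
\[
f\bigl(\lambda\cdot x + (1-\lambda)\cdot y\bigr) = \lambda f(x) + (1-\lambda) f(y) \leq 1.
\]
Hence $U$ is a half-space, and combining both directions yields the claimed order isomorphism as a restriction of the Keimel isomorphism.

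The one place that requires care is the forward direction, where $M^U$ must be manipulated without literal division: the argument rests on the scalar action $r^{-1}\cdot(-)$ of the cone, which is valid for $r > 0$, together with an explicit treatment of the boundary values of $M^U$. Everything else is a direct restriction of the preceding lemma, so I do not expect any structural difficulty.
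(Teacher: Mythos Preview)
Your proposal is correct and is precisely the kind of argument the paper has in mind: the paper gives no proof beyond the remark that it ``is similar to \cite[Proposition 7.6]{keimel08}'', and your plan---restrict Keimel's order isomorphism for convex opens/superlinear functionals and verify that the extra condition ``complement convex'' matches ``subadditive''---is exactly how one adapts that proof. The subadditivity computation via convexity of $F=C\setminus U$ and the one-line convexity check for $f^{-1}([0,1])$ are the right moves, and your handling of the boundary values of $M^{U}$ is adequate.
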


Finally, we have the following result as a corollary to Theorem~\ref{extensionoflinearmaps}. 

\begin{theorem}
Let $C$ be a topological cone. Then $C$ is locally convex (resp., locally linear, weakly locally convex) if and and only if $\kc(C)$ is a locally convex (resp., locally linear, weakly locally convex) $\kk$-cone. 
\end{theorem}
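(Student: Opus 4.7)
The plan is to handle the two directions separately, treating the weakly locally convex case by direct reference to Proposition~\ref{kcisacone} and lifting the locally convex and locally linear cases through the Minkowski functional correspondence. For the ``only if'' direction, I would observe that $\eta_{C}\colon C\to \kc(C)$ is a topological embedding and, by Proposition~\ref{kcisacone}, a linear map, so $\eta_{C}(C)$ is a subcone of $\kc(C)$ homeomorphic to $C$. Each of the three local properties is then immediately hereditary for subcones: given $x\in C$ and an open $U$ of $C$, write $U = O\cap \eta_{C}(C)$ for some open $O$ of $\kc(C)$ and intersect a convex (resp.\ half-space, convex) neighbourhood of $\eta_{C}(x)$ inside $O$ with $\eta_{C}(C)$.

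For the ``if'' direction, the weakly locally convex case is precisely Proposition~\ref{kcisacone}. For the locally convex case, the strategy is to push the convex-open basis of $C$ through the open-set order-isomorphism of Remark~\ref{isobetweenxandkx}, using Minkowski functionals as intermediaries. In detail, each convex open $V\subseteq C$ is $V = f^{-1}((1,+\infty])$ for the lower semicontinuous superlinear functional $f = M^{V}$. I would extend $f$ to the unique continuous $\overline f\colon \kc(C)\to \real$ with $\overline f\circ \eta_{C}=f$, which is still superlinear by Theorem~\ref{extensionoflinearmaps}. Since $\kc(C)$ is already a weakly locally convex $\kk$-cone by Proposition~\ref{kcisacone}, the Minkowski correspondence applies there and tells us that $\overline V := \overline f^{-1}((1,+\infty])$ is a convex open subset of $\kc(C)$. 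The relation $\overline f\circ \eta_{C}=f$ gives $\overline V\cap \eta_{C}(C) = \eta_{C}(V)$, so by Remark~\ref{isobetweenxandkx} the open $\overline V$ is exactly the image of $V$ under the lattice isomorphism between open sets of $C$ and open sets of $\kc(C)$. Hence a basis of convex opens of $C$ transports to a basis of convex opens of $\kc(C)$, proving local convexity.

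The locally linear case runs along the same lines, with ``convex open'' replaced by ``open half-space,'' ``superlinear'' by ``linear,'' and the Minkowski correspondence invoked as Lemma~\ref{inlieu}; since an order isomorphism between open-set lattices preserves finite intersections and arbitrary unions, a subbasis of open half-spaces of $C$ transports to a subbasis of open half-spaces of $\kc(C)$. The main obstacle I anticipate is the passage from the linear/superlinear \emph{extension} given by Theorem~\ref{extensionoflinearmaps} back to the \emph{geometric} picture of convex opens or half-spaces in $\kc(C)$: the key is that the upper Minkowski functional translates the geometric data into functional-analytic data that the $\kk$-completion knows how to extend, and the uniqueness clause of Remark~\ref{isobetweenxandkx} then guarantees that the extended opens are precisely the ones delivered by the open-set isomorphism, so no extra opens are introduced or lost.
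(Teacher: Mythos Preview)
Your argument is correct in substance, but you have the two implications labelled the wrong way round: in ``$C$ is locally convex iff $\kc(C)$ is locally convex'', the ``only if'' direction is the hard one ($C$ locally convex $\Rightarrow$ $\kc(C)$ locally convex), and the ``if'' direction is the easy hereditary one. Your paragraphs prove exactly what they should, just under swapped headings.

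Beyond that, your approach is essentially the paper's. Both routes extend the upper Minkowski functional of a convex open (resp.\ open half-space) via Theorem~\ref{extensionoflinearmaps} and read off convexity (resp.\ being a half-space) of the resulting open in $\kc(C)$ from the functional's superlinearity (resp.\ linearity). The only stylistic difference is that the paper argues pointwise---given $x\in U\subseteq \kc(C)$, it uses Corollary~\ref{alahcekmann} to find $a\in C$ with $\eta_C(a)\le x$ and $a\in\eta_C^{-1}(U)$, picks a convex open $W\ni a$ in $C$, and checks $x\in \overline{M^{W}}^{-1}((1,+\infty])\subseteq U$---whereas you transport the whole basis (or subbasis) globally through the frame isomorphism of Remark~\ref{isobetweenxandkx}. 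Your packaging is arguably cleaner, since the frame isomorphism automatically takes bases to bases and subbases to subbases; the paper's pointwise version makes the role of Corollary~\ref{alahcekmann} explicit but otherwise reaches the same conclusion by the same mechanism.
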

\begin{proof}
The ``if'' direction comes from the fact that $\eta_{C}$ is linear by Proposition~\ref{kcisacone}. We prove the ``only if'' direction for the case when $C$ is locally convex. To this end, we assume that $U$ is an open subset of $\kc (C)$, and $x\in U$. Then $\eta_{C}^{-1}(U)$ is an open subset of $C$. By Corollary~\ref{alahcekmann} we have some point $a\in \eta_{C}^{-1}(U)$ with $\eta_{C}(a)\leq x$. Since $C$ is locally convex, there exists an open convex subset $W$ of $C$ such that $a\in W\subseteq \eta^{-1}(U)$. Hence the upper Minkowski functional $M^{W}$ of $W$ is a lower semicontinuous  superlinear map from $C$ to $\real$. By Theorem~\ref{extensionoflinearmaps}, it has an extension $\overline{ M^{W} }\colon \kc(C)\to \real$ which is also lower semicontinuous and superlinear. 

Since $\eta_{C}$ is linear by Proposition~\ref{kcisacone}, $x\in r \eta_{C}^{-1}(U)$ if and only if $\eta_{C}(x)\in rU$ for $x\in C$ and $r\in \mathbb R_{+}$. This implies that $M^{U} \circ\eta_{C}= M^{\eta_{C}^{-1}(U)}$, hence by uniqueness of extension, 
 $\overline{M^{\eta_{C}^{-1}(U)}}=M^{U}$.
 It is then straightforward to verify that $x\in  \overline{ M^{W} }^{-1}((1, +\infty])\subseteq U$. We conclude by noticing that $ \overline{ M^{W} }^{-1}((1, +\infty])$ is an open convex subset of $\kc(C)$. 

The case that $C$ is locally linear can be proved similarly by invoking Lemma~\ref{inlieu}, and the case that $C$ is weakly locally convex was proved in Proposition~\ref{kcisacone}. 
\end{proof}

\begin{remark}
In contrast with Remark~\ref{isobetweenxandkx}, we know from the proof of the above theorem that  for each convex open subset $W$ of a weakly locally convex topological cone~$C$, there exists one and only open convex open set~$V$ of $\kc(C)$ such that $\eta_{C}(W)= V\cap C$, and $\overline{M^{W}} = M^{V}$.
\end{remark}

\section{Topping the Cantor tree}
Focusing on the $\K$-category $\D$ of monotone convergence spaces and continuous functions, we note that the monad $\mathcal V_{\D}$ gives rise to a natural monad over $\mathbf{DCPO}$. 
\[
\begin{tikzcd}
\mathbf{DCPO}
\arrow[rr, "\Sigma"{name=F}, bend left=10] 
&&
\D \ar[rr, "F" {name=H}, bend left = 10] 
\arrow[loop, "\mathcal V_\D"', distance=2.5em, start anchor={[xshift=1ex]north}, end anchor={[xshift=-1ex]north}]{}{}
\arrow[ll, "\Omega"{name=G}, bend left=10]
\arrow[phantom, from=F, to=G, "\dashv" rotate=-90]
&& \D^{\mathcal V_{\D}} \ar[ll, "U" {name=K}, bend left = 10] 
\arrow[phantom, from=H, to=K, "\dashv" rotate=-90]
\end{tikzcd}
\]
The key to this observation is an adjunction $(\Sigma, \Omega)$ between $\mathbf{DCPO}$ and $\D$: The left adjoint $\Sigma$ sends each dcpo $L $ to $\Sigma L$, the space  $L$ equipped with the Scott topology, and the right adjoint $\Omega$ sends each monotone convergence space $X$ to the dcpo $\Omega X$. 
Since $\mathcal V_{\D}X$ is a monotone convergence space, we can restrict $\mathcal V_{\D}$ on the category $\D$, and it is again a monad over $\D$.
It is then immediate that the functor $\Omega \circ \mathcal V_{\D}\circ \Sigma$ is a monad over  {\bf DCPO}: Just note that $\mathcal V_{\D}$ can be written as $U \circ F$, where $F\dashv U$ is the canonical adjoint pair between $\D$ and the Eilenberg-Moore category $\D^{\mathcal V_{\D}}$ of $\mathcal V_{\D}$ that recovers the monad $\mathcal V_{\D}$. Then one easily verifies that $F\circ \Sigma\dashv \Omega\circ U$, and $\Omega \circ \mathcal V_{\D}\circ \Sigma =(\Omega\circ U)\circ (F\circ \Sigma)$.

Since we know that $\mathcal V_{\D} X$ is the $\D$-completion of $\vs X$, one may wonder whether $\Omega(\mathcal V_{\D} X) $ is the dcpo-completion of $\Omega(\vs X)$.   We see in the following example that the answer is no. The reason is that  the embedding of $\vs X$ into $\V X$ is not Scott-continuous in general, even when $X$ is an algebraic lattice endowed with its Scott topology.

\begin{example}
Consider the Cantor tree $\mathcal C$ consisting of finite and infinite words over $\{0, 1\}$ and endowed with the prefix order. $\mathcal C$ is a bounded complete algebraic domain with the empty word $\epsilon$ as the least element. The Cantor set can be realised as the set $C$ of maximal elements in  $\mathcal C$, that is, the set of all infinite words. By adding a top element~$\top$ to $\mathcal C$, the resulting poset, which we denote by $\mathcal C^\top$, is an algebraic lattice.
\begin{center}
\tikzset{
  solid node/.style={circle,draw,inner sep=1.2,fill=black},
  hollow node/.style={circle,draw,inner sep=1.2},
}
\begin{tikzpicture}[rotate=180,font=\footnotesize]
\draw[thick,dashed] (-3.75,-4.3) -- (-3.25,-4.3); 
\draw[thick,dashed] (-2.25,-4.3) -- (-2.75,-4.3); 
\draw[thick,dashed] (-1.25,-4.3) -- (-1.7,-4.3); 
\draw[thick,dashed] (-.25,-4.3) -- (-.75,-4.3); 
\draw[thick,dashed] (.75,-4.3) -- (.25,-4.3); 
\draw[thick,dashed] (1.25,-4.3) -- (1.75,-4.3); 
\draw[thick,dashed] (2.25,-4.3) -- (2.75,-4.3); 
\draw[thick,dashed] (3.25,-4.3) -- (3.75,-4.3); 
  \tikzset{
    level 1/.style={level distance=12mm,sibling distance=40mm},
    level 2/.style={level distance=12mm,sibling distance=20mm},
    level 3/.style={level distance=12mm,sibling distance=10mm},
    level 4/.style={level distance=12mm,sibling distance=7mm},
  }
  \node(l0)[solid node,label=below:{${\epsilon}$}]{}
    child{node[solid node,label=right:{$^{1}$}]{}
      child{node(l1)[solid node,label=right:{$^{11}$}]{}
        child{node(R5)[solid node,label=above:{$\vdots$},label= right:{$^{111}$}]{}
        		edge from parent node[left]{}}
        child{node[solid node,label=above:{$\vdots$},label= right:{$^{110}$}]{}
        		edge from parent node[left]{}}
        edge from parent node[left]{}
      }
      child{node(l2)[solid node,label=right:{$^{10}$}]{}
        child{node[solid node,label=above:{$\vdots$},label= right:{$^{101}$}]{}
        		edge from parent node[left]{}}
        child{node[solid node,label=above:{$\vdots$},label= right:{$^{100}$}]{}
        		edge from parent node[left]{}}
        edge from parent node[left]{}
      }
      edge from parent node[left]{}
    }
    child{node[solid node,label=right:{$^{0}$}]{}
      child{node(r1)[solid node,label=right:{$^{01}$}]{}
        child{node[solid node,label=above:{$\vdots$},label= right:{$^{011}$}]{}
        		edge from parent node[left]{}}
        child{node[solid node,label=above:{$\vdots$},label= right:{$^{010}$}]{}
        		edge from parent node[left]{}}
        edge from parent node[left]{}
      }
      child{node(r2)[solid node,label=right:{$^{00}$}]{}
        child{node[solid node,label=above:{$\vdots$},label= right:{$^{001}$}]{}
        		edge from parent node[left]{}}
        child{node(L5)[solid node,label=above:{$\vdots$},label= right:{$^{000}$}]{}
        		edge from parent node[left]{}}
	edge from parent node[left]{}
      }
      edge from parent node[left]{}
    }
  ;
  \draw  (0,-5) node[solid node,label=above:{$^\top $}]{};
\end{tikzpicture}
\end{center}

 For each natural number $n$, we define the \emph{normalized counting measure at level $n$} as: $$c_{n}= {1\over 2^{n}} \Sigma_{x\in F_{n}} \delta_{x},$$ where $F_{n}$ is the set of words in $\mathcal C$ with length $2^n$. It is obvious that in $\V \mathcal C^{\top}$ the supremum $\sup_{n} c_{n}$ is not a simple valuation. In fact, it is well known that $\sup_{n}c_{n}$ is equal to $\V(e)(\mu)$, where $e$ is the topological embedding of $C$ into $\mathcal C^{\top}$ (with the Scott topology) and $\mu$ is the Haar measure on $C$, regarded as an infinite product of two-point groups (see~\cite{mislove20}). 
 
We claim that in $\vs \mathcal C^{\top}$, the supremum of $c_{n}, n=1, 2, 3,...$ is~$\delta_\top$:

First, $\delta_\top$ is obviously an upper bound of $c_{n}$. 

Since $\da \delta_\top$ is Scott closed in $\vs \mathcal C^\top$, for any subset of $\da \delta_\top$, there is no difference between computing its supremum in $\da \delta_\top$ and in $\vs \mathcal C^\top$.

Let $\Sigma_{x_{i}\in F} r_i \delta_{x_i}$, $F$ finite, be a simple valuation. We show that if $x_i \not= \top$ for some index $i$, or if $\sum_i r_i < 1$, then $\Sigma_{x_{i}\in F} r_i \delta_{x_i}$ is not an upper bound for some $c_n$. Indeed, if $\sum_i r_i < 1$, then $\Sigma_{x_{i}\in F} r_i \delta_{x_i}({\mathcal C}^\top) < 1 = c_n({\mathcal C}^\top)$, so $\Sigma_{x_{i}\in F} r_i \delta_{x_i}$ is not above any of the $c_n$s. 

On the other hand, assume $\sum_i r_i =1$, but $x_i < \top$ for some $i$. Then $U = {\mathcal C}^\top \setminus \da x_i$ is an open set, and $\Sigma_{x_{i}\in F} r_i \delta_{x_i}(U) \leq 1 - r_i$. Then we can choose an $n$ large enough that $c_n(U) > 1 - r_i$: 

Indeed, if $x_i\not\in C$, then $x_i \in F_m$ for some $m$, and choosing any $n > m$ implies $c_n(U) = 1$. Hence $c_n \not\leq \Sigma_{x_{i}\in F} r_i \delta_{x_i}$.

If $x_i\in C$, then we can choose an $n > 0$ large enough such that $b$ out of $2^{n}$ elements of $F_{n}$ are not below $x_{i}$ and $b/2^{n}$ is strictly greater than $1-r_{i}$. Hence $ \Sigma_{x_{i}\in F} r_i \delta_{x_i}(U) \leq 1-r_{i}<b/2^{n}\leq c_{n}(U)$, this again shows that $c_n\not\leq \Sigma_{x_{i}\in F} r_i \delta_{x_i}$.

As a result, for $\Sigma_{x_{i}\in F} r_i \delta_{x_i}$ to be an upper bound for the $c_n$s, we must have $x_i = \top$ for all $i$, and $\sum_i r_i = 1$; i.e., $\Sigma_{x_{i}\in F} r_i \delta_{x_i} = \delta_\top$.
\end{example}

\begin{remark}
The aforementioned example also shows that we cannot define $\vs$ as a functor from the category of dcpos and Scott-continuous functions to that of posets and Scott-continuous functions. 

Consider the Scott-continuous map $f\colon \mathcal C^\top \to \mathbb S$ that sends $\mathcal C$ to $0$ and $\top$ to $1$, where $\mathbb S$ is the Sierpi\'nski space. Then $\vs (f)\colon \vs(\mathcal C^\top)\to \vs (\mathcal S)$, defined as $\vs(f)(s)(U)=s(f^{-1}U)$ for $s\in \vs(\mathcal C^\top)$ and $U$ open in $\mathbb S$, is not Scott-continuous. As discussed above we know that  $\sup_{n\in \mathbb N} c_n = \delta_\top$ in $\vs \mathcal C^{\top}$. However, with straightforward computation we have that $\sup_{n\in \mathbb N} \vs(f)(c_n)(\oneset{1}) = 0$ and $\vs(f)(\delta_\top)(\oneset{1})= 1$.
\end{remark}

\section*{Acknowledgement} The authors thank Andre Kornell and Bert Lindenhovius for several helpful conversations. We also gratefully acknowledge the support of the US AFOSR under MURI award FA9550-16-1-0082.

\newcommand{\etalchar}[1]{$^{#1}$}


\begin{thebibliography}{WXXZ20}

\bibitem[AJ94]{abramsky94}
S.~Abramsky and A.~Jung.
\newblock Domain theory.
\newblock In S.~Abramsky, D.~M. Gabbay, and T.~S.~E. Maibaum, editors, {\em
  Semantic Structures}, volume~3 of {\em Handbook of Logic in Computer
  Science}, pages 1--168. Clarendon Press, 1994.

\bibitem[AMJK04]{alvarez04}
M.~Alvarez-Manilla, A.~Jung, and K.~Keimel.
\newblock The probabilistic powerdomain for stably compact spaces.
\newblock {\em Theoretical Computer Science}, 328:221--244, 2004.

\bibitem[Bec69]{beck69}
J.~Beck.
\newblock Distributive laws.
\newblock In B.~Eckmann, editor, {\em Seminar on Triples and Categorical
  Homology Theory}, volume~80 of {\em Lecture Notes on in Mathematics}, pages
  119 -- 140. Springer-{V}erlag, 1969.

\bibitem[Bor94]{borceux2}
F.~Borceux.
\newblock {\em Handbook of Categorical Algebra}, volume~2.
\newblock Cambridge University Press, 1994.

\bibitem[CEK06]{cohen06}
B.~Cohen, M.~Escard\'o, and K.~Keimel.
\newblock {T}he extended probabilistic powerdomain monad over stably compact
  spaces.
\newblock {\em Theory and Applications of Models of Computation}, pages
  566--575, 2006.

\bibitem[GHK{\etalchar{+}}03]{gierz03}
G.~Gierz, K.~H. Hofmann, K.~Keimel, J.~D. Lawson, M.~Mislove, and D.~S. Scott.
\newblock {\em Continuous Lattices and Domains}, volume~93 of {\em Encyclopedia
  of Mathematics and its Applications}.
\newblock Cambridge University Press, 2003.

\bibitem[GL13]{goubault13a}
J.~Goubault-Larrecq.
\newblock {\em Non-Hausdorff Topology and Domain Theory}, volume~22 of {\em New
  Mathematical Monographs}.
\newblock Cambridge University Press, 2013.

\bibitem[GL15]{goubault15}
J.~Goubault-Larrecq.
\newblock A short proof of the {S}chr\"oder-{S}impson theorem.
\newblock {\em Mathematical Structures in Computer Science}, 25:1--5, 2015.

\bibitem[GLJ19]{goubault19}
J.~Goubault-Larrecq and X.~Jia.
\newblock Algebras of the extended probabilistic powerdomain monad.
\newblock {\em Electronic Notes in Theoretical Computer Science}, 345:37 -- 61,
  2019.

\bibitem[Hec96]{heckmann96}
R.~Heckmann.
\newblock Spaces of valuations.
\newblock In S.~Andima, R.~C. Flagg, G.~Itzkowitz, P.~Misra, Y.~Kong, and
  R.~Kopperman, editors, {\em Papers on General Topology and Applications:
  Eleventh Summer Conference at the University of Southern Maine}, volume 806
  of {\em Annals of the New York Academy of Sciences}, pages 174--200, 1996.

\bibitem[Jon90]{jones90}
C.~Jones.
\newblock {\em Probabilistic Non-Determinism}.
\newblock PhD thesis, University of Edinburgh, Edinburgh, 1990.
\newblock Also published as Technical Report No.~CST-63-90.

\bibitem[JP89]{jones89}
C.~Jones and G.~Plotkin.
\newblock A probabilistic powerdomain of evaluations.
\newblock In {\em Proceedings of the 4th Annual Symposium on Logic in Computer
  Science}, pages 186--195. IEEE Computer Society Press, 1989.

\bibitem[Kei08]{keimel08}
K.~Keimel.
\newblock Topological cones: Functional analysis in a ${T}_0$-setting.
\newblock {\em Semigroup Forum}, 77:109--142, 2008.

\bibitem[Kei12]{keimel12}
K.~Keimel.
\newblock Locally convex cones and the {S}chr\"oder-{S}impson theorem.
\newblock {\em Questiones Mathematicae}, 35:353--390, 2012.

\bibitem[Kir93]{kirch93}
O.~Kirch.
\newblock Bereiche und {Bewertungen}.
\newblock Master's thesis, Technische Hochschule Darmstadt, June 1993.
\newblock 77pp.

\bibitem[KL09]{keimel09a}
K.~Keimel and J.~D. Lawson.
\newblock D-completions and the $d$-topology.
\newblock {\em Annals of Pure and Applied Logic}, 159:292--306, 2009.

\bibitem[Law04]{lawson04}
J.~D. Lawson.
\newblock Domains, integration, and ``positive analysis''.
\newblock {\em Mathematical Structures in Computer Science}, (14):815--832,
  2004.

\bibitem[{Mac}98]{maclane98}
S.~{Mac Lane}.
\newblock {\em Categories for the Working Mathematician}, volume~5 of {\em
  Graduate Texts in Mathematics}.
\newblock Springer Verlag, 2 edition, 1998.

\bibitem[Man76]{manes76}
E.~Manes.
\newblock {\em Algebraic Theories}, volume~26 of {\em Graduate Texts in
  Mathematics}.
\newblock Springer-{V}erlag, 1976.

\bibitem[Mis20]{mislove20}
M.~Mislove.
\newblock Domains and stochastic processes.
\newblock {\em Theoretical Computer Science}, 807:284--297, 2020.

\bibitem[SS05]{schroder05}
M.~Schr{\"o}der and A.~Simpson.
\newblock Probabilistic observatoins and valuations.
\newblock In {\em Talk at Mathematical Foundations of Programming Semantic
  XXI}. Birmingham, 2005.

\bibitem[Tix95]{tix95}
R.~Tix.
\newblock Stetige {B}ewertungen auf topologischen {R}{\"a}umen.
\newblock Master's thesis, Technische Hochschule Darmstadt, June 1995.
\newblock 51pp.

\bibitem[Tix99]{tix99}
R.~Tix.
\newblock {\em Continuous {D}-Cones: Convexity and Powerdomain Constructions}.
\newblock PhD thesis, Technische Universit{\"a}t Darmstadt, 1999.

\bibitem[WXXZ20]{wu20}
G.~Wu, X.~Xi, X.~Xu, and D.~Zhao.
\newblock Existence of well-filterifications of $T_0$ topological spaces.
\newblock {\em Topology and its Applications}, 270(1), 2020.

\bibitem[ZF10]{zhao10}
D.~Zhao and T.~Fan.
\newblock Dcpo-completion of posets.
\newblock {\em Theoretical Computer Science}, 411:2167--2173, 2010.

\end{thebibliography}
\end{document}